\DeclareSymbolFont{bbold}{U}{bbold}{m}{n}
\DeclareSymbolFontAlphabet{\mathbbold}{bbold}
\newcommand{\bzero}{\mathbbold{0}}
\newcommand{\bone}{\mathbbold{1}}
\definecolor{xdxdff}{rgb}{0.5,0.5,1}
\definecolor{ffqqqq}{rgb}{1.,0.,0.}
\definecolor{qqqqff}{rgb}{0.,0.,1.}
\definecolor{ccqqqq}{rgb}{0.8,0.,0.}
\definecolor{yqqqqq}{rgb}{0.5,0.,0.}
\definecolor{yqqqyq}{rgb}{0.5,0.,0.5}
\definecolor{dcrutc}{rgb}{0.86,0.1,0.23}
\definecolor{ubqqys}{rgb}{0.3,0.,0.5}
\definecolor{xfqqff}{rgb}{0.5,0.,1.}
\DeclareMathOperator*{\argmin}{\arg\!\min}
\newcommand{\Rat}{\mathbb{Q}}
\newcommand{\Real}{\mathbb{R}}
\newcommand{\name}{multi-mode system\xspace}
\newcommand{\names}{multi-mode systems\xspace}
\newcommand{\Names}{Multi-Mode Systems\xspace}
\newcommand{\cstime}[1]{\Delta t_{#1}}
\newcommand{\cscost}[1]{\Delta \pi_{#1}}
\newcommand{\tmax}{t_{\text{max}}}
\newcommand{\vmax}{V_\text{max}}
\newcommand{\vmin}{V_\text{min}}
\newcommand{\vinit}{V_\text{0}}
\newcommand{\vend}{V_\text{end}}
\newcommand{\vdiff}{(\vmax-\vmin)}
\newcommand{\nfrac}[2]{#1/#2}
\newcommand{\modesup}{M^+}
\newcommand{\modesdown}{M^-}
\newcommand{\modeszero}{M^0}
\newcommand{\allmodes}{M}
\newcommand{\shot}{leap\xspace} %
\newcommand{\shots}{leaps\xspace} %
\newcommand{\automata}{\ensuremath{\mathcal{A}}\xspace}
\DeclareMathOperator{\run}{run}
\DeclareMathOperator{\Resize}{resize}
\DeclareMathOperator{\maxresize}{resize-domain}
\newcommand{\flatm}{zero-mode\xspace}
\newcommand{\flatms}{zero-modes\xspace}
\newcommand{\size}{|\automata|}
\newcommand{\calO}{\mathcal{O}}
\newcommand{\splitatcommas}[1]{%
  \begingroup
  \begingroup\lccode`~=`, \lowercase{\endgroup
    \edef~{\mathchar\the\mathcode`, \penalty0 \noexpand\hspace{0pt plus 1em}}%
  }\mathcode`,="8000 #1%
  \endgroup
}
\newcommand{\seq}[1]{\splitatcommas{\langle #1 \rangle}}
\newcommand{\PUD}{partial-up+down\xspace}
\newcommand{\PUDU}{partial-up+down+up\xspace}
\newcommand{\U}{up\xspace}
\newcommand{\D}{down\xspace}
\newcommand{\UPDD}{up+partial-down+down\xspace}
\newcommand{\PU}{partial-up\xspace}
\newcommand{\UPD}{up+partial-down\xspace}
\newcommand{\PUUD}{partial-up+up+down\xspace}
\newcommand{\FUD}{flat+up+down\xspace}
\newcommand{\FD}{flat+down\xspace}
\newcommand{\UD}{up+down\xspace}
\newcommand{\PDUD}{partial-down+up+down\xspace}
\newcommand{\PUU}{partial-up+up\xspace}
\newcommand{\UPDDU}{up+partial-down+down+up\xspace}
\newcommand{\PDD}{partial-down+down\xspace}
\newcommand{\Wedge}{wedge\xspace}
\newcommand{\patnum}{44\xspace}
\title{Optimal Control for \Names\\ with Discrete Costs}
\author{Mahmoud A. A. Mousa, Sven Schewe, and Dominik Wojtczak}
\institute{
University of Liverpool, Liverpool, U.K.
}
\begin{document}
\maketitle
\pagestyle{plain}
\begin{abstract}
This paper studies optimal time-bounded control in multi-mode systems with discrete costs. Multi-mode systems are an important subclass of linear hybrid systems, in which there are no guards on transitions and all invariants are global. Each state has a continuous cost attached to it, which is linear in the sojourn time, while a discrete cost is attached to each transition taken. We show that an optimal control for this model can be computed in {\sc NExpTime} and approximated in {\sc PSpace}. We also show that the one-dimensional case is simpler: although the problem is {\sc NP}-complete (and in {\sc LogSpace} for an infinite time horizon), we develop an {\sc FPTAS} for finding an approximate solution. 
\end{abstract}

\section{Introduction}
Multi-mode systems \cite{ATW12b} are an important subclass of linear hybrid systems \cite{alur_algorithmic_1995}, which consist of multiple continuous variables and global invariants for the values that each variable is allowed to take during a run of the system.
However, unlike for the full linear hybrid systems model, \names have no guards on transitions and no local invariants.
In this paper, we study multi-mode systems with discrete costs, which
extend linear hybrid systems by adding both continuous and discrete costs to states. 
Every time a transition is taken (i.e.~when the current state changes),
the discrete cost assigned to the target state is incurred.   
The continuous cost is the sum of the products of the sojourn time in each state and the cost assigned to this state.
Our aim is to minimise the total cost over a finite-time horizon 
or a long-time average cost over an infinite time horizon.
We exemplify this by applying this model to the optimal control of heating, ventilation, and air-conditioning (HVAC) systems.
HVAC systems account for about $50\%$ of the total energy cost in buildings \cite{perez-lombard_review_2008}, so a lot of energy can be saved by optimising their control.
Many simulation programs have been developed to analyse the influence of control on the performance of HVAC system components such as TRNSYS \cite{_trnsys_????}, EnergyPlus \cite{_energyplus_????}, and the Matlab's IBPT \cite{_ibpt.org_????}. 
Our approach has the advantage over the existing control theory techniques that it provides approximation guarantees. 
Although the actual dynamics of a HVAC system is governed by linear differential equations,
one can argue \cite{nghiem_green_2011,nghiem_event-based_2013,MSW16}
that constant rate dynamic, as in our model, can approximate well such a behaviour.

The simplest subclass of our model is \names with a single dimension.
It naturally occurs when controlling the temperature in a single room or building to stay in a pleasant range.
For this, the system can be in different modes, e.g.\ the air-conditioning can be switched on or off, 
or one can choose to switch on an electrical radiator or a gas burner. Each such a configuration can be modelled as mode of our \name.
Modes have start-up cost (gas burners, e.g.\ may suffer some wear and tear when switched on) as well as continuous costs.

When keeping an office building in a pleasant temperature range during opening hours, we face a control problem for \names with a finite time horizon.
We show that finding an optimal schedule in such a case is NP-complete and significantly more challenging than for the infinite time horizon (LogSPACE). However, we devise an FPTAS for the finite time horizon problem.

Heating multiple rooms simultaneously can be naturally modelled by \names (with multiple dimensions). 
In such a scenario, we might have different pleasant temperature ranges in different rooms
and the temperatures of the individual rooms may influence each other.
Naturally, controlling a multi-dimensional \names is more complex than controlling a one-dimensional \name.
We develop a nondeterministic exponential time algorithm for the construction of optimal control, whose complexity is only driven by potentially required high precision in exponentially many mode switches.
Allowing for an $\varepsilon$-deviation from the ranges of pleasant temperatures reduces the complexity to {\sc PSpace}.

{\noindent \bf Related work. }
Our model can be viewed as a weighted extension of the linear hybrid automata model (\cite{springerlink:10.1007/3-540-57318-6_30,Hen96}), but with global constraints only. 
Even basic questions for the general linear hybrid automata model are undecidable already for three variables 
and not known to be decidable for two variables \cite{asarin_low_2012}.
Most of the research for this model has focused on qualitative objectives such as reachability.
Various subclasses of hybrid systems with a decidable reachability problem were considered, see e.g. \cite{asarin_low_2012} for an overview. 
In particular, reachability in linear hybrid systems, where the derivative of each variable in each state is constant, can be shown to be decidable for one continuous variable by using the techniques from
\cite{laroussinie_model_2004}.   
In \cite{alur_theory_1994-1}, it has been shown that reachability is decidable for timed automata, which are a particular subclass of hybrid automata where the slope of all variables is equal to 1.

In \cite{MSW16} we only studied the one-dimensional case of our model 
with the simplifying assumption that there is exactly one mode that can
bring the temperature down and it is cost-free.
In this paper, we drop this assumption and generalise the model to multiple dimensions.
In the one-dimensional setting, we manage to prove similar nice algorithmic properties as in \cite{MSW16},
i.e.\ the existence of finitely many patterns for optimal schedules, polynomial constant-factor approximation algorithm and an FPTAS.
However, as opposed to the existence of a unique pattern for an optimal schedule in \cite{MSW16}, we show that that there can be \patnum different patterns when the simplifying assumption is dropped.
To show this, we need to devise five safety-preserving and cost-non-increasing operations on schedules,
while in \cite{MSW16} it sufficed for each mode to just lump together all timed actions that use this mode.
Also, our constant-factor approximation algorithm requires a careful analysis of the interplay between different sections of the normal form for schedules, which results in an $\calO(n^7)$ algorithm, while in \cite{MSW16} it sufficed to use one mode all the time and the algorithm ran in linear time.

Multi-mode systems were studied in \cite{ATW12b}, but with no discrete costs and with infinite time horizons only.
They were later extended in \cite{DBLP:conf/hybrid/AlurFMT13} to a setting where the rate of change of each variable in a mode belonging to an interval instead of being constant.
\cite{wojtczak_optimal_2013} studied a hybrid automaton model where the dynamics are governed by linear differential equations, but again without switching costs and only with an infinite time horizon. 
Both of these papers show that, for any number of variables, a schedule with the optimal long-time average cost can be computed in polynomial time.
In \cite{nghiem_green_2011,nghiem_event-based_2013}, the same models without switching costs have been studied over the infinite time horizon, with the objective of minimising the peak cost, rather than the long-time average cost.
In \cite{bouyer_average-price_2008}, 
long-time average and total cost games have been shown to be decidable for hybrid automata with strong resets, 
in which all variables are reset to $0$ after each discrete transition.
The long-time average and total cost optimisation for the weighted timed automata model have been shown to be {\sc PSpace}-complete (see e.g.\ \cite{bouyer_weighted_2006-1} for an overview).

There are many practical approaches to the reduction of energy consumption and peak demand in buildings.
One particularly popular one is model predictive control (MPC) \cite{camacho2013model}. 
In \cite{OUPAM10}, stochastic MPC was used to minimise the energy consumption in a building.
In \cite{li_optimal_2010}, On-Off optimal control was considered for air conditioning and refrigeration.
The drawback of using MPC is its high computational complexity and the fact that it cannot provide any worst-case guarantees.
UPPAAL Stratego \cite{david_uppaal_2015} supports the analysis of the expected cost in linear hybrid systems, 
but uses a stochastic semantics of these models \cite{david_time_2011,david_optimizing_2013}.
I.e.~a control strategy induces a stochastic model where 
the time delay in each state is uniformly or exponentially distributed.
This is different to the standard nondeterministic interpretation of the model,
which we use in this paper.
In \cite{larsen2016online}, an on-line controller synthesis combined with 
machine learning and compositional synthesis techniques was applied 
for optimal control of a floor heating system.

\smallskip
{\noindent \bf Structure of the paper.} The paper is organised as follows. We introduce all necessary notation and formally define the model in Section~\ref{sec:background}. In Section~\ref{sec:limit-safe}, 
we study the computational complexity of limit-safe and $\epsilon$-safe control in multiple dimensions.
In Section~\ref{sec:schedule-structure}, we show that in one dimension every schedule can be transformed without increasing its cost 
into a schedule following one of 44 different patterns.
In Section~\ref{sec:optimal-one-dimension}, we show that the cost optimisation decision problem in one-dimension with infinite and finite horizon is 
{\sc LogSpace} and {\sc NP}-complete, respectively. In Section~\ref{sec:approx}, still for the one-dimension case, we first show a constant factor approximation algorithm and, building on it, develop an FPTAS by a reduction to the 0-1 knapsack problem. 
To ease the exposition, some of the proofs and algorithms were moved to the appendix.

\section{Preliminaries} 
\label{sec:background}

Let $\bzero_N$ and $\bone_N$ be $N$-dimensional vectors with all entries equal to $0$ and $1$, respectively. By $\mathbb{R}_{\geq 0}$ and $\mathbb{Q}_{\geq 0}$ we denote the sets of all non-negative real and rational numbers, respectively. We assume that $0 \cdot \infty = \infty \cdot 0 = 0$.
For a vector $v$, let $\|v\|$ be its $\infty$-norm (i.e.\ the maximum coordinate in $v$).
We write $v_1 \leq v_2$ if every coordinate vector of vector $v_1$ is smaller than or equal to the corresponding coordinate in vector $v_2$,
and $v_1 < v_2$ if, additionally, $v_1 \neq v_2$ holds.

\subsection{Formal Definition of \Names}

Motivated by our application of keeping temperature in multiple rooms within comfortable range, 
we restrict ourselves to safe sets being hyperrectangles, which can be specified by 
giving its two extreme corner points. 
A {\em \name with discrete costs}, \automata, henceforth referred to simply as {\em \name},
is formally defined as a tuple
$\automata = (M,N,A,\pi_{c},\pi_{d},\vmin,\vmax,\vinit)$ %
where:  
\begin{itemize}
\item $M$ is a finite set of modes;
\item $N \geq 1$ is the number of continuous variables in the system;
\item $A : M \to \Rat^N$ is the slope of all the variables in a given mode;
\item $\pi_{c} : M \to \Rat_{\geq 0}$ is the cost per time unit spent in a given mode;
\item $\pi_{d} : M \to \Rat_{\geq 0}$ is the cost of switching to a given mode;
\item $\vmin, \vmax \in \Rat^N$: $\vmin < \vmax$, define the safe set, $S$,
as follows $\{x \in \Real^N : \vmin \leq x \leq \vmax\}$;
\item $\vinit \in \Rat^N$, such that $\vinit\in S$, defines the initial value of all the variables.
\end{itemize}

\subsection{Schedules, their cost and safety}

A {\em timed action} is a pair $(m,t) \in M \times \mathbb{R}_{\geq 0}$ of a mode $m$ and time delay $t > 0$. 
A {\emph{schedule}} $\sigma$ (of length $k$) with time horizon $\tmax$ is a finite sequence of timed actions $\sigma=\seq{(m_1,t_1), (m_2,t_2),\ldots,(m_k,t_k)}$, such that $\sum_{i=1}^{k} t_{i} = \tmax$.
A {\emph{schedule}} $\sigma$ with infinite time horizon is either an infinite sequence of timed actions $\sigma=\seq{ (m_1,t_1), (m_2,t_2),\ldots,(m_k,t_k), \ldots}$, such that $\sum_{i=1}^{\infty} t_{i} = \infty$
or a finite sequence of timed actions $\sigma=\seq{(m_1,t_1), (m_2,t_2),\ldots,(m_k,t_k)} $, such that $t_k = \infty$.
The \textit{run} of a finite schedule $\sigma = \seq{ (m_1,t_1), (m_2,t_2),\ldots,(m_k,t_k)} $ is a sequence of {\em states} $\run(\sigma)= \seq{V_0,V_1,...,V_k}$ such that, %
for all $0 \leq i \leq k-1$, we have that $V_{i+1} = V_i+t_i A(m_i)$.

A schedule and its run are called \textit{safe} if $\vmin \leq V_{i} \leq \vmax$ holds for all $1 \leq i \leq k$. 
A schedule and its run are called \textit{$\epsilon$-safe} if $\vmin-\epsilon\cdot \bone_N < V_{i} < \vmax + \epsilon \cdot \bone_N$ holds for all $1 \leq i \leq k$.
The run of an infinite schedule and its safety and $\epsilon$-safety are defined accordingly.

The {\em total cost} of a schedule $\sigma=\seq{(m_1,t_1), (m_2,t_2),\ldots,(m_k,t_k)}$ with a finite time horizon is defined as $\pi(\sigma) = \sum_{i=1}^{k} \pi_{d}(m_{i}) + \pi_{c}(m_{i})t_{i}$. 
The {\em limit-average cost} for a finite schedule $\sigma=\seq{(m_1,t_1), (m_2,t_2),\ldots,(m_k,t_k)}$ with an infinite time horizon is defined as
$\pi_{avg}(\sigma) = \pi_c(m_{k})$ and for an infinite schedule $\sigma=\seq{(m_1,t_1), (m_2,t_2),\ldots}$ 
it is defined as $$\pi_{avg}(\sigma) = \limsup_{k\to\infty}\nfrac{\left(\sum_{i=1}^{k} \pi_{d}(m_{i}) + \pi_{c}(m_{i})t_{i}\right)\Big}{\sum_{i=1}^{k} t_i}$$ 

A safe finite schedule $\sigma$ is {\em $\epsilon$-optimal} 
if, for all safe finite schedules $\sigma'$, we have that $\pi(\sigma') \geq \pi(\sigma)-\epsilon$.
A safe finite schedule is {\em optimal} if it is $0$-optimal. A safe infinite schedule $\sigma$ is {\em optimal} if, for all safe infinite schedules $\sigma'$, we have that $\pi_{avg}(\sigma') \geq \pi_{avg}(\sigma)$.

The following example shows that there may not be an optimal schedule for a \name with a finite time horizon.
\begin{example}
\label{ex:no-optimal}
Consider a \name with three modes: $M_1, M_2, M_3$.
The slope vectors in these modes are $A(M_1) = (1,1)$, $A(M_2) = (1,-1)$ and $A(M_3) = (-1,1)$, respectively.
The continuous cost of using $M_1$ is $\pi_c(M_1) = 1$ and all the other costs are $0$.
Let $\vinit = \vmin = \bzero_2$ and $\vmax = \bone_2$.
Notice that we can only use $M_2$ or $M_3$ once we get out of the initial corner $\vinit$.
This can only be done using $M_1$. 
Now let the time horizon be $\tmax$.
Note that the following schedule $\sigma_\epsilon = (M_1,\epsilon), \big((M_2, t), (M_3, t)\big)^l$, where $t'=\tmax-\epsilon$, $l = \lceil t'/\epsilon \rceil$, and $t = t' / 2l$, has time horizon $\tmax$ and total cost $\epsilon > 0$.
As $\epsilon$ can be made arbitrarily small but has to be $>0$, $\sigma_\epsilon$ is an $\epsilon$-optimal schedule for all $\epsilon > 0$, but no optimal schedule exists.
\end{example}

\newcommand{\zerocostmodes}{M^*}
Note that in Example \ref{ex:no-optimal}, for any $\epsilon > 0$, there exists an optimal $\epsilon$-safe schedule $\sigma$ with total cost $0$:
$\sigma_0 = \seq{\big((M_2, t), (M_3, t)\big)^l}$ where $l$ is defined as in Example \ref{ex:no-optimal}. 
Our aim is to find an ``abstract schedule'' that, for any given $\epsilon>0$, can be used to construct in polynomial time an $\epsilon$-safe $\epsilon$-optimal schedule.
\newcommand{\ata}{\mathbf{t}}

Let $\zerocostmodes = \{m \in M \mid \pi_d(m) = 0\}$ be the subset of modes without discrete costs.
Note that, as shown in \cite{ATW12b}, the cost and safety of a schedule with $\zerocostmodes$ modes only, depends
only on the total amount of time spent in each of the $\zerocostmodes$ modes.
We therefore lump together any sequence of timed actions that only use $\zerocostmodes$ modes and define 
an {\em abstract timed action (over $\zerocostmodes$)} as 
a function $\ata : \zerocostmodes \to \mathbb{R}_{\geq 0}$.
A finite {\em abstract schedule} with time horizon $\tmax$ (of length $k$) is a finite sequence
 $\tau=\seq{\ata_1, (m_1,t_1), 
 \ata_2, (m_2,t_2),\ldots,(m_{k-1},t_{k-1}),
 \ata_k}$ such that $\forall_i\;m_i \in M\setminus\zerocostmodes$ and 
 $\sum_{i\leq k, m\in \zerocostmodes} \ata_i(m) + \sum_{i < k} t_i = \tmax$.
The run of the abstract schedule $\tau$ is a sequence $\seq{V_0,V_1,\ldots,V_{2k+1}}$ such that,
for all $i \leq k$, we have
$V_{2i} = V_{2i-1} + A(m_i) t_i$ and $V_{2i+1} = V_{2i} + \sum_{m \in \zerocostmodes} A(m) \ata_i(m)$.
We say that an abstract schedule is {\em limit-safe} if its run is safe.
The total cost of an abstract schedule $\tau$ is defined as 
$$\sum_{i\leq k, m\in \zerocostmodes} \pi_c(m,\ata_i(m)) + \sum_{i < k} \big(\pi_d(m_i) + \pi_c(m_i) t_i\big) \, .$$
Note that any safe schedule can be turned into a limit-safe abstract schedule with the same cost by simply replacing any maximal subsequence of consecutive timed actions that only use $\zerocostmodes$ modes by a single abstract timed action.
A limit-safe abstract schedule $\sigma$ is optimal if the total cost of all other limit-safe abstract schedules is higher than $\pi(\sigma)$. 
The following statement justifies the name ``limit-safe''.

\begin{proposition}
\label{prop:limit-safe-construct}
Given a limit-safe abstract schedule $\tau$ and $\epsilon > 0$, 
we can construct in polynomial time an $\epsilon$-safe schedule $\sigma$
such that $\pi(\tau) = \pi(\sigma)$.
\end{proposition}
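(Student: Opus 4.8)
The plan is to turn $\tau$ into a concrete schedule $\sigma$ by copying the ``expensive'' timed actions $(m_i,t_i)$ verbatim and realising each lumped abstract timed action $\ata_i$ by a finely interleaved (round-robin) sequence of its $\zerocostmodes$-modes. The guiding idea, already implicit in the remark following \cite{ATW12b}, is that the cost and the net displacement of a block built only from $\zerocostmodes$-modes depend solely on the total time spent in each mode; hence any interleaving of the same per-mode times has the same cost and reaches the same point, leaving me free to pick the interleaving that keeps the intermediate waypoints close to $S$.

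Concretely, I would argue as follows. Each $\ata_i$ is entered at some waypoint $P$ and left at $Q=P+\sum_{m\in\zerocostmodes}A(m)\,\ata_i(m)$, where $P,Q\in S$ because $\tau$ is limit-safe. Fixing an order on the modes with $\ata_i(m)>0$ and an integer $L$, I replace $\ata_i$ by $L$ identical rounds, each using every such mode $m$ for time $\ata_i(m)/L$. Since all these modes satisfy $\pi_d(m)=0$, the refinement adds no discrete cost, and it preserves both the continuous cost $\sum_m\pi_c(m)\ata_i(m)$ and the total delay; the copied actions $(m_i,t_i)$ contribute exactly $\pi_d(m_i)+\pi_c(m_i)t_i$ as before. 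This already yields $\pi(\sigma)=\pi(\tau)$ and time horizon $\tmax$, so the entire remaining content of the proposition is the safety bound.

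For safety I would exploit that $S=\{x:\vmin\le x\le\vmax\}$ is a box, hence convex: the segment $PQ$ lies in $S$, so every round boundary $P+\tfrac{r}{L}(Q-P)$, $0\le r\le L$, is safe. Within a single round the trajectory departs from its starting boundary point by a sum of sub-moves whose $\infty$-norms total at most $C/L$, where $C=\sum_{m\in\zerocostmodes}\|A(m)\|\,\ata_i(m)$, so by the triangle inequality every intermediate waypoint stays within $C/L$ of $S$ in $\infty$-norm. Choosing $L>C/\epsilon$ in each block makes all of these waypoints $\epsilon$-safe; the round-robin moreover lands exactly at $Q$, so the copied actions $(m_i,t_i)$ reach precisely the waypoints of $\run(\tau)$, all of which lie in $S$. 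The construction outputs $\calO(k\,|\zerocostmodes|\,C/\epsilon)$ timed actions and runs in time polynomial in $|\tau|$ and $1/\epsilon$.

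The step I expect to require the most care is exactly this safety bound, precisely because safety is evaluated only at the discrete waypoints of the run: it does not suffice to note that the round boundaries land on $PQ$, one must bound the partial sums of sub-moves accumulated part-way through a round. It is also worth stressing that the $\epsilon$-slack is genuinely necessary and cannot be tightened to exact safety --- in Example~\ref{ex:no-optimal} both zero-cost modes point strictly out of $S$ at the initial corner $\vinit$, so the limit-safe block there admits only $\epsilon$-safe concrete realisations and no exactly safe one.
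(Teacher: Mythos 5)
Your proposal is correct and follows essentially the same route as the paper's own proof: realise each abstract timed action as $L$ identical round-robin passes with per-mode times divided by $L$, note that the round boundaries lie on the segment $PQ\subseteq S$ by convexity, and bound the mid-round deviation by $\sum_{m}\|A(m)\|\,\ata_i(m)/L<\epsilon$ for $L$ large enough. Your added remarks on the action count and on the necessity of the $\epsilon$-slack are correct but not needed beyond what the paper states.
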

\begin{proof}
Let $M^* = \{m_1,m_2,\ldots,m_j\}$.
To obtain $\sigma$ from $\tau$, we replace each abstract timed action $\big\{\big(m,t_m) \mid m \in M^*\big\}$ by a sequence
$\big((m_1,t_{m_1}/l),\ldots,(m_j,t_{m_j}/l)\big)^l$ for a sufficiently large $l \in \mathbb N$.

Sufficiently large means that, for $t^*=\sum_{m\in M^*}t_m$, $l> t^* \cdot \max_{m \in M^*}\|A(m)\| / \epsilon$.
This choice guarantees that $\sum_{m\in M^*}\|A(m)\|\cdot t_m/l < \varepsilon$.
Thus, when the abstract action $\big\{\big(m,t_m) \mid m \in M^*\big\}$ joins two states $V_{2i},V_{2i+1}$ along the run $\langle V_0,V_1,\ldots,\ldots,V_{2k+1} \rangle$ of $\tau$, we know that this concrete schedule will cover the $l$-th part of $V_{2i},V_{2i+1}$ after every sequence $(m_1,t_{m_1}/l),(m_2,t_{m_2}/l),\ldots,(m_j,t_{m_j}/l)$.
As the safe set is convex, the start and end points of this sequence are safe points.
Also, $\sum_{m\in M^*}\|A(m)\|\cdot t_m/l < \varepsilon$ implies that the points in the middle are $\epsilon$-safe.
\qed\end{proof}

\smallskip
\noindent {\bf Example \ref{ex:no-optimal} continues.} 
{\itshape
An example limit-safe abstract schedule of length $1$ is $\tau = \{(m_1,\tmax/2),(m_2,\tmax/2)\}$.
Based on $\tau$ we can construct an $\epsilon$-safe schedule $\seq{\big((m_1,\tmax/2l),(m_2,\tmax/2l)\big)^l}$ where $l$ is any integer greater than $\tmax/\epsilon$.
}

\subsection{Structure of optimal schedules}

We show here that it later suffices to consider only schedules with a particular structure.

\begin{definition}
We call a finite schedule $\sigma$ {\em angular} if
there are no two consecutive timed actions $(m_i,t_i), (m_{i+1},t_{i+1})$ in $\sigma$ 
such that $A(m_i) = A(m_{i+1})$.
\end{definition}

We show that while looking for an ($\epsilon$-)safe ($\epsilon$-)optimal finite schedule, we can restrict our attention to angular schedules only. 

\begin{proposition}
\label{prop:angular}
For every finite ($\epsilon$-)safe schedule with time horizon $\tmax$ there exists an angular safe schedule with the same or lower cost.
\end{proposition}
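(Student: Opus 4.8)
The plan is to put any schedule into angular form by repeatedly \emph{merging} two consecutive timed actions that share a slope, always retaining the cheaper of the two modes, until no same-slope adjacency survives. First I would isolate the core step as a local rewriting rule. Suppose a finite (safe or $\epsilon$-safe) schedule $\sigma$ contains consecutive actions $(m_i,t_i),(m_{i+1},t_{i+1})$ with $A(m_i)=A(m_{i+1})$. Let $m\in\{m_i,m_{i+1}\}$ minimise $\pi_c$, and replace the two actions by the single action $(m,t_i+t_{i+1})$, obtaining $\sigma'$. I would then verify three properties of $\sigma'$: it has the same time horizon, it is no more costly, and it lies in the same safety class as $\sigma$.

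Horizon preservation is immediate, since the replaced actions have total duration $t_i+t_{i+1}$ and all other durations are untouched. For safety the key point is geometric: because $A(m_i)=A(m_{i+1})$, the two original actions trace the straight segment from $V_{i-1}$ to $V_{i+1}=V_{i-1}+A(m_i)(t_i+t_{i+1})$, and the single replacement action, whose slope $A(m)$ equals $A(m_i)$, traces exactly these same endpoints. Hence $\run(\sigma')$ is obtained from $\run(\sigma)$ by deleting only the intermediate state $V_i$ and leaving every other state unchanged; as all retained states were (resp.\ $\epsilon$-)safe in $\sigma$, the schedule $\sigma'$ is (resp.\ $\epsilon$-)safe as well.

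For cost I would compare $\pi(\sigma)$ and $\pi(\sigma')$ locally. The merged pair contributes $\pi_d(m_i)+\pi_c(m_i)t_i+\pi_d(m_{i+1})+\pi_c(m_{i+1})t_{i+1}$ to $\pi(\sigma)$, whereas the replacement contributes $\pi_d(m)+\pi_c(m)(t_i+t_{i+1})$ to $\pi(\sigma')$. Choosing $m$ with the smaller continuous cost gives $\pi_c(m)(t_i+t_{i+1})\le \pi_c(m_i)t_i+\pi_c(m_{i+1})t_{i+1}$, while exactly one of the two (non-negative) discrete costs is dropped; together these yield $\pi(\sigma')\le\pi(\sigma)$. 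This monotonicity of the cost under the rewrite is the only place where an actual computation is needed, and it is the crux on which the argument turns.

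Finally I would wrap the local step into a terminating procedure. Each merge strictly decreases the number of timed actions, so after finitely many applications no two consecutive actions share a slope, i.e.\ the schedule is angular, while the horizon is unchanged, the cost never increased, and the safety class is preserved throughout. The only mild subtlety to flag is that a merge may create a \emph{new} same-slope adjacency with a neighbour (when $A(m)$ happens to equal the slope of $m_{i-1}$ or $m_{i+2}$); this causes no trouble, since the strictly decreasing length bounds the total number of iterations regardless, guaranteeing that the process halts at an angular schedule.
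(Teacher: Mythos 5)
Your proposal is correct and matches the paper's own proof: the paper likewise merges two consecutive same-slope actions into a single action using the mode with the lower continuous cost, noting the cost drops by at least the discarded discrete cost, and iterates. Your additional remarks on safety preservation and termination are sound elaborations of the same argument.
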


Henceforth, we assume that all finite schedules are angular.
Let $\modeszero = \{m \mid A(m) = 0\}$, which we will also refer to as {\em \flatms}. 

\begin{proposition}
\label{prop:zero-mode}
For every finite safe schedule with time horizon $\tmax$ there exists a safe schedule with the same or lower cost, in which at most one \flatm is used at the very beginning.
\end{proposition}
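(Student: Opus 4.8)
The plan is to perform two independent transformations on a given finite safe schedule $\sigma$: first to \emph{consolidate} all time spent in \flatms into a single timed action, and then to \emph{relocate} that action to the very front. The property exploited throughout is that a \flatm $m$ has $A(m)=0$, so a timed action $(m,t)$ leaves the current state unchanged; consequently \flatms neither help nor hurt reachability, and their only effect is on the cost and on the time budget. Let $Z\subseteq\{1,\dots,k\}$ be the set of indices of timed actions in $\sigma$ that use a \flatm, and put $T=\sum_{i\in Z}t_i$. If $Z=\emptyset$ there is nothing to prove, so assume $Z\neq\emptyset$, whence $T>0$.

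For the cost, the subtle point is that a \flatm carries both a discrete cost $\pi_d$ and a continuous cost $\pi_c$, so neither the \flatm of least $\pi_c$ nor that of least $\pi_d$ is on its own a safe replacement. Instead I would work with the lower envelope
\[
g(s)\;=\;\min_{m\in\modeszero}\bigl(\pi_d(m)+\pi_c(m)\,s\bigr),\qquad s\ge 0.
\]
As a pointwise minimum of affine functions, $g$ is concave, and $g(0)=\min_{m}\pi_d(m)\ge 0$; a concave function that is nonnegative at $0$ is subadditive on $[0,\infty)$. Hence $g(T)=g\bigl(\sum_{i\in Z}t_i\bigr)\le\sum_{i\in Z}g(t_i)\le\sum_{i\in Z}\bigl(\pi_d(m_i)+\pi_c(m_i)\,t_i\bigr)$, where the last inequality holds because each $m_i$ with $i\in Z$ is itself a \flatm and therefore competes in the minimum defining $g(t_i)$. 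The value $g(T)$ is attained by some fixed \flatm $m^\ast$, so replacing all \flatm actions of $\sigma$ by the single action $(m^\ast,T)$ costs exactly $g(T)$ and does not increase the total cost, while the cost contributed by the non-\flatm actions is untouched.

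It remains to argue safety and to fix the position. Since every \flatm action maps a state to itself, deleting the \flatm actions of $\sigma$ yields the same sequence of states at the non-\flatm transitions (each deleted action merely repeated a state already present in $\run(\sigma)$), so all of these states remain in the safe set $S$. Prepending $(m^\ast,T)$ adds one transition $\vinit\to\vinit$ at the front; as $\vinit\in S$, this state is safe as well. The total delay is unchanged, namely it is still $\tmax$, because we only moved the $T$ units of \flatm time to the front. This produces a safe schedule of cost at most $\pi(\sigma)$ that uses a single \flatm placed at the very beginning.

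The main obstacle is precisely the cost accounting in the second paragraph: because of the interaction between the one-off discrete cost and the time-proportional continuous cost, a greedy choice of replacement \flatm can strictly increase the cost, and it is the concavity and resulting subadditivity of the lower envelope $g$ that make the single-action replacement provably cost-non-increasing. A minor additional point is that deleting \flatms from the interior may create two consecutive actions with equal slope; if an angular schedule is desired, as we assume henceforth, one reapplies Proposition~\ref{prop:angular} to merge them, which does not increase the cost and does not reintroduce any \flatm, since \flatms arise only with slope $0$ and are now confined to the single front action.
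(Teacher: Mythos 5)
Your proof is correct, and its overall shape (consolidate all \flatm time into one action, relocate it to the front, observe that \flatms fix the state so safety is preserved) matches the paper's. Where you genuinely diverge is the cost accounting. The paper sidesteps your ``subtle point'' entirely by choosing the replacement mode $m_0=\argmin_{i\le l}\pi_c(m^i_0)$ \emph{among the \flatms already used in $\sigma$}: then $\pi_c(m_0)\sum_i t^i_0\le\sum_i\pi_c(m^i_0)t^i_0$ and $\pi_d(m_0)\le\sum_i\pi_d(m^i_0)$ trivially, since $\pi_d(m_0)$ is one of the (nonnegative) summands. So your remark that the least-$\pi_c$ \flatm is not a safe replacement holds only if the minimum ranges over all of $\modeszero$ (where an unused mode could carry a huge $\pi_d$); restricted to the used modes it is perfectly safe, and no concavity is needed. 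Your lower-envelope argument with $g(s)=\min_{m\in\modeszero}(\pi_d(m)+\pi_c(m)s)$, concave with $g(0)\ge 0$ and hence subadditive, is a valid and slightly stronger alternative: it optimises over all \flatms, including ones $\sigma$ never used, and so can only produce a cheaper (never a more expensive) replacement than the paper's choice. The trade-off is that you invoke a small convexity lemma where an elementary term-by-term comparison suffices for the statement as claimed. Your closing remark about reapplying Proposition~\ref{prop:angular} afterwards is a sensible housekeeping step that the paper leaves implicit.
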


Henceforth, we assume that all finite schedules use at most one \flatm timed action and only at the very beginning.

\subsection{Approximation algorithms}

We study approximation algorithms for the total cost minimisation problem in \names.
We say that an algorithm is a {\em constant factor approximation algorithm} with a {\em relative performance $\rho$}
iff,
for all inputs $x$, the cost of the solution that it computes, $f(x)$, satisfies
$OPT(x) \leq f(x) \leq (1+\rho)\cdot OPT(x)$, 
where $OPT(x)$ is the optimal cost for the input $x$. 
We are particularly interested in polynomial-time approximation algorithms.
A polynomial-time approximation scheme (PTAS) is an algorithm that, for every $\rho > 0$,
runs in polynomial-time and has relative performance $\rho$.
Note that the running time of a PTAS may depend in an arbitrary way on $\rho$.
Therefore, we typically strive to find a fully polynomial-time approximation scheme (FPTAS), which 
is an algorithm that runs in polynomial-time 
in the size of the input and $1/\rho$.

The 0-1 Knapsack problem is a well-known NP-complete optimisations problem, 
which possess multiple FPTASes (see e.g.\ \cite{kellerer_knapsack_2004}).
In this problem we are given a knapsack with a fixed volume and a list of items, each with an integer volume and value. The aim is to pick a subset of these items that together do not exceed the volume of the knapsack and have the maximum total value. 

\section{Complexity of limit-safe and $\epsilon$-safe finite control}
\label{sec:limit-safe}

As our one-dimensional model strictly generalises the simple linear hybrid automata considered in \cite{MSW16}, 
we immediately obtain the following result. 

\begin{theorem}[follows from \cite{MSW16}, Theorem 3]
\label{thm:np-hard}
Given (one-dimensional) \name \automata, constants $\tmax$ and $C$ (both in binary),
checking whether there exists a safe schedule in \automata with time horizon $\tmax$ and total cost at most $C$ is NP-hard.
\end{theorem}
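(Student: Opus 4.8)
The plan is to obtain NP-hardness not through a fresh reduction from a canonical problem but by recognising the model of \cite{MSW16} as a syntactic special case of our one-dimensional \name, so that their Theorem~3 transfers essentially verbatim. First I would recall the restricted model of \cite{MSW16}: it is a one-dimensional hybrid system in which a single temperature variable must stay inside a fixed interval, there are several heating modes (positive slopes) each carrying a continuous running cost, and, crucially, there is exactly one cooling mode, which has negative slope and is free of charge; there are no discrete switching costs at all. The decision question studied there is precisely whether a safe schedule of a prescribed time horizon exists whose accumulated cost does not exceed a given budget.

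Next I would make the embedding explicit. Setting $N=1$, I map each mode of the \cite{MSW16} instance to a mode of \automata with the same slope $A(m)$ and the same per-time cost $\pi_c(m)$, mapping the single free cooling mode to a mode of negative slope with $\pi_c=0$. Since \cite{MSW16} has no switching costs, I set $\pi_d(m)=0$ for every mode. The safe interval becomes the pair $(\vmin,\vmax)$, the starting temperature becomes $\vinit$, and $\tmax$ and $C$ are copied over unchanged. This map is computable in logarithmic space — indeed it is essentially the identity on the numerical data — so it is a legitimate many-one reduction.

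The only real content is to verify that safety and total cost are preserved exactly, so that an instance is a yes-instance in the sense of \cite{MSW16} if and only if its image is a yes-instance of our problem. Under $\pi_d\equiv 0$ and $N=1$, our run semantics $V_{i+1}=V_i+t_iA(m_i)$, our safety condition $\vmin\leq V_i\leq\vmax$, and our total cost $\sum_i \pi_d(m_i)+\pi_c(m_i)t_i$ all specialise to the corresponding notions in \cite{MSW16}; in particular the sets of safe schedules of horizon $\tmax$, together with their costs, coincide on the two sides. Hence the two decision problems agree on the image of the reduction, and NP-hardness of the \cite{MSW16} problem yields NP-hardness of ours.

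I expect the main obstacle to be one of bookkeeping rather than of mathematics: one must line up the precise definitions of \cite{MSW16} (the cost accounting, the treatment of the boundary of the safe interval, the role of the initial value, and whether a zero-slope \modeszero mode is permitted) with those of Section~\ref{sec:background}, confirming both that no feature of the restricted model falls outside our definition and that no artefact of our more general definition (additional free cooling modes, nonzero switching costs) is needed to make the equivalence hold. Because our model is a strict superset, this verification is routine, and the hardness follows with no new combinatorial argument required.
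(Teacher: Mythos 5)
Your proposal takes exactly the approach the paper itself uses: the paper offers no proof beyond the remark that the one-dimensional model strictly generalises the model of \cite{MSW16}, so the NP-hardness of their Theorem~3 transfers via the (essentially identity) embedding you spell out. The only minor inaccuracy is your description of the \cite{MSW16} model as having no discrete switching costs at all --- the simplifying assumption there concerns only the single, cost-free cooling mode, and the other modes may carry discrete costs --- but this makes the inclusion even more immediate and does not affect the argument.
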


In the rest of this section we fix a (multi-dimensional) \name \automata and time horizon $\tmax$. 

\begin{theorem}
\label{thm:limit-safe-existence}
If a limit-safe abstract schedule exists in \automata, then
there exists one of exponential length and it can be constructed in polynomial time.
\end{theorem}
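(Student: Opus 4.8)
The plan is to read a limit-safe abstract schedule geometrically, as a polygonal path $\vinit = V_0, V_1, \ldots, V_L$ inside the box $S = \{x : \vmin \le x \le \vmax\}$, where each segment is either a single timed action $(m,t)$ contributing velocity $A(m)$, or an abstract timed action contributing a velocity in the polytope $\mathrm{conv}\{A(m) : m \in \zerocostmodes\}$, and where only the waypoints $V_i$ need to lie in $S$ (since $S$ is convex, this is exactly the limit-safety condition). Writing $\mathcal V = \mathrm{conv}\{A(m) : m \in \zerocostmodes\} \cup \{A(m) : m \in M \setminus \zerocostmodes\}$ for the compact set of available velocities, the task becomes: find safe waypoints in $S$, velocities $v_i \in \mathcal V$, and times $\tau_i > 0$ with $V_i - V_{i-1} = \tau_i v_i$ and $\sum_i \tau_i = \tmax$.

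First I would settle a dichotomy, testable by a single linear program, according to whether $\bzero \in \mathrm{conv}\{A(m) : m \in M\}$. If $\bzero$ lies in this hull, arbitrarily long safe runs exist (the criterion of \cite{ATW12b}): a convex combination of slopes equal to $\bzero$ can be realised as a short cyclic motion around a reachable point, and any target horizon $\tmax$ is then obtained by padding with this cycle. If $\bzero$ is not in the hull, a separating hyperplane yields a direction $c$ with $c \cdot v \ge \gamma > 0$ for every $v \in \mathcal V$; hence every move strictly increases $c \cdot V$, and since $c \cdot V$ ranges over a bounded interval on $S$, the total time of any limit-safe schedule is bounded. In this second regime the existence hypothesis forces $\tmax$ to lie below this ceiling, and the schedule must be built to consume the time budget exactly.

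Next I would discretise the waypoints. By a straightening argument I may assume the velocity changes only when the trajectory meets a facet of $S$ (otherwise the current segment is extended until it reaches the boundary), so each waypoint lies on some face of the hyperrectangle. Recording for each segment its active facet set (a subset of the $2N$ facets) together with the chosen region of $\mathcal V$ yields a finite set of configurations of size $2^{\calO(N)} \cdot |M|$. The combinatorial skeleton of the schedule is then a walk through these configurations, and I would argue that a witnessing walk can be chosen of exponential length: in the bounded-time regime no configuration need be repeated beyond what is needed to monotonically advance $c \cdot V$, while in the unbounded-time regime a single repeated cycle suffices and is recorded succinctly. This is where exponential length enters, through the exponentially many facet subsets of the $N$-dimensional box.

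Finally, for a fixed skeleton the times $\tau_i$ and the convex-combination velocities are the unknowns of a linear program: linear equalities encode each displacement $V_i - V_{i-1} = \tau_i v_i$ and the horizon constraint $\sum_i \tau_i = \tmax$, while linear inequalities encode $V_i \in S$ and $v_i \in \mathcal V$. Since the achievable total times over one skeleton form an interval (by convexity one may continuously slow down or speed up the abstract-action segments within their velocity ranges), the existence assumption guarantees that some skeleton's program is feasible, and a feasible point of polynomial bit-length is computable by linear programming; hence the schedule is produced in time polynomial in its (exponential) size. The main obstacle is the length bound of the third step: one must prove that hitting the horizon $\tmax$ exactly, while remaining safe, never requires more than exponentially many mode switches, which demands a careful analysis of how time-wasting trajectories can zigzag between the facets of $S$.
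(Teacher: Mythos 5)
There is a genuine gap, and you have in fact pointed at it yourself: the entire weight of the theorem is the exponential bound on the number of timed actions, and your third step does not establish it. Your skeleton normal form assumes that one may ``straighten'' the trajectory so that the velocity changes only when a facet of $S$ is reached, but this is not a valid normal form for limit-safe abstract schedules with a \emph{fixed} time horizon: extending a segment to the boundary changes both the reached state and the elapsed time, and a schedule that must burn exactly $\tmax$ units of time while confined to $S$ may be forced to switch velocities many times in the interior. Once the facet-skeleton assumption is dropped, counting configurations by facet subsets gives no bound on the number of segments, and ``monotonically advancing $c\cdot V$'' does not bound the number of steps either, since $c\cdot V$ is continuous. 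The paper obtains the exponential bound constructively: it solves a polynomial-size LP for the total dwell times $t^{(m)}$, and realises the resulting straight-line displacement as $l$ repetitions of a short cycle through the modes, with $l$ chosen (exponential, but of polynomial bit-length) so that each cycle's waypoints stay in $S$; the exponential length is the number $l$ of repetitions, not the number of facet patterns.

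A second, independent gap is your dichotomy on whether $\bzero\in\mathrm{conv}\{A(m)\}$. Membership of $\bzero$ in the hull of \emph{all} slopes does not mean the corresponding cycle is usable: a mode $m\notin\zerocostmodes$ contributes a genuine timed action whose endpoint is a waypoint of the run and must lie in $S$, so at a corner of the box there may be no ordering of the cycle that keeps every waypoint safe, and some modes may never be safely usable from any reachable state at all. This is exactly what the paper's layered construction $\zerocostmodes=M_0\subset M_1\subset\cdots\subset M_k$ handles (modes become enabled only once a state where they are safe has been reached), together with the choice of a target $\vend$ with the fewest border coordinates and maximal distance from the border, and the split of the problem into two half-horizon reachability questions meeting at $(\vinit+\vend)/2$. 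Your proposal has no counterpart to this mode-enabling analysis, so even the feasibility claim ``some skeleton's program is feasible'' does not follow from the existence hypothesis. The geometric LP-per-skeleton framing is appealing, but as written it would need both the missing length bound and the reachability-of-enabling-states machinery to become a proof.
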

\begin{proof}[sketch]
Before we formally prove this theorem, 
we need to introduce first a bit of terminology. We call a mode $m$ {\em safe for time $t > 0$} at $V \in S := \{x \in \Real^N : \vmin \leq x \leq \vmax\}$ if $V + A(m)t \in S$.
Also, $m$ is {\em safe} at $V$ if there exists $t > 0$ such that $m$ is safe for time $t$ at $V$.
We say that a {\em coordinate of a state}, $V \in S$, {\em is at the border} if that coordinate
in $V$ is equal to the corresponding coordinate in $\vmin$ or $\vmax$.

Our algorithm first removes from $M$ all modes that will never be safe to use in a limit-safe schedule.
This procedure can be found between lines 1 -- 8 of Algorithm \ref{alg:find-limit-safe} in Appendix \ref{app:limit-safe-schedule}.
This is an adaptation of 
\cite[Theorem 7]{ATW12b}
where an algorithm was given for finding safe modes that can ever be used in a schedule with no time horizon.
The main difference here is that the modes in $\zerocostmodes$ can always be used in a limit-safe abstract schedule even if they are not safe to use.
We find here a sequence of sets of modes $\zerocostmodes = M_0 \subset M_1 \subset M_2 \subset \ldots$
such $M_{i+1}$ is the set of modes that are safe at a state reachable from $\vinit$ via
a limit-safe abstract schedule that only uses modes from $M_i$. 
Note that at some step $k \leq |M|$ this sequence will stabilise, i.e.\ $M_k = M_{k+1}$.
Similarly as in the proof of \cite[Theorem 7]{ATW12b}, we can show that
no mode from $M\setminus M_k$ can ever be used by a limit-safe abstract schedule.
As a result, we can remove all these modes from $M$.

Next, we remove all modes that cannot be part of a limit-safe abstract schedule with time horizon $\tmax$. 
For this, for each $m$, we formulate a very similar linear programme (LP) as above (cf.\ lines 9 -- 11 of Algorithm \ref{alg:find-limit-safe}) where we ask for the time delay of $m$ to be positive and the total time delay of all the modes to be $\tmax$.
By a simple adaptation of the proof of \cite[Theorem 4]{ATW12b},
if this LP is not satisfiable then $m$ can be removed from \automata.

Next, we look for the easiest possible target state $\vend$ that can potentially be reached using a limit-safe abstract schedule from $\vinit$ with time horizon $\tmax$.
For this, $\vend$ has to have the least number of coordinates at the border of the safe set.
Note that this is well-defined, because if $V$ and $V'$ are two points reachable from $\vinit$ via a limit-safe abstract schedules $\tau$ and $\tau'$ with time horizon $\tmax$, respectively, 
then $\tau/2$ (i.e.\ divide all abstract and timed actions delays in $\tau$ by 2) followed by $\tau'/2$, is also a limit-safe abstract schedule with time horizon $\tmax$, which reaches $(V + V')/2$.
However, $(V + V')/2$ has a coordinate at the border iff both $V$ and $V'$ have it as well.
This shows that there is a state with a minimum number of coordinates at the border. 

To find the coordinates that need to be at the border we will use the following 
LP.
We have a variable $x_i$ for each dimension $i \leq N$
and a constraint that requires $x_i$ to be less or equal to the $i$-th coordinate of $\vmax - \vend$ {\bf and} $\vend - \vmin$.
We also add that $\sum_{m\in M}t_m = \tmax$ and $\vend = \vinit + \sum_{m\in M}t_m \cdot A(m)$,
with the objective {\em Maximise $\sum_i x_i$}.
If the value of the objective is $>0$, we will get to know a new coordinate that does not have to be at the border.
We then remove it from the LP and run it again.
Once the objective is $0$, then all the remaining coordinates, $I$, have to be at the border
and the solution to this LP tells us, at which border the solution has to be located (it cannot possibly be at the border of both $\vmin$ and $\vmax$ as then we could reach the middle).

Next, in order to bound the length of a limit-safe abstract schedule by an exponential in the size of the input, we not only need a state with the minimum number of coordinates at the border, but also sufficiently far way from the border. Otherwise, we may need super-exponentially many timed actions to reach it. In order to find such a point, we replace all $x_i$-s in the previously defined LP by a single variable $x$ which is 
smaller or equal to all the coordinates of $\vmax - \vend$ and $\vend - \vmin$ from~$I$. 
We then set the objective to {\em Maximise $x$}, which will give us a suitable easy target state~$\vend$. 

Now, consider $\automata'$, which is the same as \automata but with all slopes negated (i.e.\ $A'(m) = - A(m)$ for all $m \in M$). We claim that $\vend$ is reachable from $\vinit$ using a limit-safe abstract schedule $\tau$
iff $(\vinit+\vend)/2$ is reachable from $\vinit$ in \automata with time horizon $\tmax/2$ and $(\vinit+\vend)/2$ is reachable from $\vend$ in $\automata'$ with time horizon $\tmax/2$; this again follows by considering $\tau/2$. Note that a coordinate of $(\vinit+\vend)/2$ is at the border iff it is at the border in both $\vinit$ and $\vend$.

This way we reduced our problem to just checking whether a limit-safe abstract schedule exists from one point to another more permissive point (i.e.\ where the set of safe modes is at least as big) within a given time horizon. 
Algorithm \ref{alg:find-limit-safe} in the appendix \ref{app:limit-safe-schedule} solves this problem and constructs (if there exists one) a limit-safe abstract schedule of at most exponential length with these properties. 
It again reuses the same constructions as above,
e.g.\ constructs exactly the same sequence of sets of modes $\zerocostmodes = M_0 \subset M_1 \subset \ldots \subset M_k$,
and its correctness follows by a similar reasoning as above. 
We now need to invoke this algorithm twice: to check that $(\vinit+\vend)/2$ is reachable from $\vinit$ with time horizon $\tmax/2$ and that $(\vinit+\vend)/2$ is reachable from $\vend$ with time horizon $\tmax/2$ in $\automata'$.
If at least one of these calls return NO, then no limit-safe abstract schedule from $\vinit$ to $\vend$ can exist. Otherwise, let $\sigma$ and $\sigma'$ be the schedules returned by these two calls, respectively. Then the concatenation of $\sigma$ with the reverse of $\sigma'$ is a limit-safe abstract schedule that reaches $\vend$ from $\vinit$ with time horizon $\tmax$.
\qed

\end{proof}

\begin{theorem}
\label{thm:limit-safe}
Finding an optimal limit-safe abstract schedule in \automata can be done in nondeterministic exponential time.
\end{theorem}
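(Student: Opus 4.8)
The plan is to reduce the optimisation to a guess-and-verify procedure: nondeterministically guess the \emph{discrete skeleton} of an optimal schedule --- the sequence of costly modes it uses --- and then fix the continuous delays by solving a single linear programme. Recall that an abstract schedule alternates abstract timed actions over $\zerocostmodes$ with single timed actions using modes from $M\setminus\zerocostmodes$, so it is completely determined by the sequence $\seq{m_1,\ldots,m_{k-1}}$ of costly modes together with the delays $t_1,\ldots,t_{k-1}$ and the abstract actions $\ata_1,\ldots,\ata_k$. Crucially, once the sequence of costly modes is fixed, every state $V_j$ on the run of $\tau$ is an affine function of the delays, the safety constraints $\vmin \leq V_j \leq \vmax$ and the horizon constraint are linear, and the total cost $\pi(\tau)$ is a linear objective (the terms $\pi_d(m_i)$ are then constants). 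Hence computing the cheapest schedule with a given skeleton is a linear programme over a bounded polytope (all delays lie in $[0,\tmax]$), whose optimum is therefore attained.

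The main obstacle is to bound the length of an optimal schedule by an exponential, so that the skeleton can be guessed within exponential time. I would first run the mode-elimination and reachability preprocessing of Theorem~\ref{thm:limit-safe-existence}, which in polynomial time produces a limit-safe abstract schedule of at most exponential length; call its cost $C_0$. Since every constant of the input has polynomial bit-size and $\tmax$ is given in binary, $C_0$ is at most single-exponential, and hence the optimum satisfies $\mathrm{OPT}\leq C_0$. On the other hand, every costly mode contributes a discrete cost of at least $c_{\min} := \min_{m\in M\setminus\zerocostmodes}\pi_d(m) > 0$, and all continuous costs are non-negative, so any limit-safe abstract schedule that uses $K$ costly modes has cost at least $K\cdot c_{\min}$. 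As $c_{\min}$ is a positive rational of polynomial bit-size, $1/c_{\min}$ is at most exponential; consequently every optimal schedule uses at most $\mathrm{OPT}/c_{\min} \leq C_0/c_{\min}$ costly modes, which is again exponentially bounded. Because consecutive costly modes are separated by exactly one abstract timed action, this bounds the whole length of some optimal schedule by an exponential.

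With the length bound in hand, the algorithm is: check feasibility via Theorem~\ref{thm:limit-safe-existence}; if a limit-safe abstract schedule exists, nondeterministically guess a sequence of at most $C_0/c_{\min}$ costly modes (an object of exponential bit-size); build the corresponding LP with the linear safety and horizon constraints and the linear total-cost objective; and solve it in time polynomial in its (exponential) size. The accepting branch attaining the smallest objective yields an optimal limit-safe abstract schedule, and for the decision formulation (``is there a limit-safe abstract schedule of cost at most $C$?'') this places the problem in {\sc NExpTime}. The only delicate point beyond the length bound is bookkeeping: verifying that the guessed skeleton together with the LP solution is genuinely limit-safe is immediate, since the LP explicitly enforces $\vmin \leq V_j \leq \vmax$ at every state of the run.
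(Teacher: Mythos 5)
Your proposal is correct and follows essentially the same route as the paper: guess the (exponentially long) sequence of modes of an optimal limit-safe abstract schedule and solve the induced exponential-size linear programme whose constraints encode safety and the time horizon and whose objective is the total cost. The one place you go beyond the paper's terse argument is a welcome addition rather than a divergence: the paper only invokes the exponential length bound of Theorem~\ref{thm:limit-safe-existence} for \emph{some} limit-safe schedule, whereas your observation that every mode outside $M^*$ incurs a discrete cost of at least $\min_{m\notin M^*}\pi_d(m)>0$ makes explicit why an \emph{optimal} schedule can also be assumed to have exponentially bounded length.
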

\begin{proof}
The limit-safe abstract schedule constructed in Theorem \ref{thm:limit-safe-existence} has an exponential length.
To establish a nondeterministic exponential upper bound, we can guess the modes (and the order in which they occur).
With them, we can produce an exponentially sized linear program, which encodes that the run of the abstract schedule is safe and minimises the total cost incurred.
\qed\end{proof}

Theorem \ref{thm:limit-safe} and Proposition \ref{prop:limit-safe-construct} immediately give us the following.

\begin{corollary}
If a limit-safe abstract schedule exists in \automata, then for any $\epsilon > 0 $
an $\epsilon$-safe schedule with the same cost can be found in nondeterministic exponential time.
\end{corollary}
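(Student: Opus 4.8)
The plan is to compose the two results that immediately precede the corollary. Theorem~\ref{thm:limit-safe} supplies an optimal limit-safe abstract schedule within the desired resource bound, and Proposition~\ref{prop:limit-safe-construct} turns any such abstract schedule into a concrete $\epsilon$-safe schedule of the same cost. Concretely, I would first invoke Theorem~\ref{thm:limit-safe} to obtain, in nondeterministic exponential time, an optimal limit-safe abstract schedule $\tau$; by definition its cost $\pi(\tau)$ is the least cost attained by any limit-safe abstract schedule of \automata. I would then feed $\tau$ together with the given $\epsilon > 0$ into the construction of Proposition~\ref{prop:limit-safe-construct}, obtaining an $\epsilon$-safe schedule $\sigma$ with $\pi(\sigma) = \pi(\tau)$. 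As $\sigma$ inherits the cost of the optimal abstract schedule and is $\epsilon$-safe, it is exactly the witness the corollary asks for.

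The only genuine work is the complexity bookkeeping, which is where I expect the one subtlety to lie. Proposition~\ref{prop:limit-safe-construct} runs in time polynomial in the size of its input, namely $\tau$ and $\epsilon$, but $\tau$ is not of polynomial size: by Theorem~\ref{thm:limit-safe-existence} the abstract schedule has at most exponential length in the size of the input, so the ``polynomial-time'' refinement is really polynomial in an exponentially large object and hence itself exponential. I would make this explicit by bounding the length of $\sigma$. Each abstract timed action of $\tau$ is expanded into $j\cdot l$ concrete timed actions, where $j \le |\zerocostmodes|$ and $l$ is the blow-up factor $l > t^{*}\cdot\max_{m\in M^{*}}\|A(m)\|/\epsilon$ prescribed in Proposition~\ref{prop:limit-safe-construct}. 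Summing over the at most exponentially many abstract actions, and observing that $\epsilon$ is given in binary so that $1/\epsilon$ is at most exponential in the input length, the total number of timed actions of $\sigma$ --- and thus the time needed to write it down --- stays exponential in the size of the input.

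Finally, I would observe that the composition remains within the stated class: a nondeterministic exponential-time computation (the guess-and-verify of Theorem~\ref{thm:limit-safe}) followed by a deterministic exponential-time transformation is still a nondeterministic exponential-time computation, since the latter can be folded into the verification phase. Correctness is then immediate from the two cited results, and the main obstacle is purely the careful accounting for the interplay between the exponential length of $\tau$ and the $1/\epsilon$ factor introduced during refinement, with no new conceptual ingredient required.
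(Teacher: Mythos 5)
Your proposal is correct and follows exactly the route the paper takes: the paper derives this corollary directly by combining Theorem~\ref{thm:limit-safe} (optimal limit-safe abstract schedule in nondeterministic exponential time) with Proposition~\ref{prop:limit-safe-construct} (polynomial-time refinement into an $\epsilon$-safe schedule of equal cost), which is precisely your composition. Your additional bookkeeping on the exponential length of $\tau$ and the $1/\epsilon$ blow-up factor is a sound elaboration of what the paper leaves implicit in the word ``immediately.''
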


Moreover, from Theorem \ref{thm:limit-safe-existence} and the fact that in the case of \names with no discrete costs all abstract schedules have length $1$, we get the following.

\begin{corollary}
Finding an optimal limit-safe abstract schedule for \names with no discrete costs can be done in polynomial time.
\end{corollary}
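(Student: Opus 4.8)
The plan is to exploit the structural collapse that occurs once $\pi_d \equiv 0$. First I would observe that with no discrete costs we have $\zerocostmodes = \{m \in M \mid \pi_d(m) = 0\} = M$, so $M \setminus \zerocostmodes = \emptyset$. Since in an abstract schedule $\tau = \seq{\ata_1, (m_1,t_1),\ldots,(m_{k-1},t_{k-1}), \ata_k}$ every concrete timed action $(m_i,t_i)$ must use a mode $m_i \in M \setminus \zerocostmodes$, there can be no concrete timed actions at all, forcing $k = 1$. Hence every abstract schedule is a single abstract timed action $\ata \colon M \to \mathbb{R}_{\geq 0}$ — this is exactly the length-$1$ fact invoked in the statement, and it makes the (here trivial) exponential length bound of Theorem~\ref{thm:limit-safe-existence} collapse: the nondeterministic guessing of the mode sequence underlying the general bound becomes vacuous, since there is nothing to guess.

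Next I would reformulate the optimisation as a single linear program. A length-$1$ abstract schedule with time horizon $\tmax$ is fully described by the nonnegative vector $(\ata(m))_{m \in M}$ subject to $\sum_{m \in M} \ata(m) = \tmax$; its run is the pair $\seq{\vinit, \vend}$ with $\vend = \vinit + \sum_{m \in M} A(m)\,\ata(m)$, and its total cost is the linear expression $\sum_{m \in M} \pi_c(m)\,\ata(m)$. The only safety requirement is limit-safety, i.e.\ that both states of the run lie in $S$; as $\vinit \in S$ is guaranteed, this collapses to the single box constraint $\vmin \le \vend \le \vmax$. Finding an optimal limit-safe abstract schedule is therefore exactly the program
\begin{equation*}
\begin{aligned}
\text{minimise} \quad & \textstyle\sum_{m \in M} \pi_c(m)\,\ata(m) \\
\text{subject to} \quad & \ata(m) \ge 0 \text{ for all } m \in M, \qquad \textstyle\sum_{m \in M}\ata(m) = \tmax, \\
& \vmin \;\le\; \vinit + \textstyle\sum_{m \in M} A(m)\,\ata(m) \;\le\; \vmax .
\end{aligned}
\end{equation*}

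Finally I would conclude by a complexity count: this program has $|M|$ variables and $\calO(N + |M|)$ constraints, all with coefficients taken directly from the input, so it can be solved — including detecting infeasibility, which here coincides exactly with the non-existence of any limit-safe abstract schedule — in time polynomial in $\size$ using any polynomial-time algorithm for linear programming. The only point requiring care, and the step I would treat as the crux, is justifying that limit-safety for a single lumped abstract action constrains \emph{only} the two endpoints $\vinit, \vend$ and not the intermediate trajectory; this is precisely the content of the limit-safe semantics, and it is what Proposition~\ref{prop:limit-safe-construct} later makes concrete by realising any such abstract schedule as an $\epsilon$-safe one. Beyond this semantic check everything is routine, so I expect no genuine obstacle.
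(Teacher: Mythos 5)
Your proposal is correct and follows essentially the same route as the paper: the paper derives this corollary from the observation that with $\pi_d\equiv 0$ every abstract schedule has length $1$, so the exponentially sized linear program used in Theorem~\ref{thm:limit-safe} (which encodes safety of the run and minimises total cost) collapses to exactly the single polynomial-sized LP you write down. Your explicit remark that limit-safety of a lumped abstract action constrains only the endpoints of the run is the right point to flag, and it is indeed what the abstract-schedule semantics and Proposition~\ref{prop:limit-safe-construct} provide.
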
 

We can reduce the computational complexity in the general model
if we are willing to sacrifice optimality for $\epsilon$-optimality.

\newcommand{\amax}{\ensuremath{\max_{m \in \allmodes}|A(m)|}}
\begin{theorem}
If a limit-safe abstract schedule exists, then finding an $\epsilon$-safe $\epsilon$-optimal strategy 
can be done in deterministic polynomial space.
\end{theorem}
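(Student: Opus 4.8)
The plan is to recast the search for an $\epsilon$-optimal $\epsilon$-safe schedule as a minimum-cost bounded-length walk in a graph that is exponentially large but whose every vertex is describable in polynomially many bits, and then to compute this walk in polynomial space by a Savitch-style recursive-doubling argument. From Theorems \ref{thm:limit-safe-existence} and \ref{thm:limit-safe} I may assume that an optimal limit-safe abstract schedule, if one exists, has length at most $L = 2^{p(n)}$, where $p$ is a polynomial and $n$ the size of \automata. The key observation is that although such a schedule has exponentially many timed actions, a divide-and-conquer computation over the walk length needs a recursion stack of depth only $\log L = \calO(p(n))$, so if each stack frame stores a polynomial amount of data we stay in {\sc PSpace}.

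First I would fix a granularity $\delta > 0$ and take as vertices the grid points of the $\epsilon$-enlarged safe set $\{x : \vmin - \epsilon\bone_N < x < \vmax + \epsilon\bone_N\}$ whose coordinates are integer multiples of $\delta$, each augmented with a rounded value of the time still to be spent. Such a vertex is described by $N$ coordinates of $\calO(\log(1/\delta))$ bits plus the residual time, hence by polynomially many bits. I would put an edge from $V$ to a grid point $V'$ whenever $V'$ is reachable from $V$ by a single timed action $(m,t)$ with $m \in M \setminus \zerocostmodes$, or by a single abstract timed action over $\zerocostmodes$; the label records the rounded elapsed time together with the cost $\pi_d(m) + \pi_c(m)t$, respectively the minimum continuous cost $\sum_{m\in\zerocostmodes}\pi_c(m)\mathbf{t}(m)$ of an abstract action hitting $V'$, which is the value of a small linear programme in the per-mode times. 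Since the $\epsilon$-enlarged safe set is convex and both endpoints of an edge lie in it, the straight segment traversed by the action stays $\epsilon$-safe, so any walk in this graph expands---via Proposition \ref{prop:limit-safe-construct} for the abstract edges---into a genuine $\epsilon$-safe schedule of the same cost.

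To minimise cost I would first compute the reduced target $\vend$ exactly as in the proof of Theorem \ref{thm:limit-safe-existence}, and then the function $\mathrm{cost}_{\le 2\ell}(V, V', T)$ giving the cheapest walk of at most $2\ell$ edges from $V$ to $V'$ consuming time $T$, via the recursion $\mathrm{cost}_{\le 2\ell}(V,V',T) = \min_{W, T'}\big(\mathrm{cost}_{\le \ell}(V,W,T') + \mathrm{cost}_{\le \ell}(W,V',T-T')\big)$, where $W$ ranges over grid vertices and $T'$ over rounded time splits. The two recursive calls are evaluated one after the other, reusing work space, and the minimisation enumerates the exponentially many candidate midpoints $W$ one at a time; with base case $\ell = 1$ solved by the per-edge linear programme, the recursion has depth $\calO(p(n))$ and each frame stores only the current $V, V', W, T, T'$ and a running minimum, so the whole computation runs in polynomial space. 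The witnessing schedule, which is exponentially long, is emitted in left-to-right order on a write-only output tape as the recursion unwinds, or returned as an abstract schedule to be expanded by Proposition \ref{prop:limit-safe-construct}.

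The step I expect to be the crux is the error analysis, because rounding the state onto the grid at every one of the up to $L$ actions lets the discrepancy between the idealised continuous trajectory and its grid image accumulate additively, and the midpoint rounding in the recursion can likewise compound the deviation from $\vend$ by a factor growing with the walk length. The point to make is that this forces the grid to be fine, $\delta \approx \epsilon / L$, but only logarithmically so: $\log(1/\delta) = \calO(p(n) + \log(1/\epsilon))$ bits per coordinate, which keeps every configuration polynomial-sized even though the number of grid points is exponentially large. Since {\sc PSpace} charges only for the $\calO(\log L)$ configurations on the recursion stack rather than for the $L$ actions of the path, this fine grid is affordable; I would then verify that with this $\delta$ the total positional drift stays below $\epsilon$ (ensuring $\epsilon$-safety and reachability of a point within $\epsilon$ of $\vend$) and that the accumulated rounding of the continuous costs stays below $\epsilon$ (ensuring $\epsilon$-optimality), which together establish the claim.
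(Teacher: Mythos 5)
Your proposal is correct and follows essentially the same route as the paper: both rely on the exponential length bound from Theorem~\ref{thm:limit-safe-existence}, round the intermediate states/times to a granularity of roughly $\epsilon$ divided by that exponential bound so that every configuration is representable in polynomial space while the accumulated drift and cost error stay below $\epsilon$, and then determinise via Savitch. The only difference is presentational: you unfold Savitch's theorem into an explicit recursive-doubling computation of the minimum-cost walk (and are thereby more explicit about how the exponentially long witness is emitted and how cost is minimised), whereas the paper guesses the rounded schedule nondeterministically step by step and invokes Savitch's theorem as a black box.
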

\begin{proof}
When reconsidering the linear programme from the end of the proof of Theorem \ref{thm:limit-safe}, we can guess the intermediate states in polynomial space (and thus guess and output the schedule) as long as all states along the run (including the time passed so far) are representable in polynomial space.

Otherwise we use the opportunity to deviate by up to $\epsilon$ from the safe set 
by increasing or decreasing the duration of each timed action up to some $\delta>0$,
in order to keep the intermediate values representable in space polynomial in $\size$ and $\epsilon$.
However, we apply these changes in a way that the overall time remains $\tmax$.
Clearly this is possible, because within $\delta/2$ of the actual time point of each state along the run, 
there is a value whose number of digits in the standard decimal notation is at most equal to the sum of the number of digits in $\delta/2$ and $\tmax$.
Picking any such point for every interval would induce a schedule with the required property and they can be simply guessed one by one.
The final imprecision introduced by this operation is at most $b\cdot \delta\cdot \amax$,
where $b$ is a bound on the number of timed actions in a limit-safe schedule, which is exponential in $\size$.
If we choose $\delta = \epsilon/(b \cdot \amax)$, 
then we will get the required precision.

Although our algorithm is nondeterministic, due to Savitch's theorem,
it can be implemented in deterministic polynomial space.
\qed\end{proof}

\newcommand{\flexi}{flexi\xspace}
\newcommand{\flexis}{flexis\xspace}
\section{Structure of Finite Control in One-dimension}
\label{sec:schedule-structure}

We show in this section that any finite safe schedule in one-dimension can be transformed 
without increasing its cost into 
a safe schedule, which follows one of finitely many regular patterns. 
The crucial component of this normal form will be a ``\shot'' that we define below. We first introduce some notation.
Let $\modesup = \{m \mid A(m) > 0\}$ and $\modesdown = \{m \mid A(m) < 0\}$.
Recall that $\modeszero = \{m \mid A(m) = 0\}$.
We will call a mode, $m$, an {\em up mode, down mode, or \flatm} if $m \in \modesup$, $m \in \modesdown$, or $m \in \modeszero$, respectively.
Similarly, the {\em trend} of a timed action $(m,t)$ is {\em up, down, flat} if $m$ is an up, down, \flatm, respectively.
For any subsequence of timed actions $\sigma' = \seq{(m_i,t_i),\ldots,(m_j,t_j)}$ in a schedule $\sigma$, whose
run is $run(\sigma) = \seq{V_0,V_1,\ldots,V_k}$,
we say that 
$\sigma'$ {\em starts at state $v$} and {\em ends at state $v'$}
iff $v = V_{i-1}$ and $v' = V_{j}$.
We use the same terminology for a single timed action (in this case this subsequence has length 1).

\begin{definition}
A {\em partial \shot{}} 
is a pair of consecutive timed actions $(m_{i},t_{i}),(m_{i+1},t_{i+1})$ in a safe schedule such that $m_i \in \modesup$, $m_{i+1} \in \modesdown$, and $A(m_{i})t_{i} + A(m_{i+1})t_{i+1} = 0$, i.e.\ the state of a \name does not change after any partial \shot{}.
A partial leap is {\em complete} if $A(m_{i})t_{i} = \vmax - \vmin$. We will simply refer to complete leaps as {\em leaps}.

There are $|M^+ \times M^-|$ types of \shots{}. A \shot{} is of {\em type $(m,m') \in \modesup \times \modesdown$} iff $m_i = m$ and $m_{i+1} = m'$. 
Let $\cstime{m}$ and $\cscost{m}$ denote the time and cost it takes for an up mode $m$ to get from $\vmin$ to $\vmax$ or a down mode $m$ to get from $\vmax$ to $\vmin$. 
Note that $\cstime{m} = |\nfrac{\vdiff}{A(m)}|$ and $\cscost{m} = \pi_d(m) + \pi_c(m) \cdot \cstime{m}$.
By $\cstime{m,m'}$ and $\cscost{m,m'}$ we denote the time duration and the cost of a \shot{} of type $(m,m')\in\modesup \times \modesdown$, respectively. 
Note that $\cstime{m,m'} = \cstime{m} + \cstime{m'}$ and $\cscost{m,m'} = \cscost{m} + \cscost{m'}$.
\end{definition}

Any safe schedule $\sigma$ can be decomposed into three sections
that we will call its {\em head, \shots, and tail}.
The {\em head section} ends after the first timed action that ends at $\vmin$.
The {\em leaps section} contains only leaps of possibly different types following the head section.
Finally, the {\em tail section} starts after the last leap in the leaps section has finished.
Note that any of these sections can be empty and the tail section can in principle contain further leaps.
We show here that, for any safe schedule of length at least three, there exists another safe one with the same or a smaller cost,
whose head and tail sections follow one of the 10 patterns presented in Figure \ref{fig:head} and Figure \ref{fig:tail}, respectively, where
{\em partial up/down} means that the next state is not at the border.
For each of these patterns, there exists an example which shows that an optimal safe schedule
may need to use such a pattern and hence it is necessary to consider it.
In order to prove this, we first need to define several cost-nonincreasing and safety-preserving operations that can be applied to safe schedules.
These will later be applied in Theorem \ref{thm:standard-form} to transform any safe schedule into one of the just mentioned regular patterns.
These operations are easy to explain via a picture, but cumbersome to define formally.
Therefore, we moved their formal definitions to the Appendix \ref{app:formal-def}
and present here only the intuition behind them.

Let $\sigma$ be any safe finite schedule.
Following Proposition \ref{prop:angular} and \ref{prop:zero-mode}, we can assume that $\sigma$ is angular and only contains at most one timed action with a \flatm, and if it contains one, this action occurs at the very beginning. Unless explicitly stated, the operations below are defined for timed actions with up or down trend only.

\begin{figure}
	\centering
	\begin{tikzpicture}[scale=0.5] 
	\draw (1.,3.5)-- (10.,3.5);
	\draw (1.,-1.5)-- (10.,-1.5);
	\draw [line width=1.6pt] (2.5,-1.)-- (3.,1.);
	\draw [line width=1.6pt] (3.,1.)-- (4.5,2.);
	\draw [line width=1.6pt] (4.5,2.)-- (8.38,2.48);
	\draw [line width=1.6pt,color=ccqqqq] (2.5,-1.)-- (4.,0.);
	\draw [line width=1.6pt,color=ccqqqq] (4.,0.)-- (8.,0.5);
	\draw [line width=1.6pt,color=ccqqqq] (8.,0.5)-- (8.38,2.48);
	\draw (-0.9,3.88) node[anchor=north west] {$\mathbf{V_{max}}$};
	\draw (-0.9,-1.12) node[anchor=north west] {$\mathbf{V_{min}}$};
	\begin{scriptsize}
	\draw [fill=qqqqff] (2.5,-1.) circle (2.0pt);
	\draw[color=qqqqff] (2.32,-1.25) node { $\mathbf{1}$};
	\draw [fill=qqqqff] (3.,1.) circle (1.5pt);
	\draw[color=qqqqff] (2.82,1.10) node { $\mathbf{2}$};
	\draw[color=black] (2.13,0.02) node { $\mathbf{m_1}$};
	\draw [fill=qqqqff] (4.5,2.) circle (2.0pt);
	\draw[color=qqqqff] (4.36,2.27) node { $\mathbf{3}$};
	\draw[color=black] (3.15,1.62) node { $\mathbf{m_2}$};
	\draw [fill=qqqqff] (8.38,2.48) circle (2.0pt);
	\draw[color=qqqqff] (8.66,2.55) node { $\mathbf{4}$};
	\draw[color=black] (6.17,2.44) node { $\mathbf{m_3}$};
	\draw [fill=qqqqff] (4.,0.) circle (2.0pt);
	\draw[color=qqqqff] (4.24,-0.33) node { $\mathbf{2'}$};
	\draw[color=ccqqqq] (3.59,-0.86) node { $\mathbf{m_2}$};
	\draw [fill=qqqqff] (8.,0.5) circle (2.0pt);
	\draw[color=qqqqff] (8.32,0.23) node { $\mathbf{3'}$};
	\draw[color=ccqqqq] (6.55,-0.1) node { $\mathbf{m_3}$};
	\draw[color=ccqqqq] (8.63,1.22) node { $\mathbf{m_1}$};
	\end{scriptsize}
	\end{tikzpicture}
	\begin{tikzpicture}[scale=0.5] 
	\draw (0.5,4.)-- (11.5,4.);
	\draw (0.5,-0.5)-- (11.5,-0.5);
	\draw [line width=1.6pt] (1.5,-0.5)-- (2.5,1.5);
	\draw [line width=1.6pt] (2.5,1.5)-- (4.5,-0.5);
	\draw [line width=1.6pt] (4.5,-0.5)-- (7.5,4.);
	\draw [line width=1.6pt] (7.5,4.)-- (9.5,-0.5);
	\draw [line width=1.6pt,color=ccqqqq] (1.5,-0.5)-- (4.38,4.);
	\draw [line width=1.6pt,color=ccqqqq] (4.38,4.)-- (6.5,-0.5);
	\draw [line width=1.6pt,color=ccqqqq] (6.5,-0.5)-- (7.5,1.5);
	\draw [line width=1.6pt,color=ccqqqq] (7.5,1.5)-- (9.5,-0.5);
	\draw (-1.18,4.38) node[anchor=north west] {$\mathbf{V_{max}}$};
	\draw (-1.14,-0.1) node[anchor=north west] {$\mathbf{V_{min}}$};
	\begin{scriptsize}
	\draw [fill=qqqqff] (1.5,-0.5) circle (2.0pt);
	\draw[color=qqqqff] (1.56,-0.93) node { $\mathbf{1}$};
	\draw [fill=qqqqff] (2.5,1.5) circle (2.0pt);
	\draw[color=qqqqff] (2.42,1.77) node { $\mathbf{2}$};
	\draw[color=black] (1.39,0.52) node { $\mathbf{m_1}$};
	\draw [fill=qqqqff] (4.5,-0.5) circle (2.0pt);
	\draw[color=qqqqff] (4.62,-0.97) node { $\mathbf{3}$};
	\draw[color=black] (3.05,0.24) node { $\mathbf{m_2}$};
	\draw [fill=qqqqff] (7.5,4.) circle (2.0pt);
	\draw[color=qqqqff] (7.54,4.27) node { $\mathbf{4}$};
	\draw[color=black] (6.41,1.56) node { $\mathbf{m_3}$};
	\draw [fill=qqqqff] (9.5,-0.5) circle (2.0pt);
	\draw[color=qqqqff] (9.58,-0.89) node { $\mathbf{5}$};
	\draw[color=black] (9.09,1.56) node { $\mathbf{m_4}$};
	\draw [fill=qqqqff] (4.38,4.) circle (2.0pt);
	\draw[color=qqqqff] (4.52,4.31) node { $\mathbf{2'}$};
	\draw[color=ccqqqq] (3.67,1.96) node { $\mathbf{m_3}$};
	\draw [fill=qqqqff] (6.5,-0.5) circle (2.0pt);
	\draw[color=qqqqff] (6.64,-0.91) node { $\mathbf{3'}$};
	\draw[color=ccqqqq] (4.95,1.72) node { $\mathbf{m_4}$};
	\draw [fill=qqqqff] (7.5,1.5) circle (2.0pt);
	\draw[color=qqqqff] (7.6,1.85) node { $\mathbf{4'}$};
	\draw[color=ccqqqq] (7.41,0.02) node { $\mathbf{m_1}$};
	\draw[color=ccqqqq] (8.29,0.22) node { $\mathbf{m_2}$};
	\end{scriptsize}
	\end{tikzpicture}
	\caption{\label{fig:Shift_Leaps} On the left, the rearrange operation applied to
		three timed actions 1-2-3 with modes $m_1,m_2,m_3$ results in 1'-2'-3' with modes $m_2,m_3,m_1$.
		On the right, the shift operation is being applied to a partial leap 1-2-3 which will be moved after the (complete) leap 3-4-5. }
\end{figure}
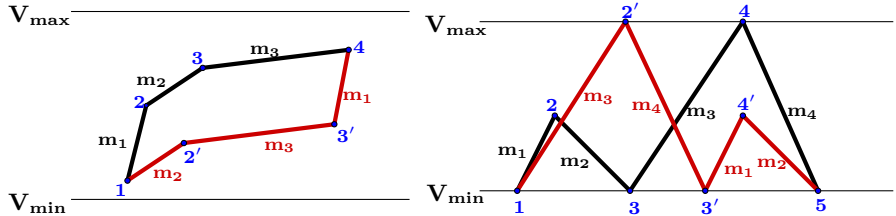 

The first operation that we need is the {\em rearrange} operation, which simply
changes the order of any subsequence of timed actions with the same trend.
The next one is the {\em shift} operation.
It cuts any subsequence of timed actions that start and end at the same state, $V$,
and pastes this subsequence after any timed action that ends at $V$.
The effect of these two operations can be seen in Figure \ref{fig:Shift_Leaps}.

\begin{figure}
	\begin{tikzpicture}[x={17.0pt},y={17.0pt}]
	\draw (-1.5,5.)-- (8.936,4.986);
	\draw (-1.5,-0.5)-- (8.936,-0.492);
	\draw [line width=1.6pt] (-0.9980520850392902,-0.4990338064233125)-- (1.4999949306288896,4.995496950062013);
	\draw [line width=1.6pt] (1.4999949306288896,4.995496950062013)-- (3.5,3.5);
	\draw [line width=1.6pt] (3.5,3.5)-- (3.9999917622719448,4.992682543850771);
	\draw [line width=1.6pt] (3.9999917622719448,4.992682543850771)-- (8.088057374570461,-0.49318876978155646);
	\draw [line width=1.6pt,color=ccqqqq] (1.4999949306288896,4.995496950062013)-- (5.499996689387469,-0.4948536528212367);
	\draw [line width=1.6pt,color=ccqqqq] (5.499996689387469,-0.4948536528212367)-- (6.064,1.048);
	\draw [line width=1.6pt,color=ccqqqq] (6.064,1.048)-- (8.088057374570461,-0.49318876978155646);
	\draw (-3.338,0.08) node[anchor=north west] {\normalsize $\mathbf{V_{min}}$};
	\draw (-3.318,5.58) node[anchor=north west] {\normalsize $\mathbf{V_{max}}$};
	\begin{scriptsize}
	\draw [fill=qqqqff] (-0.9980520850392902,-0.4990338064233125) circle (2pt);
	\draw[color=qqqqff] (-1.018,-1.05) node { 1};
	\draw [fill=qqqqff] (1.4999949306288896,4.995496950062013) circle (2pt);
	\draw[color=qqqqff] (1.442,5.43) node { 2};
	\draw[color=black] (0.692,2.) node { $\mathbf{m_i}$};
	\draw [fill=qqqqff] (3.5,3.5) circle (2pt);
	\draw[color=qqqqff] (3.642,3.29) node { 3};
	\draw[color=black] (2.992,4.44) node { $\mathbf{m_{i+1}}$};
	\draw [fill=qqqqff] (3.9999917622719448,4.992682543850771) circle (2 pt);
	\draw[color=qqqqff] (4.022,5.43) node { 4};
	\draw[color=black] (4.212,3.72) node { $\mathbf{m_{i+2}}$};
	\draw [fill=qqqqff] (8.088057374570461,-0.49318876978155646) circle (2pt);
	\draw[color=qqqqff] (8.062,-0.89) node { 5};
	\draw[color=black] (6.972,2.18) node { $\mathbf{m_{i+3}}$};
	\draw [fill=qqqqff] (5.499996689387469,-0.4948536528212367) circle (2pt);
	\draw[color=qqqqff] (5.522,-0.97) node { 3'};
	\draw[color=ccqqqq] (3.272,1.52) node { $\mathbf{m_{i+3}}$};
	\draw [fill=qqqqff] (6.064,1.048) circle (2pt);
	\draw[color=qqqqff] (6.082,1.41) node { 4'};
	\draw[color=ccqqqq] (5.292,0.64) node { $\mathbf{m_{i+2}}$};
	\draw[color=ccqqqq] (6.792,-0.08) node { $\mathbf{m_{i+1}}$};
	\end{scriptsize}
	\end{tikzpicture}
	\begin{tikzpicture}[x={19.0pt},y={20.0pt}] 
	\draw (1.,4.)-- (6.0,4.);
	\draw (1.,-0.5)-- (6.0,-0.5);
	\draw [line width=1.6pt] (1.54,-0.5)-- (2.,1.);
	\draw [line width=1.6pt] (2.,1.)-- (5.,2.);
	\draw [line width=1.6pt] (5.,2.)-- (5.7,-0.5);
	\draw [line width=2.pt,color=yqqqyq] (5.,2.)-- (4.42,4.);
	\draw [line width=1.6pt,color=yqqqyq] (2.,1.)-- (2.58,3.34);
	\draw [line width=1.6pt,color=yqqqyq] (2.58,3.34)-- (4.42,4.);
	\draw [line width=1.6pt,color=ffqqqq] (1.54,-0.5)-- (5.368902077151335,0.6824925816023741);
	\draw (-0.7,0.0) node[anchor=north west] {$\mathbf{V_{min}}$};
	\draw (-0.7,4.52) node[anchor=north west] {$\mathbf{V_{max}}$};
	\begin{scriptsize}
	\draw [fill=qqqqff] (1.54,-0.5) circle (2.0pt);
	\draw[color=qqqqff] (1.36,-1.07) node {$1$};
	\draw [fill=qqqqff] (2.,1.) circle (2.0pt);
	\draw[color=qqqqff] (1.58,0.81) node {$2$};
	\draw[color=black] (1.45,1.38) node {$m_1$};
	\draw [fill=qqqqff] (5.,2.) circle (2.0pt);
	\draw[color=qqqqff] (5.3,2.07) node {$3$};
	\draw[color=black] (3.61,1.16) node {$m_2$};
	\draw [fill=qqqqff] (5.7,-0.5) circle (2.0pt);
	\draw[color=qqqqff] (5.86,-1.03) node {$4$};
	\draw[color=black] (5.59,1.38) node {$m_3$};
	\draw [fill=qqqqff] (4.42,4.) circle (2.0pt);
	\draw[color=qqqqff] (4.72,4.19) node {$7$};
	\draw [fill=qqqqff] (2.58,3.34) circle (2.0pt);
	\draw[color=qqqqff] (2.24,3.27) node {$6$};
	\draw[color=yqqqyq] (3.49,3.24) node {$m_2$};
	\draw [fill=qqqqff] (5.368902077151335,0.6824925816023741) circle (2.0pt);
	\draw[color=qqqqff] (5.66,0.73) node {$5$};
	\draw[color=ffqqqq] (3.95,-0.14) node {$m_2$};
	\end{scriptsize}
	\end{tikzpicture}
	\caption{\label{fig:shift-wedge} On the left, an example of applying the shift-down operation to timed actions $m_{i+1}, m_{i+2}$. These actions are rearranged to move after point 5, which becomes point 3' (i.e.\ following timed action $m_{i+3}$).
		On the right, an example of applying the \Wedge operation to three timed actions $m_1,m_2,m_3$.
		This operation is a (parallel) translation of the action $m_2$, which changes the time duration of each of theses actions.
		After this operation either the $m_2$ line touches $\vmin$, which would remove $m_1$ from the schedule, 
		or the $m_2$ line touches $\vmax$, which would change a state along the run of the schedule to be at the border.
	}
\end{figure}
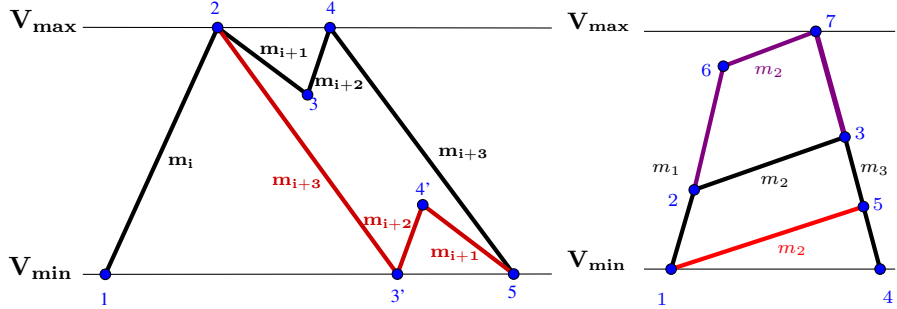 

Next is the {\em shift-down} operation. 
We can see an example of applying this operation in Figure \ref{fig:shift-wedge}.
Intuitively, it can rearrange any subsequence of timed actions 
that start and end at the same state and move 
them after any timed action that ends at $\vmin$.
The most complicated operation we define is the {\em \Wedge} operation. 
It acts on three consecutive timed actions in a safe schedule and simultaneously 
shrinks the middle action while extending the other two, or stretches the middle action while shrinking the other two.
We can see its behaviour in Figure~\ref{fig:shift-wedge}. Intuitively, it moves the timed action $m_2$ parallelly up or down,
until either the timed action $m_1$ is removed or $m_2$ ends at $\vmax$.
The direction depends on the cost gradient, but as the cost delta function of this operation is linear, %
one of these directions is cost-nonincreasing.

Finally, we define the {\em resize} operation that will be used the most in our procedure.
The resize operation requires one parameter $t \in \Real$ and can act on any two consecutive timed actions in a safe schedule.
Intuitively, if $t < 0$, this operation decreases the total time of this pair of timed actions by $|t|$ while changing only the middle state between these two timed actions along the run of the schedule. 
If $t > 0$, this operation increases the duration of this pair of timed actions by $t$ while again changing only the state between them along the run.
If $t>0$ then we will also refer to this operation as the {\em stretch} operation and if $t < 0$ as the {\em shrink} operation with parameter $-t > 0$. 
If the stretch and shrink operations are simultaneously applied with the same parameter $t$ to two non-overlapping pairs of timed actions, the result is a safe schedule with the same time horizon as before, but with a possibly different total cost. 
We will call a {\em \flexi} any subsequence of length $2$ in a safe schedule such that
both shrink and stretch operations can be applied to it for some $t > 0$ without compromising its safety.
A simultaneous application of these two operations to \flexis is demonstrated in Figure \ref{fig:shrink-stretch-up-up} and \ref{fig:shrink-stretch-up-down}.

Consider two non-overlapping \flexis at positions $i$ and $j$ in a safe schedule $\sigma$.
Let $\sigma' = \Resize(\sigma,i,t)$ be the resulting schedule of applying the resize operation with parameter $t$ to the $i$-th and $i+1$-th timed actions in $\sigma$ and $\maxresize(\sigma,i)$ be the maximal closed interval from which $t$ can be picked to ensure that $\sigma'$ is safe.
Similarly, let $\sigma'' = \Resize(\sigma,j,-t)$ and $\sigma''' = \Resize(\Resize(\sigma,i,t),j,-t))$.
Note that $\sigma'''$ has the same time horizon as $\sigma$ and is safe as long as 
$t \in \maxresize(\sigma,i) \cap \maxresize(\sigma,j)$ and let us denote this closed interval by $I$.
Furthermore, $\pi(\sigma''') - \pi(\sigma) = \pi(\sigma') - \pi(\sigma) + \pi(\sigma'') - \pi(\sigma)$ because
the two \flexis did not overlap.
As it is shown in Appendix \ref{app:formal-def}, 
both $\pi(\sigma') - \pi(\sigma)$ and $\pi(\sigma'') - \pi(\sigma)$ are linear functions in $t$ in the interior of $I$. As a result, $\pi(\sigma''') - \pi(\sigma)$ is also a linear function in $t$ and so its minimum value is achieved at one of the endpoints of $I$. Also, at such an endpoint, one of the time actions in these two \flexis will disappear and as a result the total cost would be reduced even further. It follows, that there is an endpoint of $I$ such that selecting it as $t$ will not increase the cost of the schedule, but it will remove a \flexi from $\sigma$.
As the \flatm timed action and the last timed action in a schedule can have flexible time delay,
we can also define the resize operation for them in a similar way (see Appendix \ref{app:formal-def}). 
As a result, we can apply the resize operation with parameter $t$ to any of these (including a \flexi) and with parameter $-t$ to the other. 
Reasoning as above, there is a value for $t$ such that the cost of the resulting schedule does not increase,
the schedule remains safe, and at least one of the timed actions is 
removed from $\sigma$ or one more state along the run of $\sigma$ becomes $\vmin$ or $\vmax$.

\begin{figure}
	\centering
	\resizebox{\textwidth}{!}{
		\begin{tikzpicture}[line cap=round,line join=round,>=triangle 45,x=1.0cm,y=1.0cm]
		\draw (-3.,4.5)-- (2.5,4.5);
		\draw (-3.,2.)-- (2.5,2.);
		\draw [dash pattern=on 5pt off 5pt] (-3.,5.)-- (-3.,1.5);
		\draw [dash pattern=on 5pt off 5pt] (1.,5.)-- (1.,1.5);
		\draw (-3.2,1.74) node[anchor=north west] {\large $\mathbf{t=0}$};
		\draw (3.52,4.52)-- (9.02,4.52);
		\draw (3.52,2.02)-- (9.02,2.02);
		\draw [dash pattern=on 5pt off 5pt] (3.52,5.02)-- (3.52,1.52);
		\draw [dash pattern=on 5pt off 5pt] (7.5,5.)-- (7.5,1.5);
		\draw (9.98,4.5)-- (15.48,4.5);
		\draw (9.98,2.)-- (15.48,2.);
		\draw [dash pattern=on 5pt off 5pt] (9.98,5.)-- (9.98,1.5);
		\draw [dash pattern=on 5pt off 5pt] (14.,5.)-- (14.,1.5);
		\draw (-2.98,0.02)-- (2.52,0.02);
		\draw (-2.98,-2.48)-- (2.52,-2.48);
		\draw [dash pattern=on 5pt off 5pt] (-2.98,0.52)-- (-2.98,-2.98);
		\draw [dash pattern=on 5pt off 5pt] (1.,0.5)-- (1.,-3.);
		\draw (3.48,-0.02)-- (8.98,-0.02);
		\draw (3.48,-2.52)-- (8.98,-2.52);
		\draw [dash pattern=on 5pt off 5pt] (3.48,0.48)-- (3.48,-3.02);
		\draw [dash pattern=on 5pt off 5pt] (7.5,0.5)-- (7.5,-3.);
		\draw (10.,0.)-- (15.5,0.);
		\draw (10.,-2.5)-- (15.5,-2.5);
		\draw [dash pattern=on 5pt off 5pt] (10.,0.5)-- (10.,-3.);
		\draw [dash pattern=on 5pt off 5pt] (14.,0.5)-- (14.,-3.);
		\draw (3.28,1.78) node[anchor=north west] {\large $\mathbf{t=0}$};
		\draw (9.74,1.76) node[anchor=north west] {\large $\mathbf{t=0}$};
		\draw (-3.24,-2.78) node[anchor=north west] {\large $\mathbf{t=0}$};
		\draw (3.24,-2.8) node[anchor=north west] {\large $\mathbf{t=0}$};
		\draw (9.78,-2.82) node[anchor=north west] {\large $\mathbf{t=0}$};
		\draw [line width=1.6pt] (-3.,3.7)-- (-1.3,3.7);
		\draw [line width=1.6pt] (-1.3,3.7)-- (1.,2.);
		\draw [line width=1.6pt] (3.52,3.7)-- (7.5,2.02);
		\draw [line width=1.6pt] (9.98,2.7)-- (11.22,4.04);
		\draw [line width=1.6pt] (11.22,4.04)-- (14.,2.);
		\draw [line width=1.6pt] (-2.98,-1.5)-- (-1.76,-1.52);
		\draw [line width=1.6pt] (-1.76,-1.52)-- (-0.76,0.02);
		\draw [line width=1.6pt] (-0.76,0.02)-- (1.,-2.48);
		\draw [line width=1.6pt] (3.48,-1.64)-- (5.08,-0.02);
		\draw [line width=1.6pt] (5.08,-0.02)-- (7.5,-2.52);
		\draw [line width=1.6pt] (10.,-0.78)-- (11.,-1.74);
		\draw [line width=1.6pt] (11.,-1.74)-- (12.,0.);
		\draw [line width=1.6pt] (12.,0.)-- (14.,-2.5);
		\draw (-4.22,4.86) node[anchor=north west] {\large $\mathbf{V_{max}}$};
		\draw (-4.18,0.38) node[anchor=north west] {\large $\mathbf{V_{max}}$};
		\draw (-4.2,2.38) node[anchor=north west] {\large $\mathbf{V_{min}}$};
		\draw (-4.18,-2.12) node[anchor=north west] {\large $\mathbf{V_{min}}$};
		\draw (-1.18,1.16) node[anchor=north west] {\large \textbf{(a)}};
		\draw (5.3,1.18) node[anchor=north west] {\large \textbf{(b)}};
		\draw (11.82,1.24) node[anchor=north west] {\large \textbf{(c)}};
		\draw (-1.14,-3.3) node[anchor=north west] {\large \textbf{(d)}};
		\draw (5.32,-3.3) node[anchor=north west] {\large \textbf{(e)}};
		\draw (11.86,-3.32) node[anchor=north west] {\large \textbf{(f)}};
		\draw (-2.98,-4.48)-- (2.52,-4.48);
		\draw (-2.98,-6.98)-- (2.52,-6.98);
		\draw [dash pattern=on 5pt off 5pt] (-2.98,-3.98)-- (-2.98,-7.48);
		\draw [dash pattern=on 5pt off 5pt] (1.,-4.)-- (1.,-7.5);
		\draw (3.48,-4.52)-- (8.98,-4.52);
		\draw (3.48,-7.02)-- (8.98,-7.02);
		\draw [dash pattern=on 5pt off 5pt] (3.48,-4.02)-- (3.48,-7.52);
		\draw [dash pattern=on 5pt off 5pt] (7.5,-4.)-- (7.5,-7.5);
		\draw (10.,-4.5)-- (15.5,-4.5);
		\draw (10.,-7.)-- (15.5,-7.);
		\draw [dash pattern=on 5pt off 5pt] (10.,-4.)-- (10.,-7.5);
		\draw [dash pattern=on 5pt off 5pt] (14.,-4.)-- (14.,-7.5);
		\draw (-3.24,-7.28) node[anchor=north west] {\large $\mathbf{t=0}$};
		\draw (3.24,-7.3) node[anchor=north west] {\large $\mathbf{t=0}$};
		\draw (9.78,-7.32) node[anchor=north west] {\large $\mathbf{t=0}$};
		\draw [line width=1.6pt] (-2.98,-6.32)-- (-2.26,-5.2);
		\draw [line width=1.6pt] (-2.26,-5.2)-- (-0.76,-4.48);
		\draw [line width=1.6pt] (-0.76,-4.48)-- (1.,-6.98);
		\draw [line width=1.6pt] (10.,-6.18)-- (11.2,-4.5);
		\draw (-1.14,-7.8) node[anchor=north west] {\large \textbf{(g)}};
		\draw (5.32,-7.8) node[anchor=north west] {\large \textbf{(h)}};
		\draw (11.86,-7.82) node[anchor=north west] {\large \textbf{(i)}};
		\draw [line width=1.6pt] (3.48,-5.14)-- (4.72,-6.46);
		\draw [line width=1.6pt] (4.72,-6.46)-- (7.5,-7.02);
		\draw [line width=1.6pt] (11.2,-4.5)-- (11.82,-5.74);
		\draw [line width=1.6pt] (11.82,-5.74)-- (14.,-7.);
		\draw (-4.16,-4.18) node[anchor=north west] {\large $\mathbf{V_{max}}$};
		\draw (-4.12,-6.6) node[anchor=north west] {\large $\mathbf{V_{min}}$};
		\begin{scriptsize}
		\draw [fill=qqqqff] (2.02,3.22) circle (1.0pt);
		\draw [fill=qqqqff] (1.7,3.22) circle (1.0pt);
		\draw [fill=qqqqff] (1.4,3.22) circle (1.0pt);
		\draw [fill=qqqqff] (8.54,3.24) circle (1.0pt);
		\draw [fill=qqqqff] (8.22,3.24) circle (1.0pt);
		\draw [fill=qqqqff] (7.92,3.24) circle (1.0pt);
		\draw [fill=qqqqff] (15.,3.22) circle (1.0pt);
		\draw [fill=qqqqff] (14.68,3.22) circle (1.0pt);
		\draw [fill=qqqqff] (14.38,3.22) circle (1.0pt);
		\draw [fill=qqqqff] (2.04,-1.26) circle (1.0pt);
		\draw [fill=qqqqff] (1.72,-1.26) circle (1.0pt);
		\draw [fill=qqqqff] (1.42,-1.26) circle (1.0pt);
		\draw [fill=qqqqff] (8.5,-1.3) circle (1.0pt);
		\draw [fill=qqqqff] (8.18,-1.3) circle (1.0pt);
		\draw [fill=qqqqff] (7.88,-1.3) circle (1.0pt);
		\draw [fill=qqqqff] (15.02,-1.28) circle (1.0pt);
		\draw [fill=qqqqff] (14.7,-1.28) circle (1.0pt);
		\draw [fill=qqqqff] (14.4,-1.28) circle (1.0pt);
		\draw [fill=qqqqff] (-3.,3.7) circle (2.0pt);
		\draw[color=qqqqff] (-3.26,3.63) node {\large $1$};
		\draw [fill=qqqqff] (-1.3,3.7) circle (2.0pt);
		\draw[color=qqqqff] (-0.96,3.73) node {\large $2$};
		\draw[color=black] (-2.25,3.86) node {\large $m_1$};
		\draw [fill=qqqqff] (1.,2.) circle (2.0pt);
		\draw[color=qqqqff] (1.28,1.61) node {\large $3$};
		\draw[color=black] (0.17,3.) node {\large $m_2$};
		\draw [fill=qqqqff] (3.52,3.7) circle (2.0pt);
		\draw[color=qqqqff] (3.26,3.63) node {\large $1$};
		\draw [fill=qqqqff] (7.5,2.02) circle (2.0pt);
		\draw[color=qqqqff] (7.82,1.65) node {\large $2$};
		\draw[color=black] (5.77,2.96) node {\large $m_1$};
		\draw [fill=qqqqff] (9.98,2.7) circle (2.0pt);
		\draw[color=qqqqff] (9.7,2.63) node {\large $1$};
		\draw [fill=qqqqff] (11.22,4.04) circle (2.0pt);
		\draw[color=qqqqff] (11.56,4.09) node {\large $2$};
		\draw[color=black] (10.87,3.06) node {\large $m_1$};
		\draw [fill=qqqqff] (14.,2.) circle (2.0pt);
		\draw[color=qqqqff] (14.26,1.61) node {\large $3$};
		\draw[color=black] (13.1,3.) node {\large $m_2$};
		\draw [fill=qqqqff] (-2.98,-1.5) circle (2.0pt);
		\draw[color=qqqqff] (-3.24,-1.61) node {\large $1$};
		\draw [fill=qqqqff] (-1.76,-1.52) circle (2.0pt);
		\draw[color=qqqqff] (-1.64,-1.83) node {\large $2$};
		\draw[color=black] (-2.43,-1.28) node {\large $m_1$};
		\draw [fill=qqqqff] (-0.76,0.02) circle (2.0pt);
		\draw[color=qqqqff] (-0.69,0.29) node {\large $3$};
		\draw[color=black] (-1.01,-1.08) node {\large $m_2$};
		\draw [fill=qqqqff] (1.,-2.48) circle (2.0pt);
		\draw[color=qqqqff] (1.26,-2.8) node {\large $4$};
		\draw[color=black] (0.55,-1.2) node {\large $m_3$};
		\draw [fill=qqqqff] (3.48,-1.64) circle (2.0pt);
		\draw[color=qqqqff] (3.18,-1.75) node {\large $1$};
		\draw [fill=qqqqff] (5.08,-0.02) circle (2.0pt);
		\draw[color=qqqqff] (5.15,0.25) node {\large $2$};
		\draw[color=black] (4.41,-1.26) node {\large $m_1$};
		\draw [fill=qqqqff] (7.5,-2.52) circle (2.0pt);
		\draw[color=qqqqff] (7.76,-2.89) node {\large $3$};
		\draw[color=black] (6.7,-1.28) node {\large $m_2$};
		\draw [fill=qqqqff] (10.,-0.78) circle (2.0pt);
		\draw[color=qqqqff] (9.66,-0.95) node {\large $1$};
		\draw [fill=qqqqff] (11.,-1.74) circle (2.0pt);
		\draw[color=qqqqff] (11.02,-2.09) node {\large $2$};
		\draw[color=black] (10.83,-1.12) node {\large $m_1$};
		\draw [fill=qqqqff] (12.,0.) circle (2.0pt);
		\draw[color=qqqqff] (12.05,0.26) node {\large $3$};
		\draw[color=black] (11.79,-1.22) node {\large $m_2$};
		\draw [fill=qqqqff] (14.,-2.5) circle (2.0pt);
		\draw[color=qqqqff] (14.24,-2.85) node {\large $4$};
		\draw[color=black] (13.47,-1.28) node {\large $m_3$};
		\draw [fill=qqqqff] (2.04,-5.76) circle (1.0pt);
		\draw [fill=qqqqff] (1.72,-5.76) circle (1.0pt);
		\draw [fill=qqqqff] (1.42,-5.76) circle (1.0pt);
		\draw [fill=qqqqff] (8.5,-5.8) circle (1.0pt);
		\draw [fill=qqqqff] (8.18,-5.8) circle (1.0pt);
		\draw [fill=qqqqff] (7.88,-5.8) circle (1.0pt);
		\draw [fill=qqqqff] (15.02,-5.78) circle (1.0pt);
		\draw [fill=qqqqff] (14.7,-5.78) circle (1.0pt);
		\draw [fill=qqqqff] (14.4,-5.78) circle (1.0pt);
		\draw [fill=qqqqff] (-2.98,-6.32) circle (2.0pt);
		\draw[color=qqqqff] (-3.24,-6.43) node {\large $1$};
		\draw [fill=qqqqff] (-2.26,-5.2) circle (2.0pt);
		\draw[color=qqqqff] (-2.44,-4.91) node {\large $2$};
		\draw[color=black] (-2.31,-6.06) node {\large $m_1$};
		\draw [fill=qqqqff] (-0.76,-4.48) circle (2.0pt);
		\draw[color=qqqqff] (-0.74,-4.15) node {\large $3$};
		\draw[color=black] (-1.49,-5.28) node {\large $m_2$};
		\draw [fill=qqqqff] (1.,-6.98) circle (2.0pt);
		\draw[color=qqqqff] (1.26,-7.30) node {\large $4$};
		\draw[color=black] (0.55,-5.7) node {\large $m_3$};
		\draw [fill=qqqqff] (3.48,-5.14) circle (2.0pt);
		\draw[color=qqqqff] (3.18,-5.25) node {\large $1$};
		\draw [fill=qqqqff] (7.5,-7.02) circle (2.0pt);
		\draw[color=qqqqff] (7.76,-7.39) node {\large $3$};
		\draw [fill=qqqqff] (10.,-6.18) circle (2.0pt);
		\draw[color=qqqqff] (9.66,-6.35) node {\large $1$};
		\draw [fill=qqqqff] (11.2,-4.5) circle (2.0pt);
		\draw[color=qqqqff] (11.2,-4.2) node {\large $2$};
		\draw[color=black] (10.35,-5.18) node {\large $m_1$};
		\draw [fill=qqqqff] (14.,-7.) circle (2.0pt);
		\draw[color=qqqqff] (14.24,-7.35) node {\large $4$};
		\draw [fill=qqqqff] (4.72,-6.46) circle (2.0pt);
		\draw[color=qqqqff] (4.48,-6.73) node {\large $2$};
		\draw[color=black] (4.53,-5.82) node {\large $m_1$};
		\draw[color=black] (5.71,-6.42) node {\large $m_2$};
		\draw [fill=qqqqff] (11.82,-5.74) circle (2.0pt);
		\draw[color=qqqqff] (11.58,-6.07) node {\large $3$};
		\draw[color=black] (11.9,-5.14) node {\large $m_2$};
		\draw[color=black] (13.2,-6.28) node {\large $m_3$};
		\end{scriptsize}
		\end{tikzpicture}
	}
	\caption{\label{fig:head} Ten possible head patterns:
		(a) \FD (b) \D (c) \PUD (d) \FUD (e) \UD (f) \PDUD 
		(g) \PUUD (h) \PDD (i) \UPDD and (j) empty (not depicted).
	}
\end{figure}

\begin{figure}
	\centering
	\resizebox{\textwidth}{!}{
		\begin{tikzpicture}[line cap=round,line join=round,>=triangle 45,x=1.0cm,y=1.0cm]
		\draw (-3.,4.5)-- (2.,4.5);
		\draw (-3.,2.)-- (2.,2.);
		\draw [dash pattern=on 5pt off 5pt] (2.,5.)-- (2.,1.5);
		\draw [dash pattern=on 5pt off 5pt] (-2.,5.)-- (-2.,1.5);
		\draw (-4.2,4.9) node[anchor=north west] {\large $ \mathbf{V_{max}}$};
		\draw (-4.18,2.38) node[anchor=north west] {\large $ \mathbf{V_{min}}$};
		\draw (1.76,1.78) node[anchor=north west] {\large $\mathbf{t_{max}}$};
		\draw (-4.2,0.4) node[anchor=north west] {\large $\mathbf{V_{max}}$};
		\draw (-4.18,-2.12) node[anchor=north west] {\large $\mathbf{V_{min}}$};
		\draw [line width=1.6pt] (-2.,2.)-- (2.,3.3);
		\draw (-0.8,1.34) node[anchor=north west] {\large \textbf{(a)}};
		\draw (-3.,0.)-- (2.,0.);
		\draw (-3.,-2.5)-- (2.,-2.5);
		\draw [dash pattern=on 5pt off 5pt] (1.96,0.38)-- (1.96,-3.12);
		\draw [dash pattern=on 5pt off 5pt] (-2.,0.5)-- (-2.,-3.);
		\draw (1.78,-2.86) node[anchor=north west] {\large $\mathbf{t_{max}}$};
		\draw [line width=2.pt] (-2.,-2.5)-- (0.74,0.);
		\draw [line width=1.6pt] (0.74,0.)-- (1.94,-1.1);
		\draw (-0.08,-3.12) node[anchor=north west] {\large \textbf{(d)}};
		\draw (3.04,0.)-- (8.,0.);
		\draw (3.04,-2.5)-- (8.,-2.5);
		\draw [dash pattern=on 5pt off 5pt] (8.,0.5)-- (8.,-3.);
		\draw [dash pattern=on 5pt off 5pt] (4.04,0.5)-- (4.04,-3.);
		\draw (7.78,-2.74) node[anchor=north west] {\large $\mathbf{t_{max}}$};
		\draw [line width=1.6pt] (4.04,-2.5)-- (8.,0.);
		\draw (5.36,-3.18) node[anchor=north west] {\large \textbf{(e)}};
		\draw (-2.98,-4.52)-- (2.02,-4.52);
		\draw (-2.98,-7.02)-- (2.02,-7.02);
		\draw [dash pattern=on 5pt off 5pt] (2.02,-4.02)-- (2.02,-7.52);
		\draw [dash pattern=on 5pt off 5pt] (-1.98,-4.02)-- (-1.98,-7.52);
		\draw (1.82,-7.3) node[anchor=north west] {\large $\mathbf{t_{max}}$};
		\draw [line width=1.6pt] (2.02,-7.02)-- (0.52,-4.52);
		\draw [line width=1.6pt] (0.52,-4.52)-- (-1.04,-5.14);
		\draw [line width=1.6pt] (-1.04,-5.14)-- (-1.98,-7.02);
		\draw (3.02,-4.52)-- (8.02,-4.52);
		\draw (3.02,-7.02)-- (8.02,-7.02);
		\draw [dash pattern=on 5pt off 5pt] (8.02,-4.02)-- (8.02,-7.52);
		\draw [dash pattern=on 5pt off 5pt] (4.02,-4.02)-- (4.02,-7.52);
		\draw (7.82,-7.3) node[anchor=north west] {\large $\mathbf{t_{max}}$};
		\draw [line width=2.pt] (4.02,-7.02)-- (5.,-5.86);
		\draw [line width=1.6pt] (5.,-5.86)-- (6.14,-7.02);
		\draw [line width=2.pt] (6.14,-7.02)-- (8.02,-4.52);
		\draw (-0.12,-7.62) node[anchor=north west] {\large \textbf{(g)}};
		\draw (5.92,-7.74) node[anchor=north west] {\large \textbf{(h)}};
		\draw (3.,4.5)-- (8.,4.5);
		\draw (3.,2.)-- (8.,2.);
		\draw [dash pattern=on 5pt off 5pt] (4.,5.)-- (4.,1.5);
		\draw [dash pattern=on 5pt off 5pt] (8.,5.)-- (8.,1.5);
		\draw (9.,4.5)-- (14.,4.5);
		\draw (9.,2.)-- (14.,2.);
		\draw [dash pattern=on 5pt off 5pt] (13.96,4.88)-- (13.96,1.38);
		\draw [dash pattern=on 5pt off 5pt] (10.,5.)-- (10.,1.5);
		\draw (7.78,1.7) node[anchor=north west] {\large $\mathbf{t_{max}}$};
		\draw (13.78,1.64) node[anchor=north west] {\large $\mathbf{t_{max}}$};
		\draw [line width=1.6pt] (4.,2.)-- (5.,3.5);
		\draw [line width=1.6pt] (5.,3.5)-- (8.,4.5);
		\draw [line width=1.6pt] (10.,2.)-- (11.24,4.5);
		\draw [line width=1.6pt] (11.24,4.5)-- (13.12,3.76);
		\draw [line width=1.6pt] (13.12,3.76)-- (14.,2.);
		\draw (5.78,1.38) node[anchor=north west] {\large \textbf{(b)}};
		\draw (11.84,1.38) node[anchor=north west] {\large \textbf{(c)}};
		\draw (9.,0.)-- (14.,0.);
		\draw (9.,-2.5)-- (14.,-2.5);
		\draw [dash pattern=on 5pt off 5pt] (10.,0.5)-- (10.,-3.);
		\draw [dash pattern=on 5pt off 5pt] (14.,0.5)-- (14.,-3.);
		\draw (13.78,-2.8) node[anchor=north west] {\large $\mathbf{t_{max}}$};
		\draw [line width=1.6pt] (10.,-2.5)-- (11.5,-1.);
		\draw [line width=1.6pt] (11.5,-1.)-- (14.,-2.5);
		\draw (11.86,-3.16) node[anchor=north west] {\large \textbf{(f)}};
		\draw (9.,-4.5)-- (14.,-4.5);
		\draw (9.,-7.)-- (14.,-7.);
		\draw [dash pattern=on 5pt off 5pt] (10.,-4.)-- (10.,-7.5);
		\draw [line width=1.6pt] (10.,-7.)-- (11.22,-4.5);
		\draw [line width=1.6pt] (11.22,-4.5)-- (12.4,-5.5);
		\draw [line width=1.6pt] (12.4,-5.5)-- (13.,-7.);
		\draw [line width=1.6pt] (13.,-7.)-- (14.,-4.5);
		\draw [dash pattern=on 5pt off 5pt] (14.,-4.)-- (14.,-7.5);
		\draw (11.88,-7.66) node[anchor=north west] {\large \textbf{(i)}};
		\draw (-4.22,-4.18) node[anchor=north west] {\large $\mathbf{V_{max}}$};
		\draw (-4.16,-6.62) node[anchor=north west] {\large $\mathbf{V_{min}}$};
		\draw (13.8,-7.3) node[anchor=north west] {\large $\mathbf{t_{max}}$};
		\begin{scriptsize}
		\draw [fill=qqqqff] (-2.,2.) circle (2.0pt);
		\draw[color=qqqqff] (-2.26,1.63) node {\large $1$};
		\draw [fill=qqqqff] (2.,3.3) circle (2.0pt);
		\draw[color=qqqqff] (2.2,3.15) node {\large $2$};
		\draw[color=black] (0.41,2.46) node {\large $m_1$};
		\draw [fill=qqqqff] (-2.86,3.2) circle (1.0pt);
		\draw [fill=qqqqff] (-2.52,3.2) circle (1.0pt);
		\draw [fill=qqqqff] (-2.2,3.2) circle (1.0pt);
		\draw [fill=qqqqff] (-2.,-2.5) circle (2.0pt);
		\draw[color=qqqqff] (-2.22,-2.87) node {\large $1$};
		\draw [fill=qqqqff] (0.74,0.) circle (2.0pt);
		\draw[color=qqqqff] (1.,0.33) node {\large $2$};
		\draw[color=black] (-0.23,-1.42) node {\large $m_1$};
		\draw [fill=qqqqff] (1.94,-1.1) circle (2.0pt);
		\draw[color=qqqqff] (2.3,-1.21) node {\large $3$};
		\draw[color=black] (1.05,-.9) node {\large $m_2$};
		\draw [fill=qqqqff] (-2.86,-1.3) circle (1.0pt);
		\draw [fill=qqqqff] (-2.52,-1.3) circle (1.0pt);
		\draw [fill=qqqqff] (-2.2,-1.3) circle (1.0pt);
		\draw [fill=qqqqff] (4.04,-2.5) circle (2.0pt);
		\draw[color=qqqqff] (3.78,-2.87) node {\large $1$};
		\draw [fill=qqqqff] (8.,0.) circle (2.0pt);
		\draw[color=qqqqff] (8.24,0.15) node {\large $2$};
		\draw[color=black] (6.33,-1.54) node {\large $m_1$};
		\draw [fill=qqqqff] (3.2,-1.28) circle (1.0pt);
		\draw [fill=qqqqff] (3.54,-1.28) circle (1.0pt);
		\draw [fill=qqqqff] (3.86,-1.28) circle (1.0pt);
		\draw [fill=qqqqff] (2.02,-7.02) circle (2.0pt);
		\draw[color=qqqqff] (2.26,-7.35) node {\large $4$};
		\draw [fill=qqqqff] (0.52,-4.52) circle (2.0pt);
		\draw[color=qqqqff] (0.76,-4.33) node {\large $3$};
		\draw[color=black] (1.6,-5.62) node {\large $m_3$};
		\draw [fill=qqqqff] (-1.04,-5.14) circle (2.0pt);
		\draw[color=qqqqff] (-1.36,-5.11) node {\large $2$};
		\draw[color=black] (-0.27,-5.26) node {\large $m_2$};
		\draw [fill=qqqqff] (-1.98,-7.02) circle (2.0pt);
		\draw[color=qqqqff] (-2.22,-7.39) node {\large $1$};
		\draw[color=black] (-1.23,-6.42) node {\large $m_1$};
		\draw [fill=qqqqff] (4.02,-7.02) circle (2.0pt);
		\draw[color=qqqqff] (3.82,-7.39) node {\large $1$};
		\draw [fill=qqqqff] (5.,-5.86) circle (2.0pt);
		\draw[color=qqqqff] (5.04,-5.61) node {\large $2$};
		\draw[color=black] (4.75,-6.76) node {\large $m_1$};
		\draw [fill=qqqqff] (6.14,-7.02) circle (2.0pt);
		\draw[color=qqqqff] (6.26,-7.49) node {\large $3$};
		\draw[color=black] (5.79,-6.28) node {\large $m_2$};
		\draw [fill=qqqqff] (8.02,-4.52) circle (2.0pt);
		\draw[color=qqqqff] (8.32,-4.53) node {\large $4$};
		\draw[color=black] (7.33,-6.08) node {\large $m_3$};
		\draw [fill=qqqqff] (-2.82,-5.78) circle (1.0pt);
		\draw [fill=qqqqff] (-2.48,-5.78) circle (1.0pt);
		\draw [fill=qqqqff] (-2.16,-5.78) circle (1.0pt);
		\draw [fill=qqqqff] (3.16,-5.78) circle (1.0pt);
		\draw [fill=qqqqff] (3.5,-5.78) circle (1.0pt);
		\draw [fill=qqqqff] (3.82,-5.78) circle (1.0pt);
		\draw [fill=qqqqff] (14.,2.) circle (2.0pt);
		\draw[color=qqqqff] (14.22,1.63) node {\large $4$};
		\draw [fill=qqqqff] (4.,2.) circle (2.0pt);
		\draw[color=qqqqff] (3.82,1.57) node {\large $1$};
		\draw [fill=qqqqff] (5.,3.5) circle (2.0pt);
		\draw[color=qqqqff] (5.0,3.75) node {\large $2$};
		\draw[color=black] (4.99,2.78) node {\large $m_1$};
		\draw [fill=qqqqff] (8.,4.5) circle (2.0pt);
		\draw[color=qqqqff] (8.25,4.45) node { \large $3$};
		\draw[color=black] (6.83,3.7) node {\large $m_2$};
		\draw [fill=qqqqff] (10.,2.) circle (2.0pt);
		\draw[color=qqqqff] (9.8,1.63) node {\large $1$};
		\draw [fill=qqqqff] (11.24,4.5) circle (2.0pt);
		\draw[color=qqqqff] (11.24,4.75) node { \large $2$};
		\draw[color=black] (11.13,3.32) node {\large $m_1$};
		\draw [fill=qqqqff] (13.12,3.76) circle (2.0pt);
		\draw[color=qqqqff] (13.24,3.95) node {\large $3$};
		\draw[color=black] (11.79,4.04) node {\large $m_2$};
		\draw[color=black] (13.19,2.94) node {\large $m_3$};
		\draw [fill=qqqqff] (3.16,3.22) circle (1.0pt);
		\draw [fill=qqqqff] (3.5,3.22) circle (1.0pt);
		\draw [fill=qqqqff] (3.82,3.22) circle (1.0pt);
		\draw [fill=qqqqff] (9.16,3.2) circle (1.0pt);
		\draw [fill=qqqqff] (9.5,3.2) circle (1.0pt);
		\draw [fill=qqqqff] (9.82,3.2) circle (1.0pt);
		\draw [fill=qqqqff] (10.,-2.5) circle (2.0pt);
		\draw[color=qqqqff] (9.78,-2.85) node {\large $1$};
		\draw [fill=qqqqff] (11.5,-1.) circle (2.0pt);
		\draw[color=qqqqff] (11.72,-0.83) node {\large $2$};
		\draw[color=black] (10.97,-2.06) node {\large $m_1$};
		\draw [fill=qqqqff] (14.,-2.5) circle (2.0pt);
		\draw[color=qqqqff] (14.28,-2.83) node {\large $3$};
		\draw[color=black] (13.15,-1.66) node {\large $m_2$};
		\draw [fill=qqqqff] (9.18,-1.3) circle (1.0pt);
		\draw [fill=qqqqff] (9.52,-1.3) circle (1.0pt);
		\draw [fill=qqqqff] (9.84,-1.3) circle (1.0pt);
		
		\draw [fill=qqqqff] (9.18,-5.78) circle (1.0pt);
		\draw [fill=qqqqff] (9.52,-5.78) circle (1.0pt);
		\draw [fill=qqqqff] (9.84,-5.78) circle (1.0pt);

		\draw [fill=qqqqff] (14.,-4.5) circle (2.0pt);
		\draw[color=qqqqff] (14.28,-4.55) node {\large $5$};
		\draw [fill=qqqqff] (10.,-7.) circle (2.0pt);
		\draw[color=qqqqff] (9.8,-7.41) node {\large $1$};
		\draw [fill=qqqqff] (11.22,-4.5) circle (2.0pt);
		\draw[color=qqqqff] (11.3,-4.25) node {\large $2$};
		\draw[color=black] (10.35,-5.48) node {\large $m_1$};
		\draw [fill=qqqqff] (12.4,-5.5) circle (2.0pt);
		\draw[color=qqqqff] (12.68,-5.41) node {\large $3$};
		\draw[color=black] (11.59,-5.42) node {\large $m_2$};
		\draw [fill=qqqqff] (13.,-7.) circle (2.0pt);
		\draw[color=qqqqff] (13.,-7.49) node {\large $4$};
		\draw[color=black] (12.27,-6.48) node {\large $m_3$};
		\draw[color=black] (13.65,-6.42) node {\large $m_4$};
		\end{scriptsize}
		\end{tikzpicture}
	}
	\caption{\label{fig:tail} Ten possible tail patterns:
		(a) \PU (b) \PUU (c) \UPDD (d) \UPD (e) \U (f) \PUD (g) \PUUD (h) \PUDU (i) \UPDDU and (j) empty (not depicted).
	}
\end{figure}

\begin{theorem}\label{thm:standard-form}
For every safe schedule $\sigma$ in a one-dimensional \name 
there exists a safe schedule $\sigma'$ 
whose head section matches one of the patterns in Figure \ref{fig:head},
tail section matches one of the patterns in Figure \ref{fig:tail},
and $\pi(\sigma') \leq \pi(\sigma)$ holds.
Furthermore, it suffices to consider only \patnum combinations of these head and tail patterns,
and the length of all of them is at most five.
\end{theorem}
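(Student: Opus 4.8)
The plan is to start from an arbitrary safe schedule $\sigma$ and, by Propositions~\ref{prop:angular} and~\ref{prop:zero-mode}, assume throughout that $\sigma$ is angular and contains at most one \flatm timed action, occurring at the very beginning. I decompose $\sigma$ into its head, \shots and tail sections and reshape each one using the cost-nonincreasing, safety-preserving operations (rearrange, shift, shift-down, wedge, and resize) introduced above. The guiding principle is a \emph{flexibility budget}: I keep track of the timed actions whose durations can still be varied freely---the leading \flatm action, the final timed action, and every \flexi---and repeatedly spend this budget, each time either deleting a timed action or pinning an intermediate state to $\vmin$ or $\vmax$, until so little freedom remains that each section must coincide with one of the finitely many shapes of Figures~\ref{fig:head} and~\ref{fig:tail}.

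First I would normalise the \shots section. Using the rearrange operation I sort maximal runs of timed actions of a common trend, and using the shift and shift-down operations I collect all complete \shots into the middle block while pushing the remaining non-leap and partial-leap material into the head and the tail; a further rearrange groups the complete \shots by type. When the \shots section is non-empty this forces the head to end at $\vmin$ and the tail to begin at $\vmin$, so that the two become independent and the \shots contribute nothing beyond their multiset of types.

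The heart of the argument is the budget reduction. As long as two non-overlapping budget items survive---two \flexis, or a \flexi together with the \flatm or final action---I apply the resize operation with parameter $+t$ to one and $-t$ to the other, for a common $t$ ranging over the closed interval $I$ defined before the theorem. Since $\pi(\sigma''')-\pi(\sigma)$ is linear in $t$ on $I$, one endpoint of $I$ is cost-nonincreasing, preserves safety and the horizon $\tmax$, and there a timed action disappears or one more intermediate state reaches the border; in either case the budget strictly shrinks. Iterating leaves at most one free degree of freedom, so every intermediate state of the head and of the tail, save at most one, lies on the boundary, and each section therefore has only a bounded number of timed actions. For the residual three-action configurations in which a lone partial turn resists the resize operation, I invoke the wedge operation, sliding the middle action in its cost-nonincreasing direction until a neighbour vanishes or the run touches $\vmax$.

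It remains to enumerate the surviving shapes and count their combinations. A case distinction on the trends of the first and last timed actions of a section, on whether its free endpoint ($\vinit$ for the head, the time-$\tmax$ state for the tail) sits at a border or in the interior, and on where the single admissible \flexi may lie, shows that the head matches one of the ten patterns of Figure~\ref{fig:head} and the tail one of the ten of Figure~\ref{fig:tail}. Finally, because the \flatm action, the final action and the at-most-one interior turn form a single shared budget, the head and the tail cannot both demand interior flexibility; ruling out the conflicting pairs leaves exactly the \patnum admissible combinations, whose lengths the same analysis bounds by five. I expect this last step to be the main obstacle: one must check that operations applied inside one section never reintroduce flexibility or unsafety in another, and then carry out the coupled case analysis tightly enough to land on precisely these ten-plus-ten patterns, prune $100$ combinations to \patnum, and certify the uniform length bound---with the accompanying examples confirming that no retained pattern is superfluous.
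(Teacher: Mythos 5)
Your proposal is correct and follows essentially the same route as the paper: repeatedly pair non-overlapping \flexis (and the \flatm/final action) under simultaneous shrink--stretch, use linearity of the cost in $t$ to pick a cost-nonincreasing endpoint that either deletes an action or pins a state to the border, resolve the residual overlapping case with the wedge operation, relocate the last free segment with shift/shift-down, and then enumerate the patterns with at most one non-border state to prune $100$ combinations down to \patnum. Your ``flexibility budget'' is just a repackaging of the paper's termination argument, and the pruning you sketch is exactly the paper's count $10\cdot 2 + 3\cdot 8 = \patnum$.
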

\begin{proof}
We will repeatedly apply combination of shrink and stretch operations to \flexis
until we remove all non-overlapping ones.
Note that after each such an application either a timed action is removed or one more state along the run of $\sigma$ becomes equal to $\vmax$ or $\vmin$.
We claim that the following steps will transform $\sigma$ into a suitable $\sigma'$:
\begin{enumerate}
\item as long as there are at least one pair of non-overlapping \flexis then 
shrink one and stretch the other until a timed action is removed or
a new state at the border is created; 
\item once there is only one \flexi left or two overlapping ones, 
use the shift or shift-down operation to move them to the end of the schedule;
\item if the first timed action is flat, pair it with the remaining \flexi to
remove one of them using the shrink-stretch operation combination;
\item if the last state of $run(\sigma)$ is not at the border and 
a \flexi or flat timed action remains after the previous step, 
they should be paired with each other for the shrink-stretch operation combination;
\item if two overlapping \flexis exist, use the wedge operation to resolve them;
\item finally, if the tail section still does not follow any of the patterns,
apply the shift-down operation to the (unique) segment that starts and ends
at $\vmax$. 
\end{enumerate}
A graphical representation of this procedure when applied to an example schedule can be seen in Appendix
\ref{sec:example-transformation}.
It is easy to see that the first step of this procedure will stop eventually because 
$\sigma$ has a finite number of timed actions and states along its run.
The rest of the steps of this procedure just try to reduce the number of possibilities for 
the head and tail sections. 
Note that, apart from the initial state, there can be only one state, along the run
of the resulting $\sigma'$, which is not at the border.
This is because otherwise a shrink-stretch or wedge operation could still be applied.
Drawing all possible patterns with one point not at the border 
and eliminating the ones that are inter-reducible using one of these operations,
results in Figure \ref{fig:head} for the head section and Figure \ref{fig:tail} for the tail section.

If we try to combine all these head and tail pattern together then this would result in $10\cdot 10 = 100$ possible combinations.
However, as just mentioned, there can be only one point not at the border or a \flatm timed action in a schedule so these combinations of head and tail patterns can be reduced further.
In particular, any head pattern can be combined with tail patterns (e) and (j), but only
(b), (e), (j) head patterns can be combined with the remaining tail ones. 
Therefore, there are $10\cdot 2 + 3\cdot 8 = \patnum$ combined patterns and 
it is easy to check that none of them has length larger than five (this is important for the computational complexity stated in Theorem~\ref{thm:constant-factor}).\qed
\end{proof}

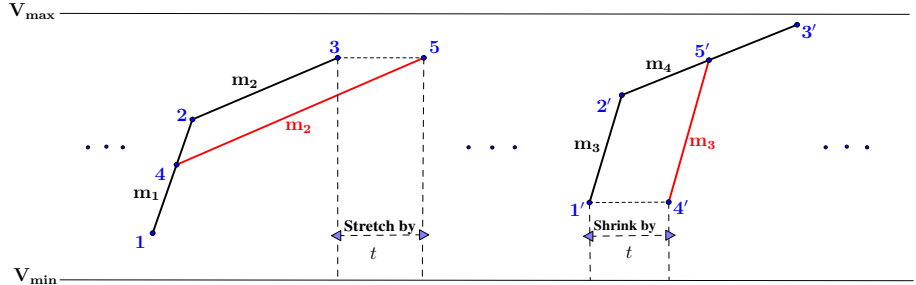
\begin{figure}
	\centering
	\resizebox{\columnwidth}{0.2\textheight}{
		\begin{tikzpicture}[line cap=round,line join=round,>=triangle 45,x=1.0cm,y=1.0cm]
		\draw (-1.,4.)-- (17.35224303060281,3.978710516200727);
		\draw (-1.,-2.)-- (17.437322834272518,-2.0194156425134575);
		\draw [line width=1.2pt] (1.,-0.9485267187798286)-- (1.86,1.6114732812201713);
		\draw [line width=1.2pt] (1.86,1.6114732812201713)-- (5.,3.);
		\draw [line width=1.2pt] (10.44,-0.26)-- (11.14,2.16);
		\draw [line width=1.2pt] (11.14,2.16)-- (14.927468626016232,3.744741056109039);
		\draw [line width=1.2pt,color=ffqqqq] (6.86,3.)-- (1.5176602862940822,0.5924155287932532);
		\draw [dash pattern=on 2pt off 2pt] (5.,3.)-- (6.86,3.);
		\draw (-2.116966043455037,-1.6234456367363685) node[anchor=north west] {\large $\mathbf{V_{min}}$};
		\draw (-2.222197711568463,4.332298057081143) node[anchor=north west] {\large $\mathbf{V_{max}}$};
		\draw [dash pattern=on 4pt off 4pt] (6.86,3.)-- (6.827787770946935,-2.0082431451896805);
		\draw [dash pattern=on 4pt off 4pt] (5.,3.)-- (4.9999933463709825,-2.006318364490517);
		\draw [dash pattern=on 4pt off 4pt] (4.999994677096786,-1.0050546915924148)-- (6.83435906567376,-0.986561543570986);
		\draw [line width=1.2pt,color=ffqqqq] (12.152727272727269,-0.24909090909090986)-- (13.021985819207918,2.947454759156864);
		\draw [dash pattern=on 2pt off 2pt] (10.44,-0.26)-- (12.152727272727269,-0.24909090909090986);
		\draw [dash pattern=on 4pt off 4pt] (10.44,-0.26)-- (10.458276294868353,-2.0120662743914233);
		\draw [dash pattern=on 4pt off 4pt] (12.152727272727269,-0.24909090909090986)-- (12.159538178682746,-2.013857808490751);
		\draw [dash pattern=on 4pt off 4pt] (10.447903891472164,-1.0177105635786279)-- (12.155606396919241,-0.9950978436262381);
		\draw (5.028941833666851,-0.5412834010110604) node[anchor=north west] {\normalsize \textbf{Stretch by}};
		\draw (10.423449964212742,-0.5412834010110604) node[anchor=north west] {\textbf{Shrink by}};
		\begin{scriptsize}
		\draw [fill=qqqqff] (0.,1.) circle (1.0pt);
		\draw [fill=qqqqff] (0.3559781572104634,0.9845757866336147) circle (1.0pt);
		\draw [fill=qqqqff] (1.,-0.9485267187798286) circle (1.5pt);
		\draw[color=qqqqff] (0.7365959388462649,-1.1392902175378734) node {\large $\mathbf{1}$};
		\draw [fill=qqqqff] (1.86,1.6114732812201713) circle (1.5pt);
		\draw[color=qqqqff] (1.6362967859238076,1.7124281577514768) node {\large $\mathbf{2}$};
		\draw[color=black] (0.8733671921014724,-0.08527950585288169) node {\large $\mathbf{m_1}$};
		\draw [fill=qqqqff] (5.,3.) circle (1.5pt);
		\draw[color=qqqqff] (4.933555720144476,3.233055677282726) node {\large $\mathbf{3}$};
		\draw[color=black] (2.9955391657222212,2.4402805288693394) node {\large $\mathbf{m_2}$};
		\draw [fill=qqqqff] (7.827426593263678,0.9845757866336147) circle (1.0pt);
		\draw [fill=qqqqff] (8.34,0.98) circle (1.0pt);
		\draw [fill=qqqqff] (8.82712744034122,0.9845757866336147) circle (1.0pt);
		\draw [fill=qqqqff] (10.44,-0.26) circle (1.5pt);
		\draw[color=qqqqff] (10.177600514463515,-0.4097438158692781) node {\large $\mathbf{1'}$};
		\draw [fill=qqqqff] (11.14,2.16) circle (1.5pt);
		\draw[color=qqqqff] (10.756374689087355,1.9404301053305661) node {\large $\mathbf{2'}$};
		\draw[color=black] (10.396833156366485,1.0021143979858522) node {\large $\mathbf{m_3}$};
		\draw [fill=qqqqff] (14.927468626016232,3.744741056109039) circle (1.5pt);
		\draw[color=qqqqff] (15.18656808607808,3.6065981837931425) node {\large $\mathbf{3'}$};
		\draw[color=black] (11.940230955363392,2.8085913672663296) node {\large $\mathbf{m_4}$};
		\draw [fill=qqqqff] (15.544415588248219,0.9845757866336147) circle (1.0pt);
		\draw [fill=qqqqff] (16.438884767212336,0.9845757866336147) circle (1.0pt);
		\draw [fill=qqqqff] (16.,1.) circle (1.0pt);
		\draw [fill=qqqqff] (6.86,3.) circle (1.5pt);
		\draw[color=qqqqff] (7.0908049164697,3.233055677282726) node {\large $\mathbf{5}$};
		\draw [fill=qqqqff] (1.5176602862940822,0.5924155287932532) circle (1.5pt);
		\draw[color=qqqqff] (1.1627542794133925,0.3794936949814159) node {\large $\mathbf{4}$};
		\draw[color=ffqqqq] (4.1530875149699025,1.4405796817917933) node {\large $\mathbf{m_2}$};
		\draw [fill=qqqqff] (-0.39818213093575305,0.9845757866336147) circle (1.0pt);
		\draw [fill=xdxdff,shift={(4.999994677096786,-1.0050546915924148)},rotate=90] (0,0) ++(0 pt,3.75pt) -- ++(3.2475952641916446pt,-5.625pt)--++(-6.495190528383289pt,0 pt) -- ++(3.2475952641916446pt,5.625pt);
		\draw [fill=xdxdff,shift={(6.83435906567376,-0.986561543570986)},rotate=270] (0,0) ++(0 pt,3.75pt) -- ++(3.2475952641916446pt,-5.625pt)--++(-6.495190528383289pt,0 pt) -- ++(3.2475952641916446pt,5.625pt);
		\draw[color=black] (5.77540906505188,-1.4094446629468238) node {\large $t$};
		\draw [fill=qqqqff] (12.152727272727269,-0.24909090909090986) circle (1.5pt);
		\draw[color=qqqqff] (12.415467492424541,-0.36220520451704044) node {\large $\mathbf{4'}$};
		\draw [fill=qqqqff] (13.021985819207918,2.947454759156864) circle (1.5pt);
		\draw[color=qqqqff] (12.878546662708105,3.180439843226013) node {\large $ \mathbf{5'}$};
		\draw[color=ffqqqq] (12.887315968384222,1.0722688433948029) node {\large $\mathbf{m_3}$};
		\draw [fill=xdxdff,shift={(10.447903891472164,-1.0177105635786279)},rotate=90] (0,0) ++(0 pt,3.75pt) -- ++(3.2475952641916446pt,-5.625pt)--++(-6.495190528383289pt,0 pt) -- ++(3.2475952641916446pt,5.625pt);
		\draw [fill=xdxdff,shift={(12.155606396919241,-0.9950978436262381)},rotate=270] (0,0) ++(0 pt,3.75pt) -- ++(3.2475952641916446pt,-5.625pt)--++(-6.495190528383289pt,0 pt) -- ++(3.2475952641916446pt,5.625pt);
		\draw[color=black] (11.30007164100672,-1.3743674402423485) node {\large $t$};
		\end{scriptsize}
		\end{tikzpicture}
	}
	\caption{\label{fig:shrink-stretch-up-up} Shrink and stretch operations being applied to two up-up \flexis. The 1-2-3 one is stretched by $t$, which results in 1-4-5, and 1'-2'-3' is shrunk by $t$, which results in 4'-5'-3'. Note that 3 and 5 (also, 1' and 4') are the same states but shifted in time. In fact, all states along the run of the schedule stay the same apart from 2 and 2', and as a result the schedule stays safe.
	}
\end{figure}

\begin{figure}
	\centering
	\resizebox{\columnwidth}{0.2\textheight}{
		\begin{tikzpicture}[line cap=round,line join=round,>=triangle 45,x=1.0cm,y=1.0cm]
		\draw (-0.44,4.98)-- (20.,5.);
		\draw (-0.44,-2.)-- (20.,-2.);
		\draw [line width=1.2pt] (1.02,1.38)-- (3.46,3.3);
		\draw [line width=1.2pt] (3.46,3.3)-- (5.32,0.);
		\draw [line width=1.2pt] (11.32,0.62)-- (14.98,4.54);
		\draw [line width=1.2pt] (14.98,4.54)-- (17.88,2.72);
		\draw [line width=1.2pt,color=ffqqqq] (3.46,3.3)-- (4.78,4.26);
		\draw [line width=1.2pt,color=ffqqqq] (4.78,4.26)-- (7.28,0.);
		\draw [line width=1.2pt,color=ffqqqq] (13.32,0.62)-- (16.154407971063947,3.8029577560909025);
		\draw [dash pattern=on 5pt off 5pt] (5.32,0.)-- (5.32,-2.);
		\draw [dash pattern=on 5pt off 5pt] (7.28,0.)-- (7.28,-2.);
		\draw [dash pattern=on 5pt off 5pt] (11.32,0.62)-- (11.34,-2.);
		\draw [dash pattern=on 5pt off 5pt] (13.32,0.62)-- (13.32,-2.);
		\draw [dash pattern=on 5pt off 5pt] (5.32,-1.34)-- (7.28,-1.3417923868518333);
		\draw [dash pattern=on 5pt off 5pt] (11.334656488549617,-1.3)-- (13.32,-1.3105263157894735);
		\draw [dash pattern=on 5pt off 5pt] (5.32,0.)-- (7.28,0.);
		\draw [dash pattern=on 5pt off 5pt] (11.32,0.62)-- (13.32,0.62);
		\draw (5.28,-0.8) node[anchor=north west] {\large \textbf{Stretch by}};
		\draw (11.33,-0.74) node[anchor=north west] {\large \textbf{Shrink by}};
		\draw (-1.76,-1.65) node[anchor=north west] { \Large $\mathbf{V_{min}}$};
		\draw (-1.80,5.35) node[anchor=north west] {\Large $\mathbf{V_{max}}$};
		\begin{scriptsize}
		\draw [fill=qqqqff] (1.02,1.38) circle (1.5pt);
		\draw[color=qqqqff] (0.72,1.45) node {\Large $\mathbf{1}$};
		\draw [fill=qqqqff] (3.46,3.3) circle (1.5pt);
		\draw[color=qqqqff] (3.10,3.41) node {\Large $\mathbf{2}$};
		\draw[color=black] (1.9,2.56) node {\Large $\mathbf{m_i}$};
		\draw [fill=qqqqff] (5.32,0.) circle (1.5pt);
		\draw[color=qqqqff] (5.,-0.33) node {\Large $\mathbf{3}$};
		\draw[color=black] (3.75,1.46) node {\Large $\mathbf{m_{i+1}}$};
		\draw [fill=qqqqff] (11.32,0.62) circle (1.5pt);
		\draw[color=qqqqff] (10.98,0.25) node {\Large $\mathbf{4}$};
		\draw [fill=qqqqff] (14.98,4.54) circle (1.5pt);
		\draw[color=qqqqff] (14.6,4.6) node {\Large $\mathbf{5}$};
		\draw[color=black] (12.71,2.74) node {\Large $\mathbf{m_j}$};
		\draw [fill=qqqqff] (17.88,2.72) circle (1.5pt);
		\draw[color=qqqqff] (18.2,2.59) node {\Large $\mathbf{6}$};
		\draw[color=black] (17.25,3.56) node {\Large $\mathbf{m_{j+1}}$};
		\draw [fill=qqqqff] (4.78,4.26) circle (1.5pt);
		\draw[color=qqqqff] (4.56,4.47) node {\Large $\mathbf{2'}$};
		\draw [fill=qqqqff] (7.28,0.) circle (1.5pt);
		\draw[color=qqqqff] (7.72,-0.27) node {\Large $\mathbf{3'}$};
		\draw[color=ffqqqq] (6.65,2.24) node {\Large $\mathbf{m_{i+1}}$};
		\draw [fill=qqqqff] (13.32,0.62) circle (1.5pt);
		\draw[color=qqqqff] (13.68,0.25) node {\Large $\mathbf{4'}$};
		\draw [fill=qqqqff] (16.154407971063947,3.8029577560909025) circle (1.5pt);
		\draw[color=qqqqff] (16.4,4.07) node {\Large $\mathbf{5'}$};
		\draw[color=ffqqqq] (15.1,1.86) node {\Large $\mathbf{m_j}$};
		\draw [fill=xdxdff,shift={(5.32,-1.34)},rotate=90] (0,0) ++(0 pt,3.75pt) -- ++(3.2475952641916446pt,-5.625pt)--++(-6.495190528383289pt,0 pt) -- ++(3.2475952641916446pt,5.625pt);
		\draw [fill=xdxdff,shift={(7.28,-1.3417923868518333)},rotate=270] (0,0) ++(0 pt,3.75pt) -- ++(3.2475952641916446pt,-5.625pt)--++(-6.495190528383289pt,0 pt) -- ++(3.2475952641916446pt,5.625pt);
		\draw[color=black] (6.24,-1.75) node {\Large $t$};
		\draw [fill=xdxdff,shift={(11.334656488549617,-1.3)},rotate=90] (0,0) ++(0 pt,3.75pt) -- ++(3.2475952641916446pt,-5.625pt)--++(-6.495190528383289pt,0 pt) -- ++(3.2475952641916446pt,5.625pt);
		\draw [fill=xdxdff,shift={(13.32,-1.3105263157894735)},rotate=270] (0,0) ++(0 pt,3.75pt) -- ++(3.2475952641916446pt,-5.625pt)--++(-6.495190528383289pt,0 pt) -- ++(3.2475952641916446pt,5.625pt);
		\draw[color=black] (12.24,-1.69) node {\Large $t$};
		\draw [fill=black] (-0.46,1.38) circle (1.0pt);
		\draw [fill=black] (-0.08,1.36) circle (1.0pt);
		\draw [fill=black] (0.3,1.36) circle (1.0pt);
		\draw [fill=black] (8.86,1.42) circle (1.0pt);
		\draw [fill=black] (9.32,1.42) circle (1.0pt);
		\draw [fill=black] (9.76,1.42) circle (1.0pt);
		\draw [fill=black] (18.62,1.4) circle (1.0pt);
		\draw [fill=black] (19.06,1.4) circle (1.0pt);
		\draw [fill=black] (19.48,1.4) circle (1.0pt);
		\end{scriptsize}
		\end{tikzpicture}
	}
	\caption{\label{fig:shrink-stretch-up-down}Shrink and stretch operations being applied to two up-down \flexis.}
\end{figure}
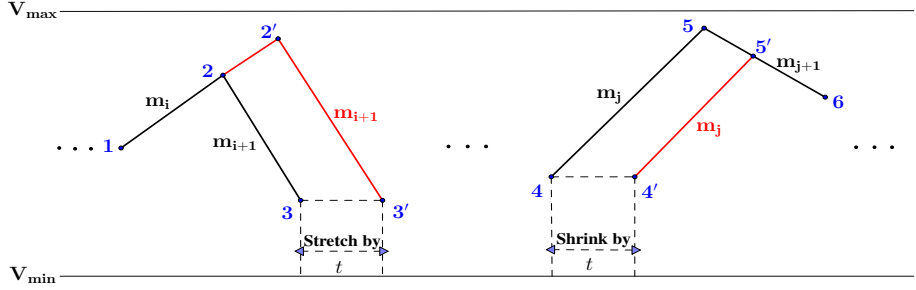 

\section{Complexity of Optimal Control in One-dimension}
\label{sec:optimal-one-dimension}

We start with considering the easy case of infinite time horizons, before turning to the interesting case of finite time horizons.

\subsection{Infinite time horizon}
First let us consider the case $\modeszero = \emptyset$.
If also $\modesup \times \modesdown = \emptyset$ then there are no safe schedules with infinite horizon at all.
Otherwise, let $(i',j') = \argmin_{(i,j) \in \modesup \times \modesdown} \cscost{i,j} /\cstime{i,j}$.
Let us pick any mode $m^{-} \in \modesdown$ and denote $t^{-} := (\vmin-\vinit)/A(m^{-})$. 
Consider the infinite schedule $\sigma$,
which starts with the timed action $(m^{-},t^{-})$ followed by infinitely many complete \shots of type $(i',j')$. 
Obviously, at all times $t = t^{-} + k \cdot \cstime{i',j'}$ where $k \in \mathbb{N}$,
$\sigma$ is more expensive by at most $\pi_d(m^{-}) + \pi_c(m^{-}) t^{-}$ from the cheapest
schedule with time horizon $t$.
Consequently, as $k \to \infty$, this shows that the limit superior of the average cost cannot be smaller than $\cscost{i',j'} / \cstime{i',j'}$. At the same time, $\sigma$ realises this long-time average.

If $\modeszero \neq \emptyset$, then let $m' = \min_{m\in \modeszero} \pi_c(m)$ be the \flatm with the lowest continuous cost to run.
We claim that if $\pi_c(m') < \cscost{i',j'} / \cstime{i',j'}$ or $\modesup \times \modesdown = \emptyset$
then an optimal safe schedule is simply $(m',\infty)$, whose limit-average cost is $\pi_c(m')$,
and otherwise $\sigma$ defined above is an optimal safe schedule.
This is because, if $\pi_c(m') < \cscost{i',j'} / \cstime{i',j'}$, then, 
at any time point of $\sigma$ where a \shot of some type $(i,j)$ is used, 
removing this \shot and increasing the time $m'$ is used for by $\cstime{i,j}$ reduces the total cost up to this time point. 

Taking into account that $\argmin_{(i,j) \in \modesup \times \modesdown} \cscost{i,j} /\cstime{i,j}$ can be computed using logarithmic space (because multiplication, division and comparison can be \cite{DBLP:journals/ita/ChiuDL01}) we get the following theorem.

\begin{theorem}
An optimal safe infinite schedule for one-dimensional \names can be computed in deterministic {\sc LogSpace}.
\end{theorem}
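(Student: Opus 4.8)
The plan is to prove the theorem in two stages: first a structural characterisation that pins down the optimal limit-average value together with a finite (constant-size) description of a schedule achieving it, and then a verification that extracting this description from the input stays within logarithmic space. The characterisation is the one developed in the analysis preceding the statement: writing $R_{\mathrm{leap}} := \cscost{i',j'}/\cstime{i',j'}$ for the best \shot ratio (read as $+\infty$ when $\modesup \times \modesdown = \emptyset$) and $R_{\mathrm{flat}} := \min_{m \in \modeszero}\pi_c(m)$ (read as $+\infty$ when $\modeszero = \emptyset$), the optimal value is $R^* = \min(R_{\mathrm{leap}}, R_{\mathrm{flat}})$, realised either by $(m',\infty)$ for the cheapest \flatm $m'$, or by the initial down-move $(m^-,t^-)$ followed by infinitely many complete \shots of type $(i',j')$.

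For the upper bound (the easy direction) I would verify safety and compute the limit-average of each candidate. The schedule $(m',\infty)$ keeps the state at $\vinit \in S$ forever and, by the definition of $\pi_{avg}$ for a finite schedule with $t_k=\infty$, has limit-average $\pi_c(m') = R_{\mathrm{flat}}$. For the \shot schedule, the initial action lands exactly at $\vmin$ (and $t^- \geq 0$ since $\vinit \geq \vmin$ and $A(m^-) < 0$); from $\vmin$ every complete \shot of type $(i',j')$ returns to $\vmin$ while staying inside $S$, and the $k$-th prefix has cost $\pi_d(m^-)+\pi_c(m^-)t^- + k\,\cscost{i',j'}$ over time $t^- + k\,\cstime{i',j'}$, so the ratio tends to $R_{\mathrm{leap}}$.

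The crux is the matching lower bound: no safe infinite schedule beats $R^*$. The approach is an averaging argument exploiting that in one dimension the position is confined to $S$, so over any prefix of duration $T$ the signed displacement $\sum_m A(m)t_m$ has absolute value at most $\vmax-\vmin$; hence the time-rates $x_m = t_m/T$ satisfy, up to an $O(1/T)$ error, $\sum_m x_m = 1$ and $\sum_m A(m)x_m = 0$. Minimising the continuous cost-rate $\sum_m \pi_c(m)x_m$ over this polytope is a linear programme whose basic optima use either a single \flatm (value $R_{\mathrm{flat}}$) or one up- and one down-mode in the ratio forced by $\sum_m A(m)x_m=0$ (value exactly the \emph{continuous} part of a \shot ratio). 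The main obstacle I expect is that the limit-average also charges the discrete switching cost $\sum_i \pi_d(m_i)$, which this rate-LP ignores. I would handle this by a counting/amortisation argument: since each maximal up-run covers at most $\vmax-\vmin$, a prefix of duration $T$ forces at least $T/\cstime{i',j'}-O(1)$ up-down alternations, and charging the (nonnegative) discrete costs over these alternations reconstructs precisely the missing term $(\pi_d(i')+\pi_d(j'))/\cstime{i',j'}$; equivalently, following the preceding analysis, one bounds the finite-horizon optimum by $OPT(t) \geq R^* t - O(1)$, divides by $t$, and takes $\limsup$. Mixtures of \shot types give weighted averages of ratios, which are $\geq$ the minimum, and partial \shots only raise the switch frequency, so neither can improve on $R^*$.

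Finally, for the complexity claim I would observe that the whole optimal schedule is described by $O(1)$ pieces of data — one bit choosing the flat case versus the \shot case, and the mode indices $m'$, $m^-$, $i'$, $j'$ — each obtained as an \argmin over $M$ or $\modesup \times \modesdown$. Such an arg-min loop stores only a constant number of mode indices (logarithmically many bits) and, for each candidate, recomputes and compares the relevant quantity; comparing two \shot ratios $\cscost{i,j}/\cstime{i,j}$ reduces to comparing the cross-products $\cscost{i,j}\cdot\cstime{i'',j''}$ against $\cscost{i'',j''}\cdot\cstime{i,j}$, where $\cstime{m} = |\vdiff/A(m)|$ and $\cscost{m} = \pi_d(m)+\pi_c(m)\cstime{m}$ are fixed arithmetic expressions in the rational inputs. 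Since integer multiplication, division and comparison lie in {\sc LogSpace}~\cite{DBLP:journals/ita/ChiuDL01}, and the loop composes only a constant number of such operations while storing indices, the entire computation runs in deterministic logarithmic space, which completes the proof.
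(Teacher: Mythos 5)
Your proposal is correct and follows essentially the same route as the paper: it identifies the optimal value as the minimum of the cheapest \flatm rate and the cheapest \shot ratio $\cscost{i',j'}/\cstime{i',j'}$, exhibits the same two witness schedules (the flat mode run forever, or a \D prefix to $\vmin$ followed by repeated \shots of the best type), and obtains {\sc LogSpace} by computing the \argmin via cross-multiplication using logspace arithmetic \cite{DBLP:journals/ita/ChiuDL01}. Your lower-bound argument (the rate-LP over $\sum_m x_m=1$, $\sum_m A(m)x_m=0$ plus amortising discrete costs over the forced up--down alternations) is a more explicit rendering of the step the paper dispatches in one sentence by comparing prefix costs of $\sigma$ to the finite-horizon optimum, and lands in the same place.
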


\subsection{Finite Time Horizon}

Due to Theorem \ref{thm:np-hard}, we already know that the decision problem for optimal schedules in one-dimensional \names is at least NP-hard.
Here, we show that the problem is NP-complete by showing that an optimal schedule exists and that each section of an optimal schedule can be guessed.

Note that the existence of an optimal schedule for the one-dimensional case sets it apart from the general case.
In Example \ref{ex:no-optimal}, we have shown that optimal schedules are not even guaranteed to exist for two-dimensional \names.

\begin{theorem}
For any one-dimensional \names \automata and $\tmax \geq 0$, there exists an optimal schedule with time horizon $\tmax$, and 
checking for the existence of an optimal schedule with cost $\leq C$ is {\sc NP}-complete. (When $\tmax$ and $C$ are given in binary.)
\end{theorem}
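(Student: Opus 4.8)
The plan is to treat the two claims separately, building both on the normal form of Theorem~\ref{thm:standard-form}. Since NP-hardness is already established in Theorem~\ref{thm:np-hard}, I would concentrate on (a) the existence of an optimal schedule and (b) membership in NP. Theorem~\ref{thm:standard-form} tells me that every safe schedule can be replaced, without increasing its cost, by one whose head and tail each have length at most five and match one of the \patnum admissible pattern combinations, with a middle \emph{leaps section} consisting solely of complete leaps. The key observation is that such a schedule is fully determined by a finite amount of \emph{discrete} data --- the chosen head and tail patterns, the concrete modes filling their up/down/flat slots, and, for each leap type $(m,m') \in \modesup \times \modesdown$, the number $n_{m,m'}$ of leaps of that type --- together with a bounded number of \emph{continuous} time parameters, namely the durations of the head and tail actions and of the optional initial \flatm action.

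For existence, I would fix an arbitrary discrete configuration. Each complete leap of type $(m,m')$ has fixed duration $\cstime{m,m'}$ and cost $\cscost{m,m'}$, so the leaps section contributes a fixed time $\sum_{(m,m')} n_{m,m'}\cstime{m,m'}$ and a fixed cost $\sum_{(m,m')} n_{m,m'}\cscost{m,m'}$. The only remaining freedom lies in the head and tail durations, which are constrained by non-negativity, by the safety requirements $\vmin \leq V_i \leq \vmax$ on the intermediate states (each affine in the durations), and by a single equality forcing the total duration to equal $\tmax$. This is a closed and bounded polytope, since every duration is at most $\tmax$, and the total cost is affine on it; hence its minimum is attained whenever it is nonempty. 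Because each $n_{m,m'}$ is bounded by $\tmax / \min_{(m,m')}\cstime{m,m'}$, only finitely many discrete configurations are feasible, so the global minimum over all of them is attained. By Theorem~\ref{thm:standard-form} this minimum equals the infimum of the cost over all safe schedules, and therefore an optimal schedule exists.

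For membership in NP I would exhibit a polynomial-size certificate with a polynomial-time verifier. The certificate guesses the head and tail patterns, the modes occupying their finitely many slots, and the leap counts $n_{m,m'}$ written in binary; since each count is at most $\tmax$ it uses only polynomially many bits, and there are at most $|\modesup|\cdot|\modesdown|$ of them. Given this discrete skeleton, the leaps' contribution to time and cost is computed directly, and checking that the configuration admits safe head/tail durations with total time exactly $\tmax$ and overall cost at most $C$ reduces to testing feasibility of a polynomially sized linear program, solvable in polynomial time. Together with the NP-hardness of Theorem~\ref{thm:np-hard}, this gives NP-completeness.

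The step requiring the most care is the treatment of the leaps section: an optimal schedule may contain exponentially many leaps (as $\tmax$ is given in binary), but all leaps of a given type are interchangeable, so it suffices to record their binary-encoded multiplicities rather than the leaps themselves. I would also need to confirm that, once the discrete skeleton is fixed, the bounded-length head and tail patterns genuinely linearise --- that both the safety constraints and the cost are affine in the few continuous durations --- and that the total-time equality can always be met within the polytope, if necessary by absorbing slack into the initial \flatm action. Getting the interface between the head's final state at $\vmin$, the leaps, and the tail's time budget exactly right is the main bookkeeping obstacle, but it is routine once the normal form of Theorem~\ref{thm:standard-form} is in hand.
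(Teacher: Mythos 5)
Your proposal matches the paper's proof in all essentials: both reduce to Theorem~\ref{thm:standard-form}, nondeterministically guess the head/tail pattern, the modes filling its slots, and the binary-encoded multiplicities of each leap type, then verify the continuous part in polynomial time, with NP-hardness imported from Theorem~\ref{thm:np-hard}. The only cosmetic differences are that the paper treats length-one and length-two schedules by a separate direct enumeration/LP and observes that the leftover time \emph{uniquely} determines the single flexible duration in each combined pattern, whereas you solve a small feasibility LP over the head/tail durations and make the compactness argument for existence explicit; both variants are sound.
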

\begin{proof}
First, we can simply iterate over all schedules of length one and directly calculate their costs.
Next, we can iterate over pairs of modes, $m_1$ and $m_2$, and for each of them solve a linear program (LP) which 
will give us the cheapest schedule of length two using these two modes. 
This LP finds the cheapest partition of $\tmax$ between the two modes and has the following form:
Minimise
$\pi_c(m_1) t_1 + \pi_c(m_2) (\tmax-t_1) + \pi_d(m_1) + \pi_d(m_2)$ 
{\setlength{\abovedisplayskip}{0pt}
	\setlength{\belowdisplayskip}{0pt}
\begin{flalign*}
&& \text{Subject to: } & 0 \leq t_1 \leq \tmax, \ \ \ \vmin \leq \vinit {+} A(m_1) t_1 \leq \vmax \ \text{ and }\\
&& & \vmin \leq \vinit {+} A(m_1) t_1 {+} A(m_2) (\tmax-t_1)\leq \vmax. 
\end{flalign*}
}
This can be done in $\calO(\size^2)$ time.

Now, for schedules of length at least three, we showed in Section \ref{sec:schedule-structure} that any such a schedule can be transformed without increasing its cost
into one that can be split into three sections: the head section, the \shots section, and the tail section 
(some of which may be empty).
Due to Theorem \ref{thm:standard-form}, there are \patnum combined patterns for the tail and head sections.
Note that, when considering only the cost of the whole schedule, 
it suffices for us to know the number of \shots of each type
in the \shots section and not their precise order.
Notice that a schedule with time horizon $\tmax$ can contain at most $\lfloor \tmax / \cscost{i,j} \rfloor $ \shots of type $(i,j)$.
The size of this number is polynomial in the size of the input \automata.
There are $\calO(|\allmodes|^2)$ types of \shots so the number of \shots of each type and the combined pattern of the schedule can be guessed non-deterministically with polynomially many bits.
This guess uniquely determines the cost of the schedule.
This is because, after the total time of the \shots section is deducted from $\tmax$,
we get the exact time the head and tail section have to last for.
Each combined pattern has at most one of the following: a \flexi, a \flatm, or the last state not at the border. 
The time remaining will determinate exactly (if at all possible) the value of this single flexible point along this schedule.
Now, computing the cost of the resulting schedule and checking whether it is lower than $C$ can be done in polynomial time.
This shows that the problem is in NP.
It also shows that optimal schedules exist, because there are only finitely many options to choose from.
\qed\end{proof}

\section{Approximate Optimal Control in One-Dimension}
\label{sec:approx}

\subsection{Constant Factor Approximation}

We first show an approximation algorithm with a $3$-relative performance for the cost minimisation problem in one-dimensional \names, 
which runs in $\mathcal{O}(\size^7)$ time.
Our algorithm tries all possible patterns for an optimal schedule and 
for the leaps section always picks leaps of the same type.
It then adds, if necessary or for cost efficiency, a partial leap to the leaps section and
minimises the total cost of the just constructed schedule by optimising 
the time duration of this partial leap.
This constant approximation algorithm is crucial for showing the existence of an FPTAS for the same problem in the next subsection.

\begin{theorem}
\label{thm:constant-factor}
Computing a safe schedule with total cost at most three times larger than the optimal one for one-dimensional \name \automata can be done in
$\mathcal{O}(\size^7)$ time.
\end{theorem}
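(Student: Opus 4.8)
The plan is to reduce the search to the finitely many normal forms of Theorem~\ref{thm:standard-form} and, inside each, to replace the (possibly mixed) \shots{} section by \shots{} of a single type together with one partial \shot{}. Concretely, the algorithm enumerates, for each of the \patnum combined head/tail patterns, a concrete mode for every one of its timed actions. Since each combined pattern has length at most five, this is only $\calO(\size^5)$ choices; for each of them I additionally guess the single type $(i,j)\in\modesup\times\modesdown$ used throughout the \shots{} section, contributing a factor $\calO(\size^2)$ and hence $\calO(\size^7)$ skeletons overall. This matches the claimed running time provided each skeleton is evaluated in constant time, which it is: for a fixed skeleton the only free quantities are the integer number $k$ of complete \shots{} of type $(i,j)$, the duration of a single partial \shot{} of the same type, and the location of the unique non-border (or \flatm) point the pattern allows; the horizon constraint $(\sum=\tmax)$ and safety then determine them, and the cost-minimising choice is obtained by a closed-form computation.

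For correctness I would first isolate the \shots{} section. Suppose an optimal schedule, already put into normal form, spends time $t_L$ and cost $L$ in its \shots{} section using \shots{} of types $s$ with multiplicities $n_s\ge 1$. Writing the per-time rate of a type as $r_s=\cscost{s}/\cstime{s}$, every used type satisfies $\cstime{s}\le t_L$ (a single copy already fits), and, with $r''=\min_{s\ \mathrm{used}} r_s$, one gets the lower bound $L=\sum_s n_s\,\cstime{s}\,r_s\ge t_L\,r''$. Now let $(c,d)$ be the minimiser of $r''$ and consider the skeleton that uses $(c,d)$ throughout: it places $k=\lfloor t_L/\cstime{c,d}\rfloor\ge 1$ complete \shots{} and fills the residual time $t_L-k\,\cstime{c,d}\in[0,\cstime{c,d})$ with one partial \shot{} of the same type. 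The complete \shots{} cost $k\,\cscost{c,d}=k\,\cstime{c,d}\,r''\le t_L\,r''$, while a partial \shot{} never costs more than a complete one, i.e.\ at most $\cscost{c,d}=\cstime{c,d}\,r''\le t_L\,r''$. Hence this candidate's \shots{} section costs at most $2\,t_L\,r''\le 2L$, and since the algorithm tries every type it in particular tries $(c,d)$.

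To assemble the bound I would keep the head and tail modes equal to those of the optimum and reuse its split $t_H+t_T=\tmax-t_L$; re-optimising the durations can only lower the cost, so the head and tail contribute at most their optimal value $H+T$, and the candidate costs at most $H+2L+T\le 2(H+L+T)$, comfortably within the factor three. The subtle point, and the reason the construction is genuinely $\calO(\size^7)$ rather than something more naive, is the interplay between the three sections imposed by Theorem~\ref{thm:standard-form}: a normal-form schedule admits only one point off the border (or one \flatm action), so the head, the partial \shot{}, and the tail compete for that single degree of freedom and cannot be optimised fully independently.

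The main obstacle is therefore not the \shots{}-section estimate above but the verification that, for each of the \patnum patterns, the residual time can actually be absorbed by the one permitted flexible point (or by the partial \shot{}) while keeping the run safe and the horizon exactly $\tmax$. I would handle this by a case analysis over the head and tail shapes of Figures~\ref{fig:head} and~\ref{fig:tail}, showing in each case that the flexible point realises the leftover time and that, where the single-flexible-point constraint forces the partial \shot{} and a head/tail deformation to coexist, the extra cost is still charged against $L$ (or against the cost of one complete \shot{}, which $L$ already dominates). This joint accounting across sections is exactly where the factor settles at three, and it is the part of the argument demanding care.
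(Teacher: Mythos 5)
Your plan coincides with the algorithm the paper itself sketches immediately before the theorem: enumerate the \patnum combined head/tail patterns with concrete modes (a factor $\calO(\size^5)$, since the combined patterns have length at most five), enumerate a single \shot type for the \shots section (a factor $\calO(\size^2)$), insert one partial \shot to absorb the residual time, and optimise the remaining durations. Your ratio accounting is essentially sound and in fact sharper than required, giving $H+2L+T\le 2\cdot OPT$ rather than $3\cdot OPT$ --- but only under the reading that $(c,d)$ is the \emph{used} \shot type of minimal rate (or at least a type with $\cstime{c,d}\le t_L$). If you instead take the global rate minimiser, $\cstime{c,d}$ may exceed $t_L$, the floor $k$ may be $0$, and the partial \shot's full discrete cost $\pi_d(c)+\pi_d(d)$ need not be dominated by $L$; your own inequality $\cscost{c,d}=\cstime{c,d}\,r''\le t_L\,r''$ silently uses $\cstime{c,d}\le t_L$. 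You should state this restriction explicitly, and also note that your candidate schedule (optimal head/tail durations, $k$ complete \shots, one partial \shot) need not itself be in normal form, so you must check it lies in the feasible region over which your per-skeleton optimisation ranges.

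The genuine gap is that you defer exactly the step the paper flags as the substance of this theorem: it stresses that the constant-factor algorithm ``requires a careful analysis of the interplay between different sections of the normal form,'' i.e.\ the case analysis over the head and tail shapes of Figures~\ref{fig:head} and~\ref{fig:tail} showing that the single permitted off-border point (or \flatm action), the partial \shot, and the horizon constraint can be reconciled in each of the \patnum patterns. You also assert without proof that each skeleton is evaluated ``in constant time by a closed-form computation,'' but the per-skeleton problem is a joint optimisation over an integer $k$ (which can be exponentially large in the input size) and several continuous durations coupled by $\sum t_i=\tmax$ and safety; one needs an argument (e.g.\ solve the LP relaxation in the continuous variables with $k$ real, then round $k$ down and dump the fractional remainder into the partial \shot, charging the extra cost against one complete \shot) to make this a constant-size computation. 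Both points are fixable along the lines you indicate, but as written the proposal is a correct plan with the load-bearing verification left undone.
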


\subsection{FPTAS algorithm}
\label{sec:fptas-reduction}
We now show that the cost minimisation problem for one dimensional \names is in FPTAS by a polynomial time reduction to the 0-1 Knapsack problem, for which many FPTAS algorithms are available (see e.g.\ \cite{kellerer_knapsack_2004}).
This is similar to the FPTAS construction in \cite{MSW16}, but differs in how the modes with fractional duration are handled.
First we iterate over all possible schedules of length at most two
and find the cheapest one in polynomial time.
Next, thanks to Theorem \ref{thm:standard-form}, 
all optimal schedules longer than two can be transformed
into one of \patnum different patterns.
Each of these patterns results in a slightly different FPTAS formulation.
An FPTAS for the general model consists of all of these individual FPTASes 
executed one after another. The details of the proof can be found in Appendix \ref{sec:app-fptas}.

\begin{theorem}
\label{thm:fptas}
Solving the optimal control problem for \names with relative performance $\rho$ 
takes $\mathcal{O}(\mathsf{poly}(1/\rho) \mathsf{poly}\text{(size of the instance)})$ time
and is therefore in FPTAS.
\end{theorem}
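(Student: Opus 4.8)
The plan is to exploit the normal-form result of Theorem~\ref{thm:standard-form}, which collapses every long safe schedule into one of \patnum combined head/tail patterns of length at most five, and then to solve the only genuinely combinatorial part — filling the \shots section — by a reduction to the 0-1 Knapsack problem, whose FPTAS I can invoke as a black box. First I would dispose of the easy cases exactly: schedules of length at most two are handled in polynomial time by iterating over all single modes and all pairs of modes and solving the associated linear programmes, just as in the NP-completeness proof of Section~\ref{sec:optimal-one-dimension}. For longer schedules I would fix one of the \patnum patterns and enumerate the (polynomially many, at most $\calO(\size^{5})$) assignments of concrete modes to the head and tail sections; since each pattern has length at most five, this enumeration is cheap. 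Each such assignment fixes the head and tail up to a single flexible degree of freedom — a \flexi, a \flatm{} timed action, or the single non-border state — whose duration is determined by the total-time constraint once the \shots section is chosen.

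The heart of the argument, and the step I expect to be the main obstacle, is optimising the \shots section for a fixed pattern and head/tail mode assignment. After subtracting the head and tail durations from $\tmax$, the task is to choose nonnegative integer multiplicities $n_{m,m'}$ for each \shot type $(m,m')\in\modesup\times\modesdown$ so that the total \shot time $\sum_{m,m'} n_{m,m'}\,\cstime{m,m'}$ matches the residual time budget (the single flexible point absorbing any sub-leap remainder), while minimising the total \shot cost $\sum_{m,m'} n_{m,m'}\,\cscost{m,m'}$. Because the cost of the flexible point is linear in the residual time, this objective is, up to an additive constant, a linear function of the multiplicities, so the problem is exactly a knapsack-type selection: each \shot type contributes items of ``volume'' $\cstime{m,m'}$ and a value measuring the cost it saves relative to repeatedly using the most time-efficient type $\argmin_{(m,m')}\cscost{m,m'}/\cstime{m,m'}$. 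Each multiplicity is bounded by $\lfloor \tmax/\cstime{m,m'}\rfloor$, of polynomial bit-size but exponential value, so I cannot enumerate; instead I would apply the standard binary-expansion trick to turn the bounded multiplicities into polynomially many 0-1 items.

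The delicate point is converting Knapsack's multiplicative guarantee on \emph{its} objective into the required $(1+\rho)$ guarantee on the \emph{schedule} cost, since the knapsack objective differs from the schedule cost by an additive constant, for which a multiplicative Knapsack bound only controls the additive error. I would close this gap using the constant-factor approximation of Theorem~\ref{thm:constant-factor}: it furnishes, in $\calO(\size^{7})$ time, a schedule whose cost $\mathrm{ALG}$ satisfies $\mathrm{OPT}\le\mathrm{ALG}\le c\cdot\mathrm{OPT}$ for a fixed constant $c$, pinning down the magnitude of $\mathrm{OPT}$ up to a constant factor. Scaling and truncating the Knapsack values against this bound turns the additive error of the Knapsack FPTAS into a prescribed fraction of $\mathrm{OPT}$, so that running the Knapsack FPTAS with error parameter $\Theta(\rho)$ produces, for the fixed pattern and head/tail modes, a safe schedule of cost at most $(1+\rho)\,\mathrm{OPT}$.

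Finally I would run this pattern-specific FPTAS for each of the \patnum combined patterns and each enumerated head/tail mode assignment, and return the cheapest safe schedule found. Correctness follows because Theorem~\ref{thm:standard-form} guarantees that some enumerated configuration realises a cost-non-increasing transform of the true optimum, so the best candidate is within $(1+\rho)$ of $\mathrm{OPT}$. The running time is the product of the polynomial enumeration, the $\calO(\size^{7})$ calls to the constant-factor routine, and $\calO(\mathsf{poly}(1/\rho)\,\mathsf{poly}(\size))$ for the Knapsack FPTAS, which is $\calO(\mathsf{poly}(1/\rho)\,\mathsf{poly}(\size))$ overall, establishing that the one-dimensional optimal control problem is in FPTAS.
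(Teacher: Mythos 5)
Your proposal follows essentially the same route as the paper's proof: dispose of length-$\le 2$ schedules exactly, enumerate the \patnum combined patterns together with their head/tail mode assignments, reduce the leaps section to 0-1 Knapsack via binary-expanded leap multiplicities, and calibrate the Knapsack error parameter against the constant-factor approximation of Theorem~\ref{thm:constant-factor} so that its multiplicative guarantee becomes an additive error that is a prescribed fraction of the optimum. The one detail to repair is the item-value definition: the cost a leap ``saves'' relative to the most cost-efficient type per unit time is nonpositive for every other type, so that Knapsack instance would be trivially solved by selecting nothing; the paper instead packs the leaps that are \emph{not} used, setting the knapsack volume to $t_\Sigma - \tmax$ plus the head/tail durations so that maximising the packed value minimises the cost of the leaps actually taken, and it adds fractional ($2^{-i}$-scaled) items to model the partial leap in the tail --- a point your sketch delegates entirely to the single flexible degree of freedom.
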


\section*{Acknowledgement}
This work was supported by EPSRC EP/M027287/1 grant ``Energy Efficient Control''.

\bibliographystyle{plain}
\bibliography{HybridSystems,papers}

\newpage
\appendix

\section{Proof of Proposition \ref{prop:angular}}
\begin{proof}
Let $\sigma$ be a finite safe schedule with
two timed actions $(m_i,t_i), (m_{i+1},t_{i+1})$ in $\sigma$
such that $A(m_i) = A(m_{i+1})$. (If no such timed actions exist then $\sigma$ is angular and we are done.)
We can now replace these timed actions by a single timed action $(m,t_i + t_{i+1})$ 
such that $m$ is the mode from $m_i$ or $m_{i+1}$ with the lower continuous cost, and $m'$ the other mode. (I.e.\ $\{m,m'\} =\{m_i,m_{i+1}\}$ and $\pi_c(m) \leq \pi_c(m')$)
For the resulting safe schedule $\sigma'$, it now holds that $\pi(\sigma') \leq \pi(\sigma) - \pi_d(m')$.
\qed\end{proof}

\section{Proof of Proposition \ref{prop:zero-mode}}
\begin{proof}
Let $\sigma$ be a finite safe schedule with
timed actions $(m^1_0,t^1_0), (m^2_0,t^2_0), \ldots, (m^l_0,t^l_0)$ 
that use \flatms (i.e.~$m^i_0 \in \modeszero$ for all $i \leq l$).
(If no such timed actions exist then $\sigma$ is already in the form requested and we are done.)
Let $m_0 = \argmin_{i\leq l} \pi_c(m^i_0)$ be the \flatm among the ones used by $\sigma$ with the lowest continuous cost.
We construct a new safe schedule $\sigma'$ by first removing from $\sigma$ all timed actions that use a \flatm.
We then add at the very beginning a single 
timed action $(m_0, \sum_{i\leq l} t^i_0)$.
It is easy to see that such defined $\sigma'$ is safe and its total cost is equal or lower than that of $\sigma$.
\qed\end{proof}

\newcommand{\px}{x}

\section{Algorithm for Optimal Reachability Problem for Multi-Mode Systems with no discrete costs 
(adopted \cite[Algorithm 2]{ATW12b})}
\label{sec:optreach}
\begin{algorithm}
\caption{\label{alg:safe-interior-reach2}
  An algorithm checking whether any safe schedule exists 
  and if so finding one with the minimal total continuos-cost.
}
\KwIn{MMS $\automata = (M=\{m_1,\ldots,m_k\},N,A,\pi_{c},\pi_{d} \equiv 0,\vmin,\vmax,\vinit)$, target point 
$\vend$ and $t > 0$ such that all modes of $\automata$ are safe at $\vinit$ and $\vend$ for time $t$.}
\KwOut{NO, if no safe schedule from $\vinit$ to $\vend$ exists, and a continuos-cost-optimal 
  schedule (of at most exponential length), otherwise.
} 

\label{alg-line:safe-LP}
Check whether the following linear programming problem with variables
$\{t^{(m)}\}_{m\in M}$ has a solution.
\begin{align*}
   \text{Minimise} \sum_{m\in M} \pi_c(m) t^{(m)} & \text{ subject to:}\\
   \vinit + \sum_{m\in M} A(m) t^{(m)} =\ & \vend \text{ and }\\
   t^{(m)} \geq\ & 0 \text{ for all $m \in M$}. 
\end{align*}

\If{no satisfying assignment exists}{\Return NO}
\Else{
  Find a polynomial sized assignment $\{t^{(m)}\}_{m\in M}$.

  Let $l$ be the smallest natural number greater or equal to $\sum_{m \in M} t^{(m)}/t$. (Note that this number is at most exponential in the size of the input and can be written down using polynomially many bits.)

  {\bf return} the schedule 
  $\big((m_1,t^{(m_1)}/l),(m_2,t^{(m_2)}/l),\ldots,(m_k,t^{(m_k)}/l)\big)^l$.
}

\end{algorithm}

\section{Algorithm for finding a limit-safe schedule}
\label{app:limit-safe-schedule}
\begin{algorithm}[H]
\small
\label{alg:find-limit-safe}
\caption{Finding a limit-safe schedule to target state $\vend$ with time horizon $\tmax$. 
} 
\KwIn{Multi-mode system $\automata = (M,N,A,\pi_{c},\pi_{d},\vmin,\vmax,\vinit)$, set of modes $\zerocostmodes$ with zero discrete costs, time horizon $\tmax$, and target state $\vend$ such that any mode safe at $\vinit$ is safe as $\vend$.} 
\KwOut{NO if no safe schedule with time horizon $\tmax$ exists from $\vinit$ to $\vend$,
       and such a schedule, otherwise. 
} 
\setcounter{AlgoLine}{0}

$k:= 0; M_0 := \zerocostmodes;$

\Repeat{$M_{k} = M_{k-1}$}{
  $k := k + 1$; $M_{k} := M_{k-1}$;
  
  \ForEach{mode $q \in M \setminus M_{k-1}$}{
    \If{the following set of linear constraints is satisfiable for some
      assignment to the variables $t, \{ t^{(m)}_0 \}_{m \in M_0}$, $\{ t^{(m)}_1
      \}_{m \in M_1}$, \ldots, $\{ t^{(m)}_{k-1} \}_{m \in M_{k-1}}$: 
      \begin{align}
        \boldsymbol{\cdot}\ \nonumber  & t > 0&\\
        \nonumber &\emph{For all } i=0,\ldots,k-1:\\
        \nonumber &\qquad\boldsymbol{\cdot}\ t^{(m)}_i \geq 0\,\emph{ for all } m \in M_i\\
        \nonumber &\qquad\boldsymbol{\cdot}\ V_{i+1} = V_{i} + \sum_{m \in M_{i}} A(m) t^{(m)}_i \\
        \nonumber &\qquad\boldsymbol{\cdot}\ \vmin \leq V_{i+1} \leq \vmax \\
        \label{con:new-safe-mode}
        \boldsymbol{\cdot}\ & \vmin \leq V_{k} + A(q) t \leq \vmax %
      \end{align}
    }
    { $M_{k} := M_{k-1} \cup \{q\}$;}
    
  }
}
  $k := k - 1$;

\ForEach{$j=0,\ldots,k$ {\bf and} $q \in M_j$}{
    \If{the following set of linear constraints is {\bf not} satisfiable for any
      assignment to the variables $t, \{ t^{(m)}_0 \}_{m \in M_0}$, $\{ t^{(m)}_1
      \}_{m \in M_1}$, \ldots, $\{ t^{(m)}_{k} \}_{m \in M_{k}}$: 
      \begin{align}
        \boldsymbol{\cdot}\ \nonumber  & t^{(q)}_{j} > 0&\\
        \nonumber &\emph{For all } i=0,\ldots,k-1:\\
        \nonumber &\qquad\boldsymbol{\cdot}\ t^{(m)}_i \geq 0\,\emph{ for all } m \in M_i\\
        \nonumber &\qquad\boldsymbol{\cdot}\ V_{i+1} = V_{i} + \sum_{m \in M_{i}} A(m) t^{(m)}_i \\
        \nonumber &\qquad\boldsymbol{\cdot}\ \vmin \leq V_{i+1} \leq \vmax \\
        \nonumber \boldsymbol{\cdot}\ & \sum_{i=0}^{k} \sum_{m \in M_i} t^{(m)}_i = \tmax 
      \end{align}
    }
    { $M_{j} := M_{j} \setminus \{q\}$;}
    \tcc{the algorithm continues below...}
}

\end{algorithm}

\begin{algorithm}
    \If{the following set of linear constraints is {\bf not} satisfiable for any
      assignment to the variables $\{ t^{(m)}_0 \}_{m \in M_0}$, $\{ t^{(m)}_1
      \}_{m \in M_1}$, \ldots, $\{ t^{(m)}_{k} \}_{m \in M_{k}}$:
      \nllabel{alg:LP-new-safe-mode}
      \begin{align}
        \nonumber &\emph{For all } i=0,\ldots,k:\\
        \nonumber &\qquad\boldsymbol{\cdot}\ t^{(m)}_i > 0\,\emph{ for all } m \in M_i\\
        \nonumber &\qquad\boldsymbol{\cdot}\ V_{i+1} = V_{i} + \sum_{m \in M_{i}} A(m) t^{(m)}_i \\
        \nonumber &\qquad\boldsymbol{\cdot}\ \vmin \leq V_{i+1} \leq \vmax \\
        \nonumber \boldsymbol{\cdot}\ & \sum_{i=0}^{k} \sum_{m \in M_i} t^{(m)}_i = \tmax 
      \end{align}
    }
     {{\bf return} NO} 

Compute a polynomial sized solution to the linear program in step
\ref{alg:LP-new-safe-mode} and use it in the next line.

{\bf return} the schedule created by composing the following schedules obtained by 
repeatedly calling \cite[Algorithm 2]{ATW12b} (see Appendix \ref{sec:optreach}) to find a safe schedule: 
\begin{itemize}
\item from $V_0$ to $V_1$ using only modes in $M_0$ with 
the safe time bound $t = \min_{m \in M_0}{t^{(m)}_0}$, 
\item from $V_1$ to $V_2$ using only modes in $M_1$ with
the safe time bound $t = \min_{m \in M_1}{t^{(m)}_1}$, 
\item \ldots, 
\item from $V_{k}$ to $V_{k+1}$ using only modes in $M_{k}$ with
the safe time bound $t = \min_{m \in M_k}{t^{(m)}_k}$.  
\end{itemize}

\end{algorithm}

\section{Formal Definition of Operations}
\label{app:formal-def}

\begin{definition}[Rearrange Operation]
Let $(m_{i},t_{i}),\ldots,(m_{j},t_{j})$ be any subsequence of $\sigma$ such that 
either $\forall_{\:i\leq\:l\:\leq\:j\:}m_l \in \modesdown$ or $\forall_{\:i\leq\:l\:\leq\:j\:}m_l \in \modesup$ hold.
Note that any permutation of the timed actions $(m_{i},t_{i}),\ldots,(m_{j},t_{j})$ 
will result in a new schedule $\sigma'$ which is safe and has the same total cost as $\sigma$. 
\end{definition}

\begin{definition}[Shift Operation]
Let the run of our finite schedule $\sigma = \seq{ \allowdisplaybreaks (m_1,t_1), (m_2,t_2),\ldots,(m_k,t_k)}$ be $\seq{V_0,V_1,...,V_k}$.
For any $i\leq l\leq j$ such that $V_i = V_l = V_j$ holds, 
we can move the whole subsequence of timed actions 
$(m_{i},t_{i}),\ldots,(m_{j-1},t_{j-1})$ just after $(m_{l-1},t_{l-1})$ in $\sigma$
to obtain a new safe schedule with the same cost.
Specifically, the new schedule will look as follows:  
$\seq{(m_{1},t_{1}),\ldots,(m_{i-1},t_{i-1}),(m_{j},t_{j}),\ldots,(m_{l-1},t_{l-1}),(m_{i},t_{i}),\ldots,(m_{j-1},t_{j-1}),(m_l,t_l),\ldots,(m_k,t_k)}$
Analogously, in the same situation, we can also move 
the whole subsequence of timed actions 
$(m_{j},t_{j}),\ldots,(m_{l-1},t_{l-1})$ just after $(m_{i-1},t_{i-1})$ in $\sigma$
to obtain a new safe schedule with the same cost.
\end{definition}

\begin{definition}[Shift-Down Operation]
Let the run of our finite schedule $\sigma = \seq{ \allowdisplaybreaks (m_1,t_1), (m_2,t_2),\ldots,(m_k,t_k)}$ be $\seq{V_0,V_1,...,V_k}$.
For any $i\leq j$ and $l$ such that $V_i = V_{j+1} = \vmax$ and $V_{l+1} = \vmin$, we can 
``rotate'' the whole subsequence of timed actions 
$(m_{i},t_{i}),\ldots,(m_{j},t_{j})$ and move it just after $(m_{l},t_{l})$ in $\sigma$
to obtain a new safe schedule $\sigma'$ with the same cost.
Specifically, let $d = \argmin_{\:i\leq\:b\:<\:j\:} V_{b+1}$.
Note that if we rotate the subsequence of actions in the way to start with timed action $(m_d,t_d)$ then we will never encounter a lower state the start state, because $d$ was the lowest point along this subsequence of timed actions.
Specifically, the new schedule $\sigma'$ will look as follows
$\seq{(m_{1},t_{1}),\ldots,(m_{i-1},t_{i-1}),(m_{j+1},t_{j+1}),\ldots,(m_{l},t_{l}),(m_{d},t_{d}),\ldots,(m_{j},t_{j}),(m_i,t_i),\ldots,(m_{d-1},t_{d-1}),(m_{l+1},t_{l+1}),\ldots,(m_k,t_k)}$.
\end{definition}

\begin{definition}[Resize Operation]
Let $\sigma = \seq{(m_1,t_1),\ldots,(m_k,t_k)}$ whose run is $\seq{V_0,V_1,\ldots,V_k}$.
For $i < k$ and $t \in \Real$, let $\Resize(\sigma,i,t)$ be a schedule $\sigma'$ identical to $\sigma$ apart from
timed actions $(m_i,t_i),(m_{i+1},t_{i+1})$ being replaced by $(m_i,t'_i),(m_{i+1},t'_{i+1})$ in the following way, where we distinguish among several cases. If $t>0$ then we will also refer to this operation as the {\em stretch} operation and if $t < 0$ as the {\em shrink} operation. 
\begin{itemize}

\item[\bf (up-up)] If $0 < A(m_i) < A(m_{i+1})$ then let
$t'_i = t_i + \beta t + t$ and $t'_{i+1} = t_{i+1} - \beta t$ where 
$$\beta = \frac{A(m_i)}{A(m_{i+1}) - A(m_i)} \geq 0$$
Let $\maxresize(\sigma,i) := [-t_i/(\beta+1), t_{i+1}/\beta]$.
Note that $\pi_c(\sigma') - \pi_c(\sigma) = ((\beta + 1)\pi_c(m_i) - \beta \pi_c(m_{i+1})) \cdot t$.

If $0 < A(m_{i+1}) < A(m_{i})$ then let
$t'_i = t_i - \beta\cdot t$ and $t'_{i+1} = t_{i+1} + \beta\cdot t + t$
where 
$$\beta = \frac{A(m_{i+1})}{A(m_{i+1}) - A(m_i)} \geq 0$$
Let $\maxresize(\sigma,i) := [-t_{i+1}/(\beta+1), t_{i}/\beta]$.
Note that $\pi_c(\sigma') - \pi_c(\sigma) = ((\beta + 1)\pi_c(m_{i+1}) - \beta \pi_c(m_{i})) \cdot t$.

\item[\bf (up-down)] Here $0 < A(m_i)$ and $A(m_{i+1}) < 0$ holds. Let
$t'_i = t_i + \beta t $ and $t'_{i+1} = t_{i+1} - \beta t + t$ where 
$$\beta = \frac{-A(m_{i+1})}{A(m_{i}) - A(m_{i+1})} \geq 0$$
Let $\maxresize(\sigma,i) := [-\min \{t_i/\beta, t_{i+1}/(1-\beta)\}, (\vmax-V_i)/(\beta A(m_{i}))]$. \\
Note that $\pi_c(\sigma') - \pi_c(\sigma) = (\beta \pi_c(m_i) + (1-\beta) \pi_c(m_{i+1})) \cdot t$.
\item[\bf (down-up)] Analogous to {\bf up-down} case.
\item[\bf (down-down)] Analogous to {\bf up-up} case.
\item[\bf (flat)] If $(m_1,t_1)$ is a \flatm action in $\sigma$, then let $\Resize(\sigma,0,t)$ be equal to $\sigma$ where the first action is replaced by $(m_1,t_1+t)$. Let $\maxresize(\sigma,0) := [-t_1, \tmax-t_1]$ and
notice that $\pi_c(\sigma') - \pi_c(\sigma) = \pi_c(m_{1}) \cdot t$
\item[\bf (last-action)] If $(m_k,t_k)$ is the last action in $\sigma$, then let $\Resize(\sigma,k,t)$ be equal to $\sigma$ where the last action is replaced by $(m_k,t_k+t)$.\\
Let $\maxresize(\sigma,k) := [-t_k, \max{\{(\vmax - V)/A(m_k),(V_k-\vmin)/A(m_k)\}}]$ and
notice that $\pi_c(\sigma') - \pi_c(\sigma) = - \pi_c(m_{k}) \cdot t$
\end{itemize}
\end{definition}

\begin{definition}[Wedge Operation]
Let $\sigma=\langle (m_1,t_1), (m_2,t_2),\ldots,(m_k,t_k)\rangle$ be a finite safe schedule
whose run is $\seq{V_0,V_1,...,V_k}$.
Let $\tau = \seq{(m_i,t_i),(m_{i+1},t_{i+1}),(m_{i+2},t_{i+2})}$ be any three consecutive timed actions
in which exactly two consecutive timed actions have the same trend.
It suffices to consider the case where $A(i) > A(i+1) > 0$ and $A(i+2) < 0$ as all other cases are very similar. Notice that if $A(i+1) > A(i)$ then we can simply change the order of $(m_{i},t_{i}),(m_{i+1},t_{i+1})$ using the rearrange operation defined earlier.
Furthermore, we only define this operation in the case where $V_{i-1} = V_{i+2}$.
This is the only situation we need this operation for and it is easy to generalise this further.
Let $\alpha = A(i+2) (t_i+t_{i+1}+t_{i+3}) / (A(i+2) - A(i+1))$.
For any $t \geq 0$, consider the sequence of timed actions
$\tau' = \seq{(m_i, (t + \alpha)t_i / (t_{i+1} - \alpha)), (m_{i+1},t), (m_{i+2}, t_i+t_{i+1}+t_{i+3} - t - (t + \alpha)t_i / (t_{i+1} - \alpha))}$.
Let us replace $\tau$ by $\tau'$ in $\sigma$ to get $\sigma'$ whose run is $\seq{V'_0,V'_1,...,V'_k}$.
We claim that $V_{i-1} = V'_{i-1} = V'_{i+2} = V'_{i+2}$, so the runs of $\sigma$ and $\sigma'$ can only differ at their $i$-th and $i+1$-th states.
At the same time notice that $\pi_c(\sigma') - \pi_c(\sigma)$ is a linear function of $t$ as a sum of linear functions.
As a result its minimum is attained at the smallest or largest permissible value of $t$. 
Moreover, the permissible value of $t$ is a closed interval $[\beta,\alpha]$ where $\beta$ can be calculated using the following
linear constraint $V'_{i+1} \leq \vmax$.
\end{definition}

\section{Proof of Theorem \ref{thm:fptas} from Section \ref{sec:approx}}
\label{sec:app-fptas}
\begin{proof}
We consider here only one of the \patnum possible pattern cases, because all these FPTAS algorithms will look essentially the same.
These can be later combined all these FPTASes into a single FPTAS for the general model by running them one by one.
The case we will look at is \UD pattern, with modes $m_1,m_2$, for the head section and \PUUD, with modes $m_3,m_4,m_5$, for the tail section.
We consider all combinations of these five modes $m_i$ individually, and therefore consider them given.
(Note that there are only quintically many such combinations.)
Wlog.\ we assume that $\cscost{m_3} - \pi_{d}(m_3) \geq \cscost{m_4} - \pi_{d}(m_4)$, because otherwise we could swap the role of $m_3$ with $m_4$ in our algorithm below.
Note that any schedule with this pattern which picks $m_3$ in the tail for $\alpha\cstime{m_3}$ amount of time, uses $m_4$ for $(1-\alpha)\cstime{m_3}$ amount of time in the tail section.

Let $c^*$ be the $3$-approximation, which can be computed
using the procedure from Theorem \ref{thm:constant-factor}, of the optimal cost $o^*$.
To get an approximation to our optimal cost problem with a relative performance $\rho$, 
it suffices to find a solution with $c^* \rho/3$ absolute performance.
We split this into two equal parts of 
$\epsilon = c^* \rho/6$.
An optimal solution to the knapsack instance that we produce
will provide us with a schedule with cost no greater than $\epsilon$ over the optimal one. 
Moreover, a solution to the knapsack instance with $\delta$ absolute error will 
provide a schedule with an $\epsilon + \delta$ absolute error.
Therefore, it suffices to set $\delta = \epsilon$ to find 
a schedule with $\rho$ relative performance.
In our reduction, the total value of all the items in the resulting knapsack problem is at most $4|M|^2$ times the optimal cost for safe schedules, 
so by using $\rho' = \rho/(12|M|^2)$, for the resulting knapsack problem
we will find a near optimal solution with a relative performance $\rho$ for \names.
The running time of this procedures is in
$\mathcal{O}\mathcal(\mathsf{poly}(1/\rho) \mathsf{poly}(|M|) \mathsf{poly}\text{(size of the knapsack instance)})$.
This suffices to establish the inclusion of the cost minimisation problem for \names in FPTAS.

For each type of leaps, $(m,m') \in \modesup \times \modesdown$, we build the following items for this knapsack problem instance:
$\{(2^i\cdot \cstime{m,m'},2^i\cdot \cscost{m,m'}) \mid i \in\mathbb{N} \wedge 2^i\cdot \cscost{m,m'} \leq c^* \wedge 2^i\cdot \cstime{m,m'} \leq \tmax\}$. 
Let $i^* \in \mathbb{N}$ be smallest such that $2^{-i^*}\cdot (\cscost{m_3} - \pi_{d}(m_3)) \leq \epsilon$.  
For both $m_3$ and $m_4$ we add the following extra multiset of items:
$\{(2^{-i}\cdot \cstime{m_3},2^{-i}\cdot (\cscost{m_3} - \pi_{d}(m_3) - \cscost{m_4} + \pi_{d}(m_4)) ) \mid i\in \mathbb{Z}_+ \wedge i \leq i^* \wedge 
2^{-i}\cdot (\cscost{m_3} - \pi_{d}(m_3)) \leq c^* \}$
and additionally $(2^{-i^*}\cdot \cstime{m_3},2^{-i^*}\cdot (\cscost{m_3} - \pi_{d}(m_3) - \cscost{m_4} + \pi_{d}(m_4))$, 
which is a copy of an element already in the multiset.
Note that this models the fact that the more $m_3$ is used in the tail section the less mode $m_4$ is used in tail section and with the same proportion.
Also, all costs are nonnegative because of the assumption that $\cscost{m_3} \geq \cscost{m_4}$.
Let $t_\Sigma$ be the time span of all items in this knapsack instance.
We set the volume of this 0-1 knapsack instance to be $t_\Sigma - \tmax + (\vmax-\vinit)/A(m_1) + \cstime{m_2} + \cstime{m_5}$.

The just produced knapsack problem has the following properties:
\begin{itemize}
\item
the size of its description is polynomial in the size of the original problem including the relative performance;
\item
fractional time duration of $m_3$ in the tail section can be overestimated by joining together the fractional items for both $m_3$ and $m_4$ (which do not include discrete costs), so that we do not exceed the volume by $2^{-i^*}\cdot \cstime{m_3}$ or more;
\item $n$ leaps of of type $(m,m')$ in $\sigma$ can be achieved by picking the items for this type and corresponding to the binary representation of $n$; and
\item
The volume of these items is $\geq \tmax - (\vmax-\vinit)/A(m_1) - \cstime{m_2} - \cstime{m_5}$,
which leaves enough space for modes $m_1$ and $m_2$ in the head section, and mode $m_5$ in the tail section. 
Let $v^*$ be the value of the items in this knapsack and $o^*$ denotes the optimal cost.
Then $$0 \leq v^* + \pi_d(m_1) + \pi_c(m_1) (\vmax-\vinit)/A(m_1) + \cscost{m_2} + \pi_d(m_3) + \cscost{m_4} + \cscost{m5} - o^* \leq \epsilon$$
\item 
Let $V_\Sigma$ be the value of all items in the multiset. 
For any solution to the knapsack problem with value $V$ 
we get a schedule $\sigma'$ with cost $\leq V_\Sigma - V + \pi_d(m_1) + \pi_c(m_1) (\vmax-\vinit)/A(m_1) + \cscost{m_2} + \pi_d(m_3) + \cscost{m_4} + \cscost{m5}$.
\end{itemize}
All of this shows that solving this knapsack instance with a relative performance of $\rho/(12|M|^2)$ gives
us a safe schedule with relative performance of $\rho$. 
\qed\end{proof}

\section{Transformation of an Example Schedule}
\label{sec:example-transformation}
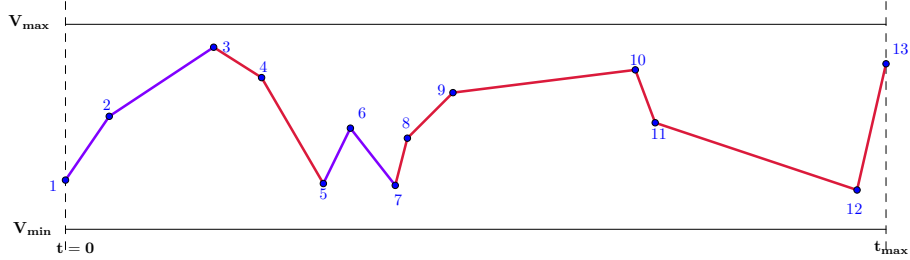
\begin{figure}[htp]
\centering
\resizebox{\textwidth}{!}{
\begin{tikzpicture}[line cap=round,line join=round,>=triangle 45,x=1.0cm,y=1.0cm]
\draw (-2.5,-16.)-- (15.5,-16.);
\draw (-2.5,-20.5)-- (15.5,-20.5);
\draw (-3.859731197569333,-15.690812833218225) node[anchor=north west] {\normalsize $\mathbf{V_{max}}$};
\draw (-3.7586278255167347,-20.22024390117464) node[anchor=north west] {\normalsize $\mathbf{V_{min}}$};
\draw [dash pattern=on 5pt off 5pt] (-2.5,-15.5)-- (-2.5,-21.);
\draw [dash pattern=on 5pt off 5pt] (15.5,-15.5)-- (15.5,-21.);
\draw (15.12748207390858,-20.648188179953348) node[anchor=north west] {\normalsize $\mathbf{t_{max}}$};
\draw (-2.8082561282223133,-20.68862952877439) node[anchor=north west] {\normalsize $\mathbf{t=0}$};
\draw [line width=1.6pt,color=dcrutc] (10.,-17.)-- (10.436285610668046,-18.161045227459507);
\draw [line width=1.6pt,color=dcrutc] (6.,-17.5)-- (10.,-17.);
\draw [line width=1.6pt,color=dcrutc] (10.436285610668046,-18.161045227459507)-- (14.864613306571846,-19.637154459427453);
\draw [line width=1.6pt,color=dcrutc] (14.864613306571846,-19.637154459427453)-- (15.5,-16.86692206518624);
\draw [line width=1.6pt,color=dcrutc] (0.7505825680291393,-16.50294992579685)-- (1.8020576373761599,-17.170232181344005);
\draw [line width=1.6pt,color=dcrutc] (1.8020576373761599,-17.170232181344005)-- (3.1568428228809746,-19.495609738553778);
\draw [line width=1.6pt,color=xfqqff] (3.1568428228809746,-19.495609738553778)-- (3.75,-18.28);
\draw [line width=1.6pt,color=xfqqff] (3.75,-18.28)-- (4.734055426901509,-19.536051087374855);
\draw [line width=1.6pt,color=xfqqff] (0.7505825680291393,-16.50294992579685)-- (-1.54,-18.02);
\draw [line width=1.6pt,color=xfqqff] (-1.54,-18.02)-- (-2.5,-19.42);
\draw [line width=1.6pt,color=dcrutc] (6.,-17.5)-- (5.,-18.5);
\draw [line width=1.6pt,color=dcrutc] (5.,-18.5)-- (4.734055426901509,-19.536051087374855);
\begin{scriptsize}
\draw [fill=qqqqff] (-2.5,-19.42) circle (2.0pt);
\draw[color=qqqqff] (-2.7678147794012737,-19.546161424580024) node {\normalsize $1$};
\draw [fill=qqqqff] (4.734055426901509,-19.536051087374855) circle (2.0pt);
\draw[color=qqqqff] (4.79471745013306,-19.852781656895616) node {\normalsize $7$};
\draw [fill=qqqqff] (10.,-17.) circle (2.0pt);
\draw[color=qqqqff] (10.052092796868157,-16.775929030338823) node {\normalsize $10$};
\draw [fill=qqqqff] (10.436285610668046,-18.161045227459507) circle (2.0pt);
\draw[color=qqqqff] (10.517168308310108,-18.396893099338197) node {\normalsize $11$};
\draw [fill=qqqqff] (6.,-17.5) circle (2.0pt);
\draw[color=qqqqff] (5.745089147427482,-17.443211285885976) node {\normalsize $9$};
\draw [fill=qqqqff] (14.864613306571846,-19.637154459427453) circle (2.0pt);
\draw[color=qqqqff] (14.803951283340266,-20.053885028948215) node {\normalsize $12$};
\draw [fill=qqqqff] (15.5,-16.86692206518624) circle (2.0pt);
\draw[color=qqqqff] (15.814985003866246,-16.533280937412588) node {\normalsize $13$};
\draw [fill=qqqqff] (0.7505825680291393,-16.50294992579685) circle (2.0pt);
\draw[color=qqqqff] (1.0336720097764127,-16.492839588591547) node {\normalsize $3$};
\draw [fill=qqqqff] (1.8020576373761599,-17.170232181344005) circle (2.0pt);
\draw[color=qqqqff] (1.842498986197197,-16.937694425622983) node {\normalsize $4$};
\draw [fill=qqqqff] (3.1568428228809746,-19.495609738553778) circle (2.0pt);
\draw[color=qqqqff] (3.1568428228809715,-19.731457610432497) node {\normalsize $5$};
\draw [fill=qqqqff] (3.75,-18.28) circle (2.0pt);
\draw[color=qqqqff] (4.006111148122796,-17.948728146148966) node {\normalsize $6$};
\draw [fill=qqqqff] (-1.54,-18.02) circle (2.0pt);
\draw[color=qqqqff] (-1.5950156635911366,-17.76674207645429) node {\normalsize $2$};
\draw [fill=qqqqff] (5.,-18.5) circle (2.0pt);
\draw[color=qqqqff] (4.976703519827737,-18.171155564664684) node {\normalsize $8$};
\end{scriptsize}
\end{tikzpicture}
}
\caption{\label{fig:EX_Original} The original schedule. For any two non-overlapping \flexis, 
we try to shrink one by $t$ and stretch the other by $t$ for the maximum possible time $t>0$.
We repeat this until there is at most one \flexi left. 
Here, we start off by shrinking
\flexi 1-2-3 (of type up-up) and stretching \flexi 5-6-7 (of type up-down).
This will result in straightening 
the 1-2-3 \flexi and removal of its midpoint 2 
(we can see the end result in the next figure).
}
\end{figure}

\begin{figure}[htp]
\centering
\resizebox{\textwidth}{!}{

\begin{tikzpicture}[line cap=round,line join=round,>=triangle 45,x=1.0cm,y=1.0cm]
\draw (-2.5,-16.)-- (15.5,-16.);
\draw (-2.5,-20.5)-- (15.5,-20.5);
\draw (-3.8597311975693342,-15.690812833218215) node[anchor=north west] {\normalsize $\mathbf{V_{max}}$};
\draw (-3.758627825516736,-20.220243901174626) node[anchor=north west] {\normalsize $\mathbf{V_{min}}$};
\draw [dash pattern=on 5pt off 5pt] (-2.5,-15.5)-- (-2.5,-21.);
\draw [dash pattern=on 5pt off 5pt] (15.5,-15.5)-- (15.5,-21.);
\draw (15.127482073908585,-20.648188179953337) node[anchor=north west] {\normalsize $\mathbf{t_{max}}$};
\draw (-2.808256128222314,-20.688629528774374) node[anchor=north west] {\normalsize $\mathbf{t=0}$};
\draw [line width=1.6pt,color=dcrutc] (10.,-17.)-- (10.436285610668046,-18.161045227459507);
\draw [line width=1.6pt,color=dcrutc] (6.,-17.5)-- (10.,-17.);
\draw [line width=1.6pt,color=dcrutc] (10.436285610668046,-18.161045227459507)-- (14.864613306571846,-19.637154459427453);
\draw [line width=1.6pt,color=dcrutc] (14.864613306571846,-19.637154459427453)-- (15.5,-16.86692206518624);
\draw [line width=1.6pt,color=dcrutc] (6.,-17.5)-- (5.,-18.5);
\draw [line width=1.6pt,color=dcrutc] (5.,-18.5)-- (4.734055426901509,-19.536051087374855);
\draw [line width=1.6pt,color=dcrutc] (-2.5,-19.42)-- (-0.5,-16.5);
\draw [line width=1.6pt,color=xfqqff] (-0.5,-16.5)-- (0.5281551495134233,-17.230894204575563);
\draw [line width=1.6pt,color=xfqqff] (0.5281551495134233,-17.230894204575563)-- (1.9031610094287579,-19.556271761785336);
\draw [line width=1.6pt,color=xfqqff] (1.9031610094287579,-19.556271761785336)-- (3.,-17.5);
\draw [line width=1.6pt,color=xfqqff] (3.,-17.5)-- (4.734055426901509,-19.536051087374855);
\begin{scriptsize}
\draw [fill=qqqqff] (-2.5,-19.42) circle (2.0pt);
\draw[color=qqqqff] (-2.767814779401275,-19.54616142458001) node {\normalsize $1$};
\draw [fill=qqqqff] (4.734055426901509,-19.536051087374855) circle (2.0pt);
\draw[color=qqqqff] (4.794717450133062,-19.852781656895602) node {\normalsize $6$};
\draw [fill=qqqqff] (10.,-17.) circle (2.0pt);
\draw[color=qqqqff] (10.294740889794397,-16.99835644885453) node {\normalsize $9$};
\draw [fill=qqqqff] (10.436285610668046,-18.161045227459507) circle (2.0pt);
\draw[color=qqqqff] (10.456506285078554,-18.437334448159224) node {\normalsize $10$};
\draw [fill=qqqqff] (6.,-17.5) circle (2.0pt);
\draw[color=qqqqff] (5.745089147427484,-17.443211285885962) node {\normalsize $8$};
\draw [fill=qqqqff] (14.864613306571846,-19.637154459427453) circle (2.0pt);
\draw[color=qqqqff] (14.80395128334027,-20.0538850289482) node {\normalsize $11$};
\draw [fill=qqqqff] (15.5,-16.86692206518624) circle (2.0pt);
\draw[color=qqqqff] (15.814985003866251,-16.836591053570373) node {\normalsize $12$};
\draw [fill=qqqqff] (5.,-18.5) circle (2.5pt);
\draw[color=qqqqff] (4.976703519827739,-18.171155564664673) node {\normalsize $7$};
\draw [fill=qqqqff] (-0.5,-16.5) circle (2.0pt);
\draw[color=qqqqff] (-0.179568454854764,-16.472618914181016) node {\normalsize $2$};
\draw [fill=qqqqff] (0.5281551495134233,-17.230894204575563) circle (2.0pt);
\draw[color=qqqqff] (0.7505825680291384,-17.09945982090713) node {\normalsize $3$};
\draw [fill=qqqqff] (1.9031610094287579,-19.556271761785336) circle (2.0pt);
\draw[color=qqqqff] (1.8222783117866779,-19.83256098248508) node {\normalsize $4$};
\draw [fill=qqqqff] (3.,-17.5) circle (2.0pt);
\draw[color=qqqqff] (3.197284171702012,-17.321887239422846) node {\normalsize $5$};
\end{scriptsize}
\end{tikzpicture}
}
\caption{\label{fig:EX_Step1}
Next, we will apply the procedure to 
\flexis 2-3-4 (of type down-down) and 4-5-6 (of type up-down).
This will result in straightening 
the 2-3-4 \flexi and removal of its midpoint 3 
(we can see the end result in the next figure).
}
\end{figure}

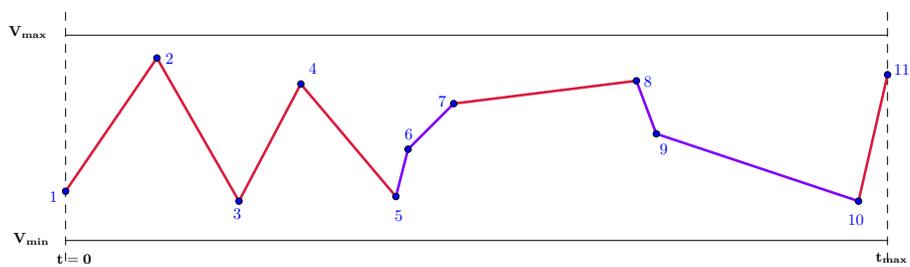
\begin{figure}[htp]
\centering
\resizebox{\textwidth}{!}{
\begin{tikzpicture}[line cap=round,line join=round,>=triangle 45,x=1.0cm,y=1.0cm]
\draw (-2.5,-16.)-- (15.5,-16.);
\draw (-2.5,-20.5)-- (15.5,-20.5);
\draw (-3.8597311975693342,-15.690812833218138) node[anchor=north west] {$\mathbf{V_{max}}$};
\draw (-3.758627825516736,-20.22024390117453) node[anchor=north west] {$\mathbf{V_{min}}$};
\draw [dash pattern=on 5pt off 5pt] (-2.5,-15.5)-- (-2.5,-21.);
\draw [dash pattern=on 5pt off 5pt] (15.5,-15.5)-- (15.5,-21.);
\draw (15.127482073908585,-20.648188179953234) node[anchor=north west] {$\mathbf{t_{max}}$};
\draw (-2.808256128222314,-20.688629528774275) node[anchor=north west] {$\mathbf{t=0}$};
\draw [line width=1.6pt,color=xfqqff] (10.,-17.)-- (10.436285610668046,-18.161045227459507);
\draw [line width=1.6pt,color=dcrutc] (6.,-17.5)-- (10.,-17.);
\draw [line width=1.6pt,color=xfqqff] (10.436285610668046,-18.161045227459507)-- (14.864613306571846,-19.637154459427453);
\draw [line width=1.6pt,color=dcrutc] (14.864613306571846,-19.637154459427453)-- (15.5,-16.86692206518624);
\draw [line width=1.6pt,color=xfqqff] (6.,-17.5)-- (5.,-18.5);
\draw [line width=1.6pt,color=xfqqff] (5.,-18.5)-- (4.734055426901509,-19.536051087374855);
\draw [line width=1.6pt,color=dcrutc] (-2.5,-19.42)-- (-0.5,-16.5);
\draw [line width=1.6pt,color=dcrutc] (-0.5,-16.5)-- (1.2965407771131692,-19.637154459427414);
\draw [line width=1.6pt,color=dcrutc] (1.2965407771131692,-19.637154459427414)-- (2.651325962617984,-17.069128809291403);
\draw [line width=1.6pt,color=dcrutc] (2.651325962617984,-17.069128809291403)-- (4.734055426901509,-19.536051087374855);
\begin{scriptsize}
\draw [fill=qqqqff] (-2.5,-19.42) circle (2.0pt);
\draw[color=qqqqff] (-2.767814779401275,-19.546161424579918) node {\normalsize $1$};
\draw [fill=qqqqff] (4.734055426901509,-19.536051087374855) circle (2.0pt);
\draw[color=qqqqff] (4.794717450133062,-19.952781656895506) node {\normalsize $5$};
\draw [fill=qqqqff] (10.,-17.) circle (2.0pt);
\draw[color=qqqqff] (10.254299540973358,-17.01857712326497) node {\normalsize $8$};
\draw [fill=qqqqff] (10.436285610668046,-18.161045227459507) circle (2.0pt);
\draw[color=qqqqff] (10.598051005952192,-18.496893099338094) node {\normalsize $9$};
\draw [fill=qqqqff] (6.,-17.5) circle (2.0pt);
\draw[color=qqqqff] (5.745089147427484,-17.44321128588588) node {\normalsize $7$};
\draw [fill=qqqqff] (14.864613306571846,-19.637154459427453) circle (2.0pt);
\draw[color=qqqqff] (14.80395128334027,-20.053885028948105) node {\normalsize $10$};
\draw [fill=qqqqff] (15.5,-16.86692206518624) circle (2.0pt);
\draw[color=qqqqff] (15.814985003866251,-16.755708355928213) node {\normalsize $11$};
\draw [fill=qqqqff] (5.,-18.5) circle (2.0pt);
\draw[color=qqqqff] (5.017144868648778,-18.171155564664584) node {\normalsize $6$};
\draw [fill=qqqqff] (-0.5,-16.5) circle (2.0pt);
\draw[color=qqqqff] (-0.22000980367580328,-16.51306026300198) node {\normalsize $2$};
\draw [fill=qqqqff] (1.2965407771131692,-19.637154459427414) circle (2.0pt);
\draw[color=qqqqff] (1.2560994282921287,-19.912340308074465) node {\normalsize $3$};
\draw [fill=qqqqff] (2.651325962617984,-17.069128809291403) circle (2.0pt);
\draw[color=qqqqff] (2.9141947299547373,-16.735487681517693) node {\normalsize $4$};
\end{scriptsize}
\end{tikzpicture}
}
\caption{\label{fig:EX_Step2}
Next, we will apply the procedure to 
\flexis 5-6-7 (of type up-up) and 8-9-10 (of type down-down).
This will result in straightening of 
the 5-6-7 \flexi and removal of its midpoint 6 
(we can see the end result in the next figure).
  }
\end{figure}

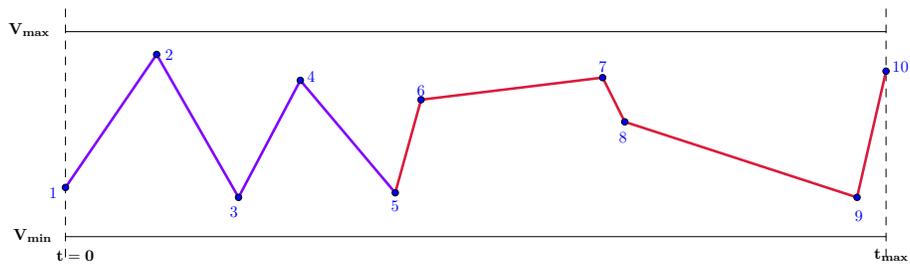
\begin{figure}[htp]
\centering
\resizebox{\textwidth}{!}{
\begin{tikzpicture}[line cap=round,line join=round,>=triangle 45,x=1.0cm,y=1.0cm]
\draw (-2.5,-16.)-- (15.5,-16.);
\draw (-2.5,-20.5)-- (15.5,-20.5);
\draw (-3.8597311975693347,-15.690812833218265) node[anchor=north west] {\normalsize $\mathbf{V_{max}}$};
\draw (-3.758627825516737,-20.220243901174693) node[anchor=north west] {\normalsize $\mathbf{V_{min}}$};
\draw [dash pattern=on 5pt off 5pt] (-2.5,-15.5)-- (-2.5,-21.);
\draw [dash pattern=on 5pt off 5pt] (15.5,-15.5)-- (15.5,-21.);
\draw (15.12748207390859,-20.648188179953404) node[anchor=north west] {\normalsize $\mathbf{t_{max}}$};
\draw (-2.8082561282223146,-20.68862952877444) node[anchor=north west] {\normalsize $\mathbf{t=0}$};
\draw [line width=1.6pt,color=dcrutc] (9.28370716926842,-17.008466786059845)-- (9.76900335512089,-17.979059157764794);
\draw [line width=1.6pt,color=dcrutc] (9.76900335512089,-17.979059157764794)-- (14.864613306571846,-19.637154459427453);
\draw [line width=1.6pt,color=dcrutc] (14.864613306571846,-19.637154459427453)-- (15.5,-16.86692206518624);
\draw [line width=1.6pt,color=xfqqff] (-2.5,-19.42)-- (-0.5,-16.5);
\draw [line width=1.6pt,color=xfqqff] (-0.5,-16.5)-- (1.2965407771131692,-19.637154459427414);
\draw [line width=1.6pt,color=xfqqff] (1.2965407771131692,-19.637154459427414)-- (2.651325962617984,-17.069128809291403);
\draw [line width=1.6pt,color=xfqqff] (2.651325962617984,-17.069128809291403)-- (4.734055426901509,-19.536051087374855);
\draw [line width=1.6pt,color=dcrutc] (4.734055426901509,-19.536051087374855)-- (5.300234310396053,-17.49376297191232);
\draw [line width=1.6pt,color=dcrutc] (9.28370716926842,-17.008466786059845)-- (5.300234310396053,-17.49376297191232);
\begin{scriptsize}
\draw [fill=qqqqff] (-2.5,-19.42) circle (2.0pt);
\draw[color=qqqqff] (-2.7678147794012755,-19.546161424580074) node {\normalsize $1$};
\draw [fill=qqqqff] (4.734055426901509,-19.536051087374855) circle (2.0pt);
\draw[color=qqqqff] (4.734055426901505,-19.83145761043255) node {\normalsize $5$};
\draw [fill=qqqqff] (9.28370716926842,-17.008466786059845) circle (2.0pt);
\draw[color=qqqqff] (9.28370716926842,-16.77592903033887) node {\normalsize $7$};
\draw [fill=qqqqff] (9.76900335512089,-17.979059157764794) circle (2.0pt);
\draw[color=qqqqff] (9.728562006299851,-18.314907029643567) node {\normalsize $8$};
\draw [fill=qqqqff] (14.864613306571846,-19.637154459427453) circle (2.0pt);
\draw[color=qqqqff] (14.905054655392874,-20.03256098248515) node {\normalsize $9$};
\draw [fill=qqqqff] (15.5,-16.86692206518624) circle (2.0pt);
\draw[color=qqqqff] (15.814985003866257,-16.77592903033887) node {\normalsize $10$};
\draw [fill=qqqqff] (-0.5,-16.5) circle (2.0pt);
\draw[color=qqqqff] (-0.22000980367580278,-16.51306026300211) node {\normalsize $2$};
\draw [fill=qqqqff] (1.2965407771131692,-19.637154459427414) circle (2.0pt);
\draw[color=qqqqff] (1.195437405060571,-19.95278165689567) node {\normalsize $3$};
\draw [fill=qqqqff] (2.651325962617984,-17.069128809291403) circle (2.0pt);
\draw[color=qqqqff] (2.893974055544219,-16.978135774444066) node {\normalsize $4$};
\draw [fill=qqqqff] (5.300234310396053,-17.49376297191232) circle (2.0pt);
\draw[color=qqqqff] (5.300234310396054,-17.281445890601862) node {\normalsize $6$};
\end{scriptsize}
\end{tikzpicture}
}
\caption{\label{fig:EX_Step3}
Next, we will apply the procedure to 
\flexis 1-2-3 (of type up-down) and 3-4-5 (of type up-down).
This will result in moving the midpoint 2 up until it reaches
$\vmax$.
}
\end{figure}

\begin{figure}[htp]
\centering
\resizebox{\textwidth}{!}{

\begin{tikzpicture}[line cap=round,line join=round,>=triangle 45,x=1.0cm,y=1.0cm]
\draw (-2.5,-16.)-- (15.5,-16.);
\draw (-2.5,-20.5)-- (15.5,-20.5);
\draw (-3.859731197569335,-15.690812833218275) node[anchor=north west] {\normalsize $\mathbf{V_{max}}$};
\draw (-3.758627825516737,-20.220243901174704) node[anchor=north west] {\normalsize $\mathbf{V_{min}}$};
\draw [dash pattern=on 5pt off 5pt] (-2.5,-15.5)-- (-2.5,-21.);
\draw [dash pattern=on 5pt off 5pt] (15.5,-15.5)-- (15.5,-21.);
\draw (15.1274820739086,-20.64818817995342) node[anchor=north west] {\normalsize $\mathbf{t_{max}}$};
\draw (-2.808256128222314,-20.688629528774456) node[anchor=north west] {\normalsize $\mathbf{t=0}$};
\draw [line width=1.6pt,color=dcrutc] (-2.5,-19.42)-- (-0.15934778044424203,-16.);
\draw [line width=1.6pt,color=xfqqff] (-0.15934778044424203,-16.)-- (1.8424989861972014,-19.394506366501215);
\draw [line width=1.6pt,color=xfqqff] (1.8424989861972014,-19.394506366501215)-- (2.9546360787757813,-17.251114878986115);
\draw [line width=1.6pt,color=dcrutc] (2.9546360787757813,-17.251114878986115)-- (4.734055426901509,-19.536051087374855);
\draw [line width=1.6pt,color=xfqqff] (9.223045146036867,-16.96802543723884)-- (9.728562006299859,-17.99927983217535);
\draw [line width=1.6pt,color=dcrutc] (4.734055426901509,-19.536051087374855)-- (5.2395722871645,-17.513983646322874);
\draw [line width=1.6pt,color=dcrutc] (5.2395722871645,-17.513983646322874)-- (9.223045146036867,-16.96802543723884);
\draw [line width=1.6pt,color=xfqqff] (9.728562006299859,-17.99927983217535)-- (14.864613306571846,-19.637154459427453);
\draw [line width=1.6pt,color=dcrutc] (14.864613306571846,-19.637154459427453)-- (15.5,-16.86692206518624);
\begin{scriptsize}
\draw [fill=qqqqff] (-2.5,-19.42) circle (2.0pt);
\draw[color=qqqqff] (-2.7678147794012746,-19.546161424580088) node {\normalsize $1$};
\draw [fill=qqqqff] (4.734055426901509,-19.536051087374855) circle (2.0pt);
\draw[color=qqqqff] (4.794717450133068,-19.852781656895684) node {\normalsize $5$};
\draw [fill=qqqqff] (9.223045146036867,-16.96802543723884) circle (2.0pt);
\draw[color=qqqqff] (9.223045146036867,-16.73548768151784) node {\normalsize $7$};
\draw [fill=qqqqff] (-0.15934778044424203,-16.) circle (2.0pt);
\draw[color=qqqqff] (-0.11890643162320294,-15.663791937760292) node {\normalsize $2$};
\draw [fill=qqqqff] (1.8424989861972014,-19.394506366501215) circle (2.0pt);
\draw[color=qqqqff] (1.8424989861972014,-19.689912889558925) node {\normalsize $3$};
\draw [fill=qqqqff] (2.9546360787757813,-17.251114878986115) circle (2.0pt);
\draw[color=qqqqff] (2.995077427596821,-16.93769442562304) node {\normalsize $4$};
\draw [fill=qqqqff] (9.728562006299859,-17.99927983217535) circle (2.0pt);
\draw[color=qqqqff] (9.690327401584015,-18.2351277040541) node {\normalsize $8$};
\draw [fill=qqqqff] (5.2395722871645,-17.513983646322874) circle (2.0pt);
\draw[color=qqqqff] (4.976703519827745,-17.463431960296553) node {\normalsize $6$};
\draw [fill=qqqqff] (14.864613306571846,-19.637154459427453) circle (2.0pt);
\draw[color=qqqqff] (14.743289260108728,-19.953885028948282) node {\normalsize $9$};
\draw [fill=qqqqff] (15.5,-16.86692206518624) circle (2.0pt);
\draw[color=qqqqff] (15.81498500386627,-16.53328093741264) node {\normalsize $10$};
\end{scriptsize}
\end{tikzpicture}
}
\caption{\label{fig:EX_Step4} 
Next, we will apply the procedure to 
\flexis 2-3-4 (of type down-up) and 7-8-9 (of type down-down).
This will result in moving the midpoint 3 down until it reaches
$\vmin$.
 }
\end{figure}

\begin{figure}[htp]
\centering
\resizebox{\textwidth}{!}{
\begin{tikzpicture}[line cap=round,line join=round,>=triangle 45,x=1.0cm,y=1.0cm]
\draw (-2.5,-16.)-- (15.5,-16.);
\draw (-2.5,-20.5)-- (15.5,-20.5);
\draw (-3.859731197569338,-15.690812833218241) node[anchor=north west] {\normalsize $\mathbf{V_{max}}$};
\draw (-3.7586278255167396,-20.22024390117466) node[anchor=north west] {\normalsize $\mathbf{V_{min}}$};
\draw [dash pattern=on 5pt off 5pt] (-2.5,-15.5)-- (-2.5,-21.);
\draw [dash pattern=on 5pt off 5pt] (15.5,-15.5)-- (15.5,-21.);
\draw (15.127482073908608,-20.64818817995337) node[anchor=north west] {\normalsize $\mathbf{t_{max}}$};
\draw (-2.8082561282223164,-20.68862952877441) node[anchor=north west] {\normalsize $\mathbf{t=0}$};
\draw [line width=1.6pt,color=dcrutc] (-2.5,-19.42)-- (-0.15934778044424203,-16.);
\draw [line width=1.6pt,color=xfqqff] (10.5,-17.)-- (11.164229889446759,-18.302589948333083);
\draw [line width=1.6pt,color=xfqqff] (11.164229889446759,-18.302589948333083)-- (14.864613306571846,-19.637154459427453);
\draw [line width=1.6pt,color=dcrutc] (14.864613306571846,-19.637154459427453)-- (15.5,-16.86692206518624);
\draw [line width=1.6pt,color=dcrutc] (-0.15934778044424203,-16.)-- (2.530001916154869,-20.506643459079804);
\draw [line width=1.6pt,color=xfqqff] (2.530001916154869,-20.506643459079804)-- (4.208317892227998,-17.372438925449234);
\draw [line width=1.6pt,color=xfqqff] (4.208317892227998,-17.372438925449234)-- (5.906854542711651,-19.515830412964267);
\draw [line width=1.6pt,color=dcrutc] (5.906854542711651,-19.515830412964267)-- (6.5,-17.5);
\draw [line width=1.6pt,color=dcrutc] (6.5,-17.5)-- (10.5,-17.);
\begin{scriptsize}
\draw [fill=qqqqff] (-2.5,-19.42) circle (2.0pt);
\draw[color=qqqqff] (-2.767814779401277,-19.546161424580045) node {\normalsize $1$};
\draw [fill=qqqqff] (10.5,-17.) circle (2.0pt);
\draw[color=qqqqff] (10.739595726825845,-16.978135774444038) node {\normalsize $7$};
\draw [fill=qqqqff] (-0.15934778044424203,-16.) circle (2.0pt);
\draw[color=qqqqff] (0.0226382892504339,-15.845778007454935) node {\normalsize $2$};
\draw [fill=qqqqff] (11.164229889446759,-18.302589948333083) circle (2.0pt);
\draw[color=qqqqff] (11.125995284730915,-18.638437820211855) node {\normalsize $8$};
\draw [fill=qqqqff] (14.864613306571846,-19.637154459427453) circle (2.0pt);
\draw[color=qqqqff] (14.743289260108735,-20.053885028948236) node {\normalsize $9$};
\draw [fill=qqqqff] (15.5,-16.86692206518624) circle (2.0pt);
\draw[color=qqqqff] (15.45101286447692,-16.79614970474936) node {\normalsize $10$};
\draw [fill=qqqqff] (2.530001916154869,-20.506643459079804) circle (2.0pt);
\draw[color=qqqqff] (2.5300019161548697,-20.981829307726947) node {\normalsize $3$};
\draw [fill=qqqqff] (4.208317892227998,-17.372438925449234) circle (2.0pt);
\draw[color=qqqqff] (4.329641938691118,-17.220783867370276) node {\normalsize $4$};
\draw [fill=qqqqff] (5.906854542711651,-19.515830412964267) circle (2.0pt);
\draw[color=qqqqff] (5.92707521712217,-19.811236936021998) node {\normalsize $5$};
\draw [fill=qqqqff] (6.5,-17.5) circle (2.0pt);
\draw[color=qqqqff] (6.473033426206201,-17.281445890601834) node {\normalsize $6$};
\end{scriptsize}
\end{tikzpicture}

}
\caption{\label{fig:EX_Step5} 
Next, we will apply the procedure to 
\flexis 3-4-5 (of type up-down) and 7-8-9 (of type down-down).
This will result in moving the midpoint 4 up until it reaches
$\vmax$.
}
\end{figure}

\begin{figure}[htp]
\centering
\resizebox{\textwidth}{!}{

\begin{tikzpicture}[line cap=round,line join=round,>=triangle 45,x=1.0cm,y=1.0cm]
\draw (-2.5,-16.)-- (15.5,-16.);
\draw (-2.5,-20.5)-- (15.5,-20.5);
\draw (-3.859731197569333,-15.69081283321823) node[anchor=north west] {\normalsize $\mathbf{V_{max}}$};
\draw (-3.7586278255167347,-20.220243901174647) node[anchor=north west] {\normalsize $\mathbf{V_{min}}$};
\draw [dash pattern=on 5pt off 5pt] (-2.5,-15.5)-- (-2.5,-21.);
\draw [dash pattern=on 5pt off 5pt] (15.5,-15.5)-- (15.5,-21.);
\draw (15.127482073908602,-20.648188179953358) node[anchor=north west] {\normalsize $\mathbf{t_{max}}$};
\draw (-2.808256128222312,-20.688629528774396) node[anchor=north west] {\normalsize $\mathbf{t=0}$};
\draw [line width=1.6pt,color=dcrutc] (-2.5,-19.42)-- (-0.15934778044424203,-16.);
\draw [line width=1.6pt,color=dcrutc] (14.864613306571846,-19.637154459427453)-- (15.5,-16.86692206518624);
\draw [line width=1.6pt,color=dcrutc] (-0.15934778044424203,-16.)-- (2.5097812417443492,-20.506643459079804);
\draw [line width=1.6pt,color=dcrutc] (2.5097812417443492,-20.506643459079804)-- (4.895820822185666,-16.);
\draw [line width=1.6pt,color=xfqqff] (4.895820822185666,-16.)-- (7.888480634942572,-19.515830412964334);
\draw [line width=1.6pt,color=xfqqff] (7.888480634942572,-19.515830412964334)-- (8.373776820795046,-17.5139836463228);
\draw [line width=1.6pt,color=dcrutc] (8.373776820795046,-17.5139836463228)-- (12.316808330846374,-16.98824611164929);
\draw [line width=1.6pt,color=xfqqff] (12.316808330846374,-16.98824611164929)-- (13.327842051372356,-18.88898950623814);
\draw [line width=1.6pt,color=xfqqff] (13.327842051372356,-18.88898950623814)-- (14.864613306571846,-19.637154459427453);
\begin{scriptsize}
\draw [fill=qqqqff] (-2.5,-19.42) circle (2.0pt);
\draw[color=qqqqff] (-2.7678147794012724,-19.54616142458003) node {\normalsize $1$};
\draw [fill=qqqqff] (-0.15934778044424203,-16.) circle (2.0pt);
\draw[color=qqqqff] (-0.11890643162320066,-15.663791937760246) node {\normalsize $2$};
\draw [fill=qqqqff] (14.864613306571846,-19.637154459427453) circle (2.0pt);
\draw[color=qqqqff] (14.74328926010873,-19.953885028948222) node {\normalsize $9$};
\draw [fill=qqqqff] (15.5,-16.86692206518624) circle (2.0pt);
\draw[color=qqqqff] (15.794764329455752,-16.81637037915987) node {\normalsize $10$};
\draw [fill=qqqqff] (2.5097812417443492,-20.506643459079804) circle (2.0pt);
\draw[color=qqqqff] (2.509781241744352,-20.961608633316413) node {\normalsize $3$};
\draw [fill=qqqqff] (4.895820822185666,-16.) circle (2.0pt);
\draw[color=qqqqff] (4.996924194238266,-15.845778007454925) node {\normalsize $4$};
\draw [fill=qqqqff] (7.888480634942572,-19.515830412964334) circle (2.0pt);
\draw[color=qqqqff] (7.908701309353094,-19.911236936021984) node {\normalsize $5$};
\draw [fill=qqqqff] (8.373776820795046,-17.5139836463228) circle (2.0pt);
\draw[color=qqqqff] (8.414218169616085,-17.34210791383338) node {\normalsize $6$};
\draw [fill=qqqqff] (12.316808330846374,-16.98824611164929) circle (2.0pt);
\draw[color=qqqqff] (12.57967709818313,-17.059018472086105) node {\normalsize $7$};
\draw [fill=qqqqff] (13.327842051372356,-18.88898950623814) circle (2.0pt);
\draw[color=qqqqff] (13.1660766560882,-19.283292657243276) node {\normalsize $8$};
\end{scriptsize}
\end{tikzpicture}
}
\caption{\label{fig:EX_Step6} 
Next, we will apply the procedure to 
\flexis 4-5-6 (of type down-up) and 7-8-9 (of type down-down).
This will result in moving the midpoint 5 down until it reaches
$\vmin$.
}
\end{figure}

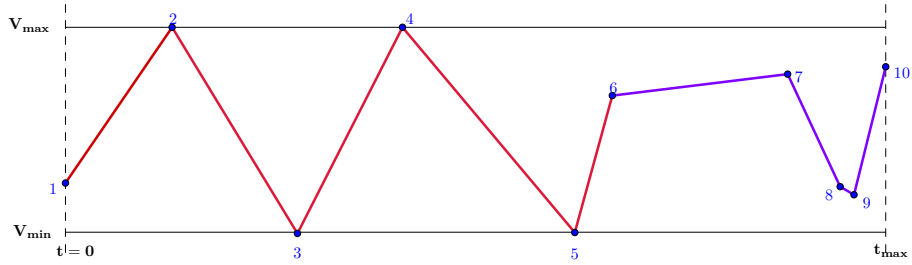
\begin{figure}[htp]
\centering
\resizebox{\textwidth}{!}{
\begin{tikzpicture}[line cap=round,line join=round,>=triangle 45,x=1.0cm,y=1.0cm]
\draw (-2.5,-16.)-- (15.5,-16.);
\draw (-2.5,-20.5)-- (15.5,-20.5);
\draw (-3.8597311975693356,-15.690812833218265) node[anchor=north west] {\normalsize $\mathbf{V_{max}}$};
\draw (-3.7586278255167374,-20.22024390117469) node[anchor=north west] {\normalsize $\mathbf{V_{min}}$};
\draw [dash pattern=on 5pt off 5pt] (-2.5,-15.5)-- (-2.5,-21.);
\draw [dash pattern=on 5pt off 5pt] (15.5,-15.5)-- (15.5,-21.);
\draw (15.1274820739086,-20.6481881799534) node[anchor=north west] {\normalsize $\mathbf{t_{max}}$};
\draw (-2.8082561282223146,-20.68862952877444) node[anchor=north west] {\normalsize $\mathbf{t=0}$};
\draw [line width=1.6pt,color=ccqqqq] (-2.5,-19.42)-- (-0.15934778044424203,-16.);
\draw [line width=1.6pt,color=dcrutc] (-0.15934778044424203,-16.)-- (2.590663939386428,-20.526864133490324);
\draw [line width=1.6pt,color=dcrutc] (2.590663939386428,-20.526864133490324)-- (4.895820822185666,-16.);
\draw [line width=1.6pt,color=dcrutc] (4.895820822185666,-16.)-- (8.677086936952838,-20.506643459079765);
\draw [line width=1.6pt,color=dcrutc] (8.677086936952838,-20.506643459079765)-- (9.5,-17.5);
\draw [line width=1.6pt,color=xfqqff] (9.5,-17.5)-- (13.348062725782873,-17.028687460470366);
\draw [line width=1.6pt,color=xfqqff] (14.803951283340286,-19.677595808248455)-- (14.5,-19.5);
\draw [line width=1.6pt,color=xfqqff] (14.5,-19.5)-- (13.348062725782873,-17.028687460470366);
\draw [line width=1.6pt,color=xfqqff] (14.803951283340286,-19.677595808248455)-- (15.5,-16.866922065186206);
\begin{scriptsize}
\draw [fill=qqqqff] (-2.5,-19.42) circle (2.0pt);
\draw[color=qqqqff] (-2.767814779401275,-19.546161424580074) node {\normalsize $1$};
\draw [fill=qqqqff] (-0.15934778044424203,-16.) circle (2.0pt);
\draw[color=qqqqff] (-0.13912710603372316,-15.80533665863392) node {\normalsize $2$};
\draw [fill=qqqqff] (2.590663939386428,-20.526864133490324) circle (2.0pt);
\draw[color=qqqqff] (2.5906639393864275,-20.96160863331646) node {\normalsize $3$};
\draw [fill=qqqqff] (4.895820822185666,-16.) circle (2.0pt);
\draw[color=qqqqff] (5.057586217469823,-15.80533665863392) node {\normalsize $4$};
\draw [fill=qqqqff] (8.677086936952838,-20.506643459079765) circle (2.0pt);
\draw[color=qqqqff] (8.677086936952838,-20.98182930772698) node {\normalsize $5$};
\draw [fill=qqqqff] (9.5,-17.5) circle (2.0pt);
\draw[color=qqqqff] (9.526355262194663,-17.3218872394229) node {\normalsize $6$};
\draw [fill=qqqqff] (13.348062725782873,-17.028687460470366) circle (2.0pt);
\draw[color=qqqqff] (13.590710818709109,-17.079239146496665) node {\normalsize $7$};
\draw [fill=qqqqff] (14.803951283340286,-19.677595808248455) circle (2.0pt);
\draw[color=qqqqff] (15.087040725087562,-19.84947154073787) node {\normalsize $9$};
\draw [fill=qqqqff] (14.5,-19.5) circle (2.0pt);
\draw[color=qqqqff] (14.257993074256257,-19.687706145453713) node {\normalsize $8$};
\draw [fill=qqqqff] (15.5,-16.866922065186206) circle (2.0pt);
\draw[color=qqqqff] (15.855426352687308,-16.998356448854587) node {\normalsize $10$};
\end{scriptsize}
\end{tikzpicture}

}
\caption{\label{fig:EX_Step7} 
Next, we will apply the procedure to 
\flexis 6-7-8 (of type up-down) and 8-9-10 (of type down-up).
This will result in straightening of the 8-8-10 \flexi and removal of the midpoint 9.
}
\end{figure}

\begin{figure}[htp]
\centering
\resizebox{\textwidth}{!}{

\begin{tikzpicture}[line cap=round,line join=round,>=triangle 45,x=1.0cm,y=1.0cm]
\draw (-2.5,-16.)-- (15.5,-16.);
\draw (-2.5,-20.5)-- (15.5,-20.5);
\draw (-3.8597311975693356,-15.690812833218265) node[anchor=north west] {\normalsize $\mathbf{V_{max}}$};
\draw (-3.7586278255167374,-20.22024390117469) node[anchor=north west] {\normalsize $\mathbf{V_{min}}$};
\draw [dash pattern=on 5pt off 5pt] (-2.5,-15.5)-- (-2.5,-21.);
\draw [dash pattern=on 5pt off 5pt] (15.5,-15.5)-- (15.5,-21.);
\draw (15.1274820739086,-20.6481881799534) node[anchor=north west] {\normalsize $\mathbf{t_{max}}$};
\draw (-2.8082561282223146,-20.68862952877444) node[anchor=north west] {\normalsize $\mathbf{t=0}$};
\draw [line width=1.6pt,color=ccqqqq] (-2.5,-19.42)-- (-0.15934778044424203,-16.);
\draw [line width=1.6pt,color=dcrutc] (-0.15934778044424203,-16.)-- (2.590663939386428,-20.526864133490324);
\draw [line width=1.6pt,color=dcrutc] (2.590663939386428,-20.526864133490324)-- (4.895820822185666,-16.);
\draw [line width=1.6pt,color=dcrutc] (4.895820822185666,-16.)-- (8.677086936952838,-20.506643459079765);
\draw [line width=1.6pt,color=xfqqff] (8.677086936952838,-20.506643459079765)-- (9.5,-17.5);
\draw [line width=1.6pt,color=xfqqff] (9.5,-17.5)-- (13.348062725782873,-17.028687460470366);
\draw [line width=1.6pt,color=xfqqff] (14.702847911287687,-19.96068524999573)-- (14.5,-19.5);
\draw [line width=1.6pt,color=xfqqff] (14.5,-19.5)-- (13.348062725782873,-17.028687460470366);
\draw [line width=1.6pt,color=xfqqff] (14.702847911287687,-19.96068524999573)-- (15.5,-16.866922065186206);
\begin{scriptsize}
\draw [fill=qqqqff] (-2.5,-19.42) circle (2.0pt);
\draw[color=qqqqff] (-2.767814779401275,-19.546161424580074) node {\normalsize $1$};
\draw [fill=qqqqff] (-0.15934778044424203,-16.) circle (2.0pt);
\draw[color=qqqqff] (-0.13912710603372316,-15.80533665863392) node {\normalsize $2$};
\draw [fill=qqqqff] (2.590663939386428,-20.526864133490324) circle (2.0pt);
\draw[color=qqqqff] (2.5906639393864275,-20.96160863331646) node {\normalsize $3$};
\draw [fill=qqqqff] (4.895820822185666,-16.) circle (2.0pt);
\draw[color=qqqqff] (5.057586217469823,-15.80533665863392) node {\normalsize $4$};
\draw [fill=qqqqff] (8.677086936952838,-20.506643459079765) circle (2.0pt);
\draw[color=qqqqff] (8.677086936952838,-20.98182930772698) node {\normalsize $5$};
\draw [fill=qqqqff] (9.5,-17.5) circle (2.0pt);
\draw[color=qqqqff] (9.526355262194663,-17.3218872394229) node {\normalsize $6$};
\draw [fill=qqqqff] (13.348062725782873,-17.028687460470366) circle (2.0pt);
\draw[color=qqqqff] (13.590710818709109,-17.079239146496665) node {\normalsize $7$};
\draw [fill=qqqqff] (14.702847911287687,-19.96068524999573) circle (2.0pt);
\draw[color=qqqqff] (14.985937353034963,-20.13256098248515) node {\normalsize $8$};
\draw [fill=qqqqff] (15.5,-16.866922065186206) circle (2.0pt);
\draw[color=qqqqff] (15.794764329455749,-16.998356448854587) node {\normalsize $9$};
\end{scriptsize}
\end{tikzpicture}
}
\caption{\label{fig:EX_Step8}
Next, we will apply the procedure to 
\flexis 5-6-7 (of type up-up) and 7-8-9 (of type down-up).
This will result in moving the midpoint 8 down until it reaches $\vmin$.
 }
\end{figure}

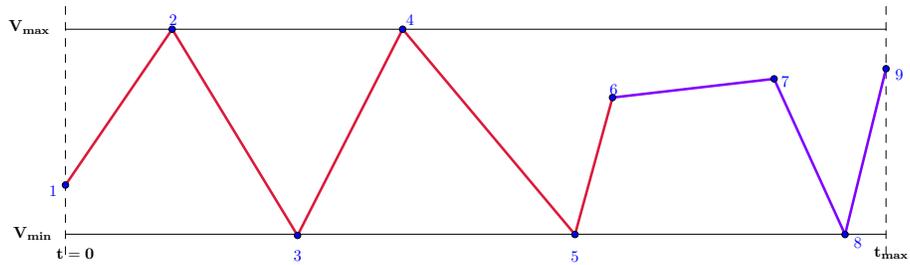
\begin{figure}[htp]
\centering
\resizebox{\textwidth}{!}{

\begin{tikzpicture}[line cap=round,line join=round,>=triangle 45,x=1.0cm,y=1.0cm]
\draw (-2.5,-16.)-- (15.5,-16.);
\draw (-2.5,-20.5)-- (15.5,-20.5);
\draw (-3.8597311975693356,-15.690812833218265) node[anchor=north west] {\normalsize $\mathbf{V_{max}}$};
\draw (-3.7586278255167374,-20.22024390117469) node[anchor=north west] {\normalsize $\mathbf{V_{min}}$};
\draw [dash pattern=on 5pt off 5pt] (-2.5,-15.5)-- (-2.5,-21.);
\draw [dash pattern=on 5pt off 5pt] (15.5,-15.5)-- (15.5,-21.);
\draw (15.1274820739086,-20.6481881799534) node[anchor=north west] {\normalsize $\mathbf{t_{max}}$};
\draw (-2.8082561282223146,-20.68862952877444) node[anchor=north west] {\normalsize $\mathbf{t=0}$};
\draw [line width=1.6pt,color=dcrutc] (-2.5,-19.42)-- (-0.15934778044424203,-16.);
\draw [line width=1.6pt,color=dcrutc] (-0.15934778044424203,-16.)-- (2.590663939386428,-20.526864133490324);
\draw [line width=1.6pt,color=dcrutc] (2.590663939386428,-20.526864133490324)-- (4.895820822185666,-16.);
\draw [line width=1.6pt,color=dcrutc] (4.895820822185666,-16.)-- (8.677086936952838,-20.506643459079765);
\draw [line width=1.6pt,color=dcrutc] (8.677086936952838,-20.506643459079765)-- (9.5,-17.5);
\draw [line width=1.6pt,color=xfqqff] (9.5,-17.5)-- (13.044752609625078,-17.089349483701923);
\draw [line width=1.6pt,color=xfqqff] (14.60174453923509,-20.506643459079765)-- (15.5,-16.866922065186206);
\draw [line width=1.6pt,color=xfqqff] (13.044752609625078,-17.089349483701923)-- (14.60174453923509,-20.506643459079765);
\begin{scriptsize}
\draw [fill=qqqqff] (-2.5,-19.42) circle (2.0pt);
\draw[color=qqqqff] (-2.767814779401275,-19.546161424580074) node {\normalsize $1$};
\draw [fill=qqqqff] (-0.15934778044424203,-16.) circle (2.0pt);
\draw[color=qqqqff] (-0.13912710603372316,-15.80533665863392) node {\normalsize $2$};
\draw [fill=qqqqff] (2.590663939386428,-20.526864133490324) circle (2.0pt);
\draw[color=qqqqff] (2.5906639393864275,-20.96160863331646) node {\normalsize $3$};
\draw [fill=qqqqff] (4.895820822185666,-16.) circle (2.0pt);
\draw[color=qqqqff] (5.057586217469823,-15.80533665863392) node {\normalsize $4$};
\draw [fill=qqqqff] (8.677086936952838,-20.506643459079765) circle (2.0pt);
\draw[color=qqqqff] (8.677086936952838,-20.98182930772698) node {\normalsize $5$};
\draw [fill=qqqqff] (9.5,-17.5) circle (2.0pt);
\draw[color=qqqqff] (9.526355262194663,-17.3218872394229) node {\normalsize $6$};
\draw [fill=qqqqff] (13.044752609625078,-17.089349483701923) circle (2.0pt);
\draw[color=qqqqff] (13.287400702551315,-17.139901169728223) node {\normalsize $7$};
\draw [fill=qqqqff] (14.60174453923509,-20.506643459079765) circle (2.0pt);
\draw[color=qqqqff] (14.884833980982364,-20.678519191569183) node {\normalsize $8$};
\draw [fill=qqqqff] (15.5,-16.866922065186206) circle (2.0pt);
\draw[color=qqqqff] (15.794764329455749,-16.998356448854587) node {\normalsize $9$};
\end{scriptsize}
\end{tikzpicture}
}
\caption{\label{fig:EX_Step9}
Since there no more non-overlapping \flexis in the schedule,
we try to move the one that remains in the leaps section 
to the end of the schedule.
In this case, as all of them are already located after the leaps section,
this step is skipped.
Next, we will apply the same procedure but with the first timed action if it is a flat one
or with the last timed action if it does not reach neither $\vmin$ nor $\vmax$ (and so shrink and stretch operations can be applied to it).
In this case we apply this operation to \flexi 6-7-8 (of type up-down) and the last timed action 8-9.
This results in moving point 9 up until it reaches $\vmax$.
}
\end{figure}

\begin{figure}[htp]
\centering
\resizebox{\textwidth}{!}{

\begin{tikzpicture}[line cap=round,line join=round,>=triangle 45,x=1.0cm,y=1.0cm]
\draw (-2.5,-16.)-- (15.5,-16.);
\draw (-2.5,-20.5)-- (15.5,-20.5);
\draw (-3.8597311975693356,-15.690812833218265) node[anchor=north west] {\normalsize $\mathbf{V_{max}}$};
\draw (-3.7586278255167374,-20.22024390117469) node[anchor=north west] {\normalsize $\mathbf{V_{min}}$};
\draw [dash pattern=on 5pt off 5pt] (-2.5,-15.5)-- (-2.5,-21.);
\draw [dash pattern=on 5pt off 5pt] (15.5,-15.5)-- (15.5,-21.);
\draw (15.1274820739086,-20.6481881799534) node[anchor=north west] {\normalsize $\mathbf{t_{max}}$};
\draw (-2.8082561282223146,-20.68862952877444) node[anchor=north west] {\normalsize $\mathbf{t=0}$};
\draw [line width=1.6pt,color=dcrutc] (-2.5,-19.42)-- (-0.15934778044424203,-16.);
\draw [line width=1.6pt,color=dcrutc] (-0.15934778044424203,-16.)-- (2.590663939386428,-20.526864133490324);
\draw [line width=1.6pt,color=dcrutc] (2.590663939386428,-20.526864133490324)-- (4.895820822185666,-16.);
\draw [line width=1.6pt,color=dcrutc] (4.895820822185666,-16.)-- (8.677086936952838,-20.506643459079765);
\draw [line width=1.6pt,color=xfqqff] (8.677086936952838,-20.506643459079765)-- (9.5,-17.5);
\draw [line width=1.6pt,color=xfqqff] (9.5,-17.5)-- (12.680780470235725,-17.109570158112444);
\draw [line width=1.6pt,color=dcrutc] (14.338875771898335,-20.506643459079765)-- (15.5,-16.);
\draw [line width=1.6pt,color=xfqqff] (12.680780470235725,-17.109570158112444)-- (14.338875771898335,-20.506643459079765);
\begin{scriptsize}
\draw [fill=qqqqff] (-2.5,-19.42) circle (2.0pt);
\draw[color=qqqqff] (-2.767814779401275,-19.546161424580074) node {\normalsize $1$};
\draw [fill=qqqqff] (-0.15934778044424203,-16.) circle (2.0pt);
\draw[color=qqqqff] (-0.13912710603372316,-15.80533665863392) node {\normalsize $2$};
\draw [fill=qqqqff] (2.590663939386428,-20.526864133490324) circle (2.0pt);
\draw[color=qqqqff] (2.5906639393864275,-20.96160863331646) node {\normalsize $3$};
\draw [fill=qqqqff] (4.895820822185666,-16.) circle (2.0pt);
\draw[color=qqqqff] (5.057586217469823,-15.80533665863392) node {\normalsize $4$};
\draw [fill=qqqqff] (8.677086936952838,-20.506643459079765) circle (2.0pt);
\draw[color=qqqqff] (8.677086936952838,-20.98182930772698) node {\normalsize $5$};
\draw [fill=qqqqff] (9.5,-17.5) circle (2.0pt);
\draw[color=qqqqff] (9.526355262194663,-17.3218872394229) node {\normalsize $6$};
\draw [fill=qqqqff] (12.680780470235725,-17.109570158112444) circle (2.0pt);
\draw[color=qqqqff] (12.94364923757248,-17.139901169728223) node {\normalsize $7$};
\draw [fill=qqqqff] (14.338875771898335,-20.506643459079765) circle (2.0pt);
\draw[color=qqqqff] (14.621965213645609,-20.678519191569183) node {\normalsize $8$};
\draw [fill=qqqqff] (15.5,-16.) circle (2.0pt);
\draw[color=qqqqff] (15.794764329455749,-16.128867449202236) node {\normalsize $9$};
\end{scriptsize}
\end{tikzpicture}
}
\caption{\label{fig:EX_Step10}
Our schedule is already partitioned into three distinct sections: head, leaps, and tail. 
However, the tail section does not follow any of the 10 patterns in Figure \ref{fig:tail}.
We cannot apply become the flexes 5-6-7 and 6-7-8 are overlapping.
At the same time points 6 and 7 still have some flexibility in them. 
We apply the wedge operation to the 5-6-7-8 segment to resolve this.
In this case, points 6 and 7 are moved up until one of them reaches $\vmax$
and the first one to do so is point 7.}
\end{figure}
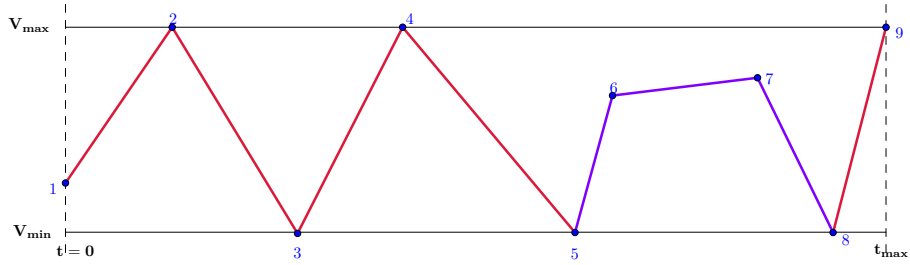

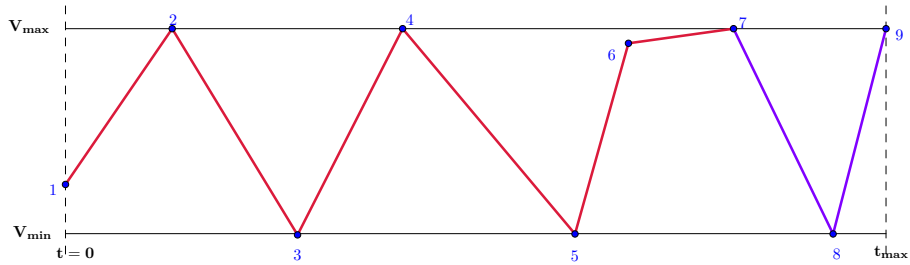
\begin{figure}[htp]
\centering
\resizebox{\textwidth}{!}{

\begin{tikzpicture}[line cap=round,line join=round,>=triangle 45,x=1.0cm,y=1.0cm]
\draw (-2.5,-16.)-- (15.5,-16.);
\draw (-2.5,-20.5)-- (15.5,-20.5);
\draw (-3.8597311975693356,-15.690812833218265) node[anchor=north west] {\normalsize $\mathbf{V_{max}}$};
\draw (-3.7586278255167374,-20.22024390117469) node[anchor=north west] {\normalsize $\mathbf{V_{min}}$};
\draw [dash pattern=on 5pt off 5pt] (-2.5,-15.5)-- (-2.5,-21.);
\draw [dash pattern=on 5pt off 5pt] (15.5,-15.5)-- (15.5,-21.);
\draw (15.1274820739086,-20.6481881799534) node[anchor=north west] {\normalsize $\mathbf{t_{max}}$};
\draw (-2.8082561282223146,-20.68862952877444) node[anchor=north west] {\normalsize $\mathbf{t=0}$};
\draw [line width=1.6pt,color=dcrutc] (-2.5,-19.42)-- (-0.15934778044424203,-16.);
\draw [line width=1.6pt,color=dcrutc] (-0.15934778044424203,-16.)-- (2.590663939386428,-20.526864133490324);
\draw [line width=1.6pt,color=dcrutc] (2.590663939386428,-20.526864133490324)-- (4.895820822185666,-16.);
\draw [line width=1.6pt,color=dcrutc] (4.895820822185666,-16.)-- (8.677086936952838,-20.506643459079765);
\draw [line width=1.6pt,color=xfqqff] (14.338875771898335,-20.506643459079765)-- (15.5,-16.);
\draw [line width=1.6pt,color=xfqqff] (14.338875771898335,-20.506643459079765)-- (12.155042935562214,-16.);
\draw [line width=1.6pt,color=dcrutc] (8.677086936952838,-20.506643459079765)-- (9.849886052762976,-16.320963856102175);
\draw [line width=1.6pt,color=dcrutc] (9.849886052762976,-16.320963856102175)-- (12.155042935562214,-16.);
\begin{scriptsize}
\draw [fill=qqqqff] (-2.5,-19.42) circle (2.0pt);
\draw[color=qqqqff] (-2.767814779401275,-19.546161424580074) node {\normalsize $1$};
\draw [fill=qqqqff] (-0.15934778044424203,-16.) circle (2.0pt);
\draw[color=qqqqff] (-0.13912710603372316,-15.80533665863392) node {\normalsize $2$};
\draw [fill=qqqqff] (2.590663939386428,-20.526864133490324) circle (2.0pt);
\draw[color=qqqqff] (2.5906639393864275,-20.96160863331646) node {\normalsize $3$};
\draw [fill=qqqqff] (4.895820822185666,-16.) circle (2.0pt);
\draw[color=qqqqff] (5.057586217469823,-15.80533665863392) node {\normalsize $4$};
\draw [fill=qqqqff] (8.677086936952838,-20.506643459079765) circle (2.0pt);
\draw[color=qqqqff] (8.677086936952838,-20.98182930772698) node {\normalsize $5$};
\draw [fill=qqqqff] (14.338875771898335,-20.506643459079765) circle (2.0pt);
\draw[color=qqqqff] (14.419758469540414,-20.94138795890594) node {\normalsize $8$};
\draw [fill=qqqqff] (15.5,-16.) circle (2.0pt);
\draw[color=qqqqff] (15.794764329455749,-16.128867449202236) node {\normalsize $9$};
\draw [fill=qqqqff] (12.155042935562214,-16.) circle (2.0pt);
\draw[color=qqqqff] (12.357249679667412,-15.865998681865477) node {\normalsize $7$};
\draw [fill=qqqqff] (9.849886052762976,-16.320963856102175) circle (2.0pt);
\draw[color=qqqqff] (9.485913913373622,-16.57372228623367) node {\normalsize $6$};
\end{scriptsize}
\end{tikzpicture}
}
\caption{\label{fig:EX_Step11} 
There is only one point between $\vmin$ and $\vmax$ left (point 6), but
the tail still does not follow any of the 10 patterns.
We use the shift-down operation to segment 7-8 and move it after 5.
}
\end{figure}

\begin{figure}[htp]
\centering
\resizebox{\textwidth}{!}{

\begin{tikzpicture}[line cap=round,line join=round,>=triangle 45,x=1.0cm,y=1.0cm]
\draw (-2.5,-16.)-- (15.5,-16.);
\draw (-2.5,-20.5)-- (15.5,-20.5);
\draw (-3.8597311975693356,-15.690812833218265) node[anchor=north west] {\normalsize $\mathbf{V_{max}}$};
\draw (-3.7586278255167374,-20.22024390117469) node[anchor=north west] {\normalsize $\mathbf{V_{min}}$};
\draw [dash pattern=on 5pt off 5pt] (-2.5,-15.5)-- (-2.5,-21.);
\draw [dash pattern=on 5pt off 5pt] (15.5,-15.5)-- (15.5,-21.);
\draw (15.1274820739086,-20.6481881799534) node[anchor=north west] {\normalsize $\mathbf{t_{max}}$};
\draw (-2.8082561282223146,-20.68862952877444) node[anchor=north west] {\normalsize $\mathbf{t=0}$};
\draw [line width=1.6pt,color=dcrutc] (-2.5,-19.42)-- (-0.15934778044424203,-16.);
\draw [line width=1.6pt,color=dcrutc] (-0.15934778044424203,-16.)-- (2.590663939386428,-20.526864133490324);
\draw [line width=1.6pt,color=dcrutc] (2.590663939386428,-20.526864133490324)-- (4.895820822185666,-16.);
\draw [dash pattern=on 5pt off 5pt] (2.590663939386428,-15.5)-- (2.590663939386428,-20.526864133490324);
\draw [line width=1.6pt,color=dcrutc] (4.895820822185666,-16.)-- (8.677086936952838,-20.506643459079765);
\draw [line width=1.6pt,color=dcrutc] (8.677086936952838,-20.506643459079765)-- (10.,-16.);
\draw [line width=1.6pt,color=dcrutc] (10.,-16.)-- (12.155042935562214,-20.5);
\draw [line width=1.6pt,color=dcrutc] (12.155042935562214,-20.5)-- (13.246959353730274,-16.34118453051269);
\draw [dash pattern=on 5pt off 5pt] (12.155042935562214,-15.5)-- (12.155042935562214,-20.5);
\draw [line width=1.6pt,color=dcrutc] (13.246959353730274,-16.34118453051269)-- (15.5,-16.);
\begin{scriptsize}
\draw [fill=qqqqff] (-2.5,-19.42) circle (2.0pt);
\draw[color=qqqqff] (-2.767814779401275,-19.546161424580074) node {\normalsize $1$};
\draw [fill=qqqqff] (-0.15934778044424203,-16.) circle (2.0pt);
\draw[color=qqqqff] (-0.13912710603372316,-15.80533665863392) node {\normalsize $2$};
\draw [fill=qqqqff] (2.590663939386428,-20.526864133490324) circle (2.0pt);
\draw[color=qqqqff] (2.5906639393864275,-20.96160863331646) node {\normalsize $3$};
\draw [fill=qqqqff] (4.895820822185666,-16.) circle (2.0pt);
\draw[color=qqqqff] (5.057586217469823,-15.80533665863392) node {\normalsize $4$};
\draw [fill=qqqqff] (8.677086936952838,-20.506643459079765) circle (2.0pt);
\draw[color=qqqqff] (8.677086936952838,-20.98182930772698) node {\normalsize $5$};
\draw [fill=qqqqff] (10.,-16.) circle (2.0pt);
\draw[color=qqqqff] (10.17341684333129,-15.906440030686518) node {\normalsize $6$};
\draw [fill=qqqqff] (12.155042935562214,-20.5) circle (2.0pt);
\draw[color=qqqqff] (12.175263609972735,-20.96160863331646) node {\normalsize $7$};
\draw [fill=qqqqff] (13.246959353730274,-16.34118453051269) circle (2.0pt);
\draw[color=qqqqff] (12.963869911983,-16.51306026300211) node {\normalsize $8$};
\draw [fill=qqqqff] (15.5,-16.) circle (2.0pt);
\draw[color=qqqqff] (15.794764329455749,-16.007543402739117) node {\normalsize $9$};
\end{scriptsize}
\end{tikzpicture}
}
\caption{\label{fig:EX_Step12} 
Finally, both the head section (1-2-3) and tail section (7-8-9) follows one of the standard patterns.
The head section follows the \PUD pattern (Figure \ref{fig:head}(e)) and the tail section follows \PUU pattern (Figure \ref{fig:tail}(b)).
The leaps section (3-4-5-6-7) consists of two (complete) leaps.}
\end{figure}
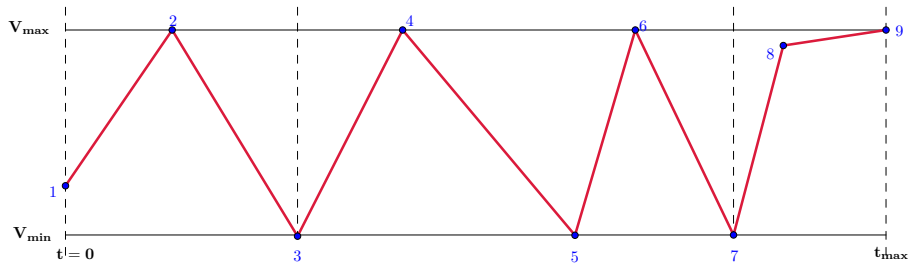 

\end{document}